\renewcommand*{\backref}[1]{\ifx#1\relax \else Page #1 \fi}
\renewcommand*{\backrefalt}[4]{%
    \ifcase #1 \footnotesize{(Not cited.)}%
    \or        \footnotesize{(Cited on page~#2.)}%
    \else      \footnotesize{(Cited on pages~#2.)}%
    \fi}
\def\hatt{\widehat}
\newtheoremstyle{named}{}{}{\itshape}{}{\bfseries}{.}{.5em}{\thmnote{#3's }#1}
\theoremstyle{named}
\theoremstyle{plain}
\newtheorem{theorem}{Theorem}
\newtheorem{lemma}{Lemma}
\newtheorem{definition}{Definition}
\newlength{\widebarargwidth}
\newlength{\widebarargheight}
\newlength{\widebarargdepth}
\long\def\@makecaption#1#2{
        \vskip 0.8ex
        \setbox\@tempboxa\hbox{\small {\bf #1:} #2}
        \parindent 1.5em  %% How can we use the global value of this???
        \dimen0=\hsize
        \advance\dimen0 by -3em
        \ifdim \wd\@tempboxa >\dimen0
                \hbox to \hsize{
                        \parindent 0em
                        \hfil
                        \parbox{\dimen0}{\def\baselinestretch{0.96}\small
                                {\bf #1.} #2
                                %%\unhbox\@tempboxa
                                }
                        \hfil}
        \else \hbox to \hsize{\hfil \box\@tempboxa \hfil}
        \fi
        }
\long\def\comment#1{}
\definecolor{battleshipgrey}{rgb}{0.52, 0.52, 0.51}
\definecolor{darkgray}{rgb}{0.66, 0.66, 0.66}
\definecolor{darkgreen}{rgb}{0.0, 0.2, 0.13}
\definecolor{darkspringgreen}{rgb}{0.09, 0.45, 0.27}
\definecolor{dukeblue}{rgb}{0.0, 0.0, 0.61}
\definecolor{olivedrab7}{rgb}{0.24, 0.2, 0.12}
\definecolor{darkblue}{rgb}{0.0, 0.0, 0.55}
\definecolor{darkscarlet}{rgb}{0.34, 0.01, 0.1}
\definecolor{candyapplered}{rgb}{1.0, 0.03, 0.0}
\definecolor{ao(english)}{rgb}{0.0, 0.5, 0.0}
\definecolor{applegreen}{rgb}{0.55, 0.71, 0.0}
\newcommand{\expectation}{\mathbb{E}}
\newcommand{\expect}{\mathbb{E}}
\begin{document}

\title{Martingale Posterior Distributions for Log-concave Density Functions}
\author{Fuheng Cui \& Stephen G. Walker \\ \\
Department of Statistics and Data Sciences \\
The University of Texas at Austin\\
email: fuheng.cui@austin.utexas.edu, s.g.walker@math.utexas.edu
}

\date{}
\maketitle
\begin{abstract}
The family of log-concave density functions contains various kinds of common probability distributions. Due to the shape restriction, it is possible to find the nonparametric estimate of the density, for example, the nonparametric maximum likelihood estimate (NPMLE). However, the associated uncertainty quantification of the NPMLE is less well developed. The current techniques for uncertainty quantification are Bayesian, using a Dirichlet process prior combined with the use of Markov chain Monte Carlo (MCMC) to sample from the posterior. In this paper, we start with the NPMLE and use a version of the martingale posterior distribution to establish uncertainty about the NPMLE. The algorithm  can be implemented in  parallel and hence is fast. We prove the convergence of the algorithm by constructing suitable submartingales. We also illustrate results with different models and settings and some real data, and compare our method with that within the literature.
\end{abstract}

\textbf{\textsl{Keywords:}} Bayesian nonparametrics; Asymptotic exchangeability; Uncertainty quantification; Nonparametric maximum likelihood estimation.

%\textsl{Keywords:} Conditional distribution function;  Kernel smoothing; Mixing distribution; Nonparametric estimator.
%\let\thefootnote\relax\footnotetext{$\star$ Note1. }
%\let\thefootnote\relax\footnotetext{$\ddag$ Note2.}
%\section{Introduction}

%\vspace{0.5 em}
%\noindent
%\textbf{Organization.}  

%\vspace{0.5 em}
%\noindent
%\textbf{Notation.} ...

\section{Introduction}
\label{sec:Introduction}
In this paper we present an approach to the Bayesian nonparametric analysis of a log-concave density function. Shape constrained nonparametric inference varies in the degree of complexity depending on the constraint. For example, for a monotone decreasing density on $(0,\infty)$ one has the representation
$$f(x)=\int_x^\infty s^{-1}\,dP(s)$$
for some distribution function $P$. A prior on $P$ would be easily available, such as a Dirichlet process. On the other hand, it is less straightforward when $f$ is log-concave. Mariucci et al.~\cite{Mariucci_2020} use the concave function
$$w(x)=\gamma_1\,\int_0^\infty s^{-1}\min(s,x)\,dP(s)-\gamma_2 x,$$
for $\gamma_1>0$ and some $\gamma_2$, with $f(x)$ taken to be proportional to $\exp w(x)$. If, for example, the prior for $P$ is taken to be a Dirichlet process then necessarily analysis will involve some complicated MCMC algorithms. Indeed, there is limited Bayesian nonparametric analysis of log-concave density functions within the literature, with most articles on the topic concentrating on theoretical aspects, such as asymptotics.

On the other hand, the nonparametric maximum likelihood estimator, which we write as NPMLE, is easily available. And there are a number of packages which compute the function; for example \texttt{logcondens}, by Rufibach and Dumbgen (2006).  

%\section{NPMLE of log-concave density}
%\label{sec:NPMLE_of_Log_concave_Distributions}

A comprehensive description of the NPMLE is provided in \cite{Dumbgen_2009}. For concreteness, we present the key results here. If $X_1<\cdots<X_N$ are the ordered data from a log-concave density $f(x)=\exp\,\phi(x)$, then the NPMLE is recovered as $\hatt\phi$ which exists uniquely and is piece-wise linear on each interval $[X_{j-1},X_j]$, $2\leq j\leq N$, with $\hatt\phi\equiv -\infty$ outside of $[X_1,X_N]$. One of the key properties for $\hatt\phi$ is that 
\begin{equation}\label{key1}
\int \Delta(x)\,dF_n(x)\leq \int\Delta(x)\,\exp\hatt\phi(x)\,dx
\end{equation}
for all $\Delta:\mathbb{R}\to\mathbb{R}$ for which
$\hatt\phi+\lambda \Delta$ is concave for some $\lambda>0$, where $F_n$ denotes the empirical distribution function. This result can be found in \cite{Dumbgen_2009}.

Our aim is to obtain a martingale posterior distribution for expressing uncertainty about an unknown log-concave density. We regard the foundations for the posterior to be that uncertainty arises from the missing data (i.e. what has not been seen) denoted by $X_{N+1:\infty}$. To deal with this, we construct a probability model for $X_{N+1:\infty}$ given $X_{1:N}$, as would be required. Just as the Bayesian bootstrap uses the empirical cumulative distribution function to generate random distributions from the posterior, via the P\'olya-urn scheme, so we use the NPMLE in a similar way, i.e. we sample $X_{N+1}\sim\widehat{F}_N$ and update $\widehat{F}_N$ to $\widehat{F}_{N+1}$ using $X_{1:N+1}$, and so on, for $N+2$ to infinity.

The layout of the paper is as follows. The martingale posterior distribution is described in Section~\ref{sec:The martingale posterior distribution}, in which the algorithm is presented for sampling from the posterior. We also present  theoretical properties, including convergence, showing that the posterior exists, and demonstrating asymptotic exchangeability for the sequence $X_{N+1:\infty}$. In Section~\ref{sec:Illustration}, we illustrate results with various models and settings and some real data, and compare our method with that in~\cite{Mariucci_2020}. Section~\ref{sec:Summary_and_Discussion} provides a brief discussion and summary.

\section{The martingale posterior distribution}
\label{sec:The martingale posterior distribution}
To find the NPMLE of the log-concave density function, consider the following functional optimization problem with the Lagrange multiplier $\lambda$, which is $-1$,
\begin{align}
  \label{equ:npmle_loss}
    L_N(\phi) = \int \phi(x) \,dF_N(x)+\lambda\,\left(\int \exp \phi(x)dx-1\right)
\end{align}
where $\phi(x)$ is a concave function. Define $\widehat{\phi}_N(x) = \underset{\phi \text { concave }}{\arg\max } L_{N}(\phi)$ and $\widehat{f}_N = \exp \hat{\phi}_N$ as the NPMLE. The existence and uniqueness of the NPMLE is studied, for example, in Theorem 2.1 in \cite{Dumbgen_2009}:
\begin{theorem}
\label{thm:existence_uniqueness}
    The NPMLE $\widehat{\phi}_N$ exists and is unique. It is linear on all intervals $[X_j,X_{j+1}]$, $1\leq j <n$. Moreover, $\widehat{\phi}_N = -\infty$ on $\mathbb{R}\backslash[X_1,X_N]$.
\end{theorem}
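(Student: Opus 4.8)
The plan is to reduce the infinite-dimensional concave program \eqref{equ:npmle_loss} to a finite-dimensional one, use concavity for existence and strict concavity for uniqueness; the structural claims (piecewise linearity on $[X_j,X_{j+1}]$ and $\widehat\phi_N\equiv-\infty$ outside $[X_1,X_N]$) will fall out of the reduction itself. With $\lambda=-1$ write $L_N(\phi)=\frac1N\sum_{i=1}^N\phi(X_i)-\int\exp\phi(x)\,dx+1$, so that $L_N$ depends on $\phi$ only through the values $\phi(X_1),\dots,\phi(X_N)$ (first term) and through $\int\exp\phi$. For any concave $\phi$ with prescribed values $\phi_i:=\phi(X_i)$, concavity forces $\phi$ to lie on or above the piecewise-linear interpolant $\bar\phi$ of the points $(X_i,\phi_i)$; since $\exp$ is increasing, $\int\exp\bar\phi\le\int\exp\phi$, with strict inequality unless $\phi=\bar\phi$. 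Likewise, setting $\phi\equiv-\infty$ on $\mathbb{R}\setminus[X_1,X_N]$ only decreases $\int\exp\phi$ while leaving the first term unchanged. Hence any maximizer must be the piecewise-linear interpolant that is $-\infty$ outside $[X_1,X_N]$; this already yields the two structural claims, and it remains to prove existence and uniqueness of the optimal vector $(\phi_1,\dots,\phi_N)$.

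Next I would rewrite the reduced objective explicitly. With $\phi$ linear on $[X_i,X_{i+1}]$ one has $\int_{X_i}^{X_{i+1}}\exp\phi=(X_{i+1}-X_i)\,J(\phi_i,\phi_{i+1})$, where $J(a,b)=(e^b-e^a)/(b-a)$ for $a\ne b$ and $J(a,a)=e^a$. Thus maximizing $L_N$ becomes maximizing $\frac1N\sum_i\phi_i-\sum_{i<N}(X_{i+1}-X_i)J(\phi_i,\phi_{i+1})+1$ over the convex polyhedral cone $C=\{(\phi_i):\text{the slopes }(\phi_{i+1}-\phi_i)/(X_{i+1}-X_i)\text{ are nonincreasing}\}$ of concave sequences. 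The first term is linear, and the second is (minus) the restriction of the convex functional $\phi\mapsto\int\exp\phi$; hence the objective is concave on the convex set $C$, setting up a concave maximization to which standard tools apply.

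For existence, the key step is coercivity: I would show $L_N\to-\infty$ whenever $\|(\phi_i)\|\to\infty$ within $C$. Along the translation direction $\phi\mapsto\phi+c$, the scalar map $c\mapsto\frac1N\sum\phi_i+c-e^c\int\exp\phi+1$ tends to $-\infty$ as $c\to\pm\infty$; in the remaining directions, $\int\exp\phi$, bounded below by a quantity growing with $\max_i\phi_i$ on an interval of positive length, dominates the linear gain, while $\min_i\phi_i\to-\infty$ drives the linear term down. Together these confine maximizing sequences to a bounded set, and upper semicontinuity then yields an attained maximum by Weierstrass. I expect this coercivity bookkeeping—ruling out escape to infinity along the boundary of the cone $C$ while simultaneously controlling the piecewise-linear integral and the concavity constraints—to be the main obstacle.

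Finally, uniqueness follows from strict concavity. The functional $\phi\mapsto\int\exp\phi$ is strictly convex modulo additive constants: along a segment $\phi_t=(1-t)\phi^{(0)}+t\phi^{(1)}$ it is affine only if $\phi^{(0)}-\phi^{(1)}$ is constant. If two maximizers existed, their midpoint would be feasible and would strictly increase $L_N$ unless the two differ by an additive constant; but the stationarity condition obtained by varying $c$ above forces $\int\exp\widehat\phi_N=1$ at any optimum, which pins the constant to zero. Hence the maximizer is unique. I would also note that the first-order optimality condition for this program is precisely the variational inequality \eqref{key1}, which provides an independent route to the same characterization.
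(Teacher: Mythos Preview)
The paper does not give its own proof of this statement; it simply quotes it as Theorem~2.1 of D\"umbgen and Rufibach (2009) \cite{Dumbgen_2009}. Your proposal is essentially the argument found there: reduce to the finite-dimensional piecewise-linear class (which yields the two structural claims), establish coercivity of the concave objective on the polyhedral cone of concave sequences to get existence, and invoke strict concavity for uniqueness. So you are on the standard track, and the overall strategy is sound.

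Two small comments. First, your uniqueness step is slightly overcomplicated: pointwise strict convexity of $\exp$ already gives that $\phi\mapsto\int\exp\phi$ is strictly convex on the finite-dimensional space of piecewise-linear $\phi$ (not merely ``modulo additive constants''), since two distinct vectors $(\phi_i)$ yield two functions that differ on a set of positive measure. Hence the maximizer is unique without routing through the normalization $\int\exp\widehat\phi_N=1$. Second, your coercivity sketch is correct in spirit but would benefit from one extra line: if $\max_i\phi_i$ stays bounded while $\|(\phi_i)\|\to\infty$, concavity forces either $\phi_1\to-\infty$ or $\phi_N\to-\infty$ (the endpoint values bound the interior ones from above on a concave sequence only in the opposite direction, so the escape must be through an endpoint), and since $X_1,X_N$ each carry mass $1/N$ in $F_N$ the linear term is driven to $-\infty$. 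This is the place where the ordered-data assumption $X_1<\cdots<X_N$ enters nontrivially.
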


\noindent
We now present the algorithm for defining and sampling from the martingale posterior distribution and then we present the corresponding theory, in particular, that the posterior is well defined.

\subsection{Algorithm}
\label{subsec:Algorithm}
The Bayesian bootstrap (BB) was introduced by Rubin in 1981, see \cite{Rubin_1981}. It only uses the empirical distribution function, so the BB can be viewed as data-driven. The idea behind the BB is to provide uncertainty quantification which comes from the unseen data; see~\cite{Fong_2021}. The BB for the empirical distribution function can be expressed via a P\'olya-urn process: i.e. the update from $N$ to $N+1$ with sample $X_{N+1}$ is given by
$$
F_{N+1}(x)=\frac{N\,F_{N}(x)+1(X_{N+1}\leq x)}{N+1},
$$
where $F_N$ is the empirical distribution function given data $X_{1:N}$, $F_N(x)=N^{-1}\sum_{i=1}^N 1(X_i\leq x)$, and $X_{N+1}\sim F_N$. We use the same idea as that of the BB, which is using the data sampled from the current  distribution to update the distribution for the next step. The BB process, by continuing the above procedure for all $n>N+1$, creates a martingale for the $(F_n)_{n>N}$.

In our algorithm, we will have a submartingale instead of a martingale. Under some conditions, the convergent (sub)martingale posterior distribution can also be obtained from the NPMLE with NPMLE update, analogous to the BB from the empirical distribution with P\'olya-urn update.

The posterior distribution is based on the ideas appearing in \cite{Fong_2021}. The posterior is constructed by treating the uncertainty as arising from the missing observations $X_{N+1:\infty}$. 
%where we now represent the data in the unusual unordered way. 
A representation of the uncertainty is provided by the conditional density
$$p(X_{N+1:\infty}\mid X_{1:N})$$
and this in turn is based on a sequential construct of the form $p(X_n\mid X_{1:n-1})$ for all $n>N$.
The distribution for the next observation $X_n$ is taken from  the current NPMLE, i.e. $\hatt{F}_{n-1}$. As with \cite{Fong_2021}, the NPMLE is then updated using the new observation $X_n$. This proceeds to a large number $M$ ensuring convergence of $\hatt{F}_M$. Each $\hatt{F}_M$ obtained, by running such a strategy repeatedly, is a sample from the nonparametric posterior distribution. 

It is easy to choose $M$ so that $\widehat{F}_M$ has suitable convergence. Theoretically, we need to calculate $\sup \|\widehat{\phi}_n-\widehat{\phi}_{n+1}\|$ to select the stopping criterion $M$. The $\widehat{\phi}_n$ are all piece-wise linear functions, hence so is $\widehat{\phi}_{n+1}-\widehat{\phi}_{n}$. Therefore, $\sup \|\widehat{\phi}_{n+1}-\widehat{\phi}_{n}\|$ occurs at one of the points from the union of the non-differentiable points of both functions, i.e. the knot sets. Moreover, by Corollary 2.5 in~\cite{Dumbgen_2009}, $F_n-1/n\leq\widehat{F}_n\leq F_n$ on the set of knot points of $\widehat{\phi}_n$. Hence, according to the convergence of the distribution function on the knot set, a subset of the sample set $X_{1:n}$, the stopping criterion $M$ can be assigned based on the convergence on these sets of points.
%the initial data size and the tolerance error. In addition, for any $\phi$ such that $\int \exp(\phi)dx=1$, $L_n(\phi)=\sum_{I=1}^n \phi(X_i)$. Then $|L_{n+1}({\phi})-L_n(\phi)|$ is easier to compute than $\sup \|\widehat{\phi}_{n+1}-\widehat{\phi}_{n}\|$ sometimes. So the stop criterion $M$ can be set based on the convergence of $L_n(\phi)$ in practice. 
%Furthermore, if using a relaxed convergence criterion, which is the convergence of the means of the bootstrapped NPMLE distributions, the stop criterion can be determined by the convergence of the means of the empirical distributions, since both means are identical by Corollary 2.3 in~\cite{Dumbgen_2009}.

\begin{algorithm}[!ht]
\DontPrintSemicolon
  
  \KwInput{Observed data $x_{1:N}=(x_1,\ldots, x_N)$.}
  Compute the NPMLE of the data; i.e. $p_0(x)=\exp\{\phi(x;x_{1:N})\}$\\
  \For{$l= 1,\ldots,B$}{
  Set $p_{l,0}(x)=p_0(x)$\\
  \For{$n=N+1,\ldots,M$}{
  $X_{l,n} \sim p_{l,n-1}(x)$\\
  $p_{l,n}(x)=\exp\{\phi(x; X_{1:n})\}$
  }
  Set $p_l=p_{l,M}$
  }
  \KwOutput{$\{p_l\}_{l=1}^B$}
\caption{Functional Bootstrapping Update of Log-concave Distributions}
\label{algo:1}
\end{algorithm}

This process is detailed in Algorithm~\ref{algo:1} and next we demonstrate convergence of a sequence $(\hatt{F}_n)_{n>N}$ to the random distribution function $\hatt{F}_\infty$.

\subsection{Theory}
\label{subsec:Theory}

%\subsection{Convergence}
%\label{sec:Convergence}

The aim in this subsection is to show that the sequence of maximizers of (\ref{equ:npmle_loss}), i.e. $(\hatt\phi_m)$ for $m=N+1, N+2, \ldots$, converges almost surely to a random concave function $\phi_\infty$. Given a data set $\{X_1,\ldots,X_N\}$ with $X_1\leq X_2\leq\cdots\leq X_N$, let $\mathcal{C}$ be the set of all continuous concave functions on $[X_1,X_N]$. For any $n>N$, {define
$
L_n(\phi)=\int \phi\, dF_n - \int \exp(\phi)\, dx
$}
with the $\{X_{N+1}, X_{N+2}, \ldots\}$ arising as in Algorithm~\ref{algo:1}. 

\begin{lemma}[Almost sure convergence of the loss functions]
\label{lemma:Almost sure convergence of the loss functions}
For any $\phi\in\mathcal{C}$, $L_n(\phi)$ converges to some $L_{\infty}(\phi)$ almost surely.
\end{lemma}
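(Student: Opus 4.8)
The plan is to exploit the fact that the second term of
$L_n(\phi)=\int \phi\, dF_n-\int\exp(\phi)\,dx$ does not depend on $n$, so that it suffices to prove almost sure convergence of $\int \phi\, dF_n=\frac1n\sum_{i=1}^n\phi(X_i)$. First I would record a structural fact that keeps everything bounded. Since $X_{N+1}\sim\widehat F_N$ and, by Theorem~\ref{thm:existence_uniqueness}, $\widehat F_N$ is supported on $[X_1,X_N]$, we have $X_{N+1}\in[X_1,X_N]$; hence the extreme order statistics are unchanged and $\widehat\phi_{N+1}$ is again supported on $[X_1,X_N]$. Inducting, every sampled $X_n$ lies in the \emph{fixed} compact interval $[X_1,X_N]$, on which the fixed function $\phi\in\mathcal C$ is continuous and therefore bounded, say by $\|\phi\|_\infty<\infty$. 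Consequently $\bigl|\int\phi\,dF_n\bigr|\le\|\phi\|_\infty$ for all $n$, so the sequence is uniformly bounded.

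Next I would exhibit $\int\phi\,dF_n$ as a submartingale for the filtration $\mathcal F_n=\sigma(X_1,\dots,X_n)$. The P\'olya-type update gives $\int\phi\,dF_n=\frac{n-1}{n}\int\phi\,dF_{n-1}+\frac1n\phi(X_n)$, and since $X_n\sim\widehat f_{n-1}=\exp\widehat\phi_{n-1}$ we have $\mathbb E[\phi(X_n)\mid\mathcal F_{n-1}]=\int\phi\exp\widehat\phi_{n-1}\,dx$. The crucial ingredient is the variational characterization \eqref{key1} of the NPMLE: applied with the estimator $\widehat\phi_{n-1}$ and the choice $\Delta=\phi$, which is admissible because $\widehat\phi_{n-1}+\lambda\phi$ is concave for every $\lambda>0$ (a sum of concave functions), it yields $\int\phi\,dF_{n-1}\le\int\phi\exp\widehat\phi_{n-1}\,dx$. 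Combining these,
\[
\mathbb E\!\left[\int\phi\,dF_n\;\middle|\;\mathcal F_{n-1}\right]
=\frac{n-1}{n}\int\phi\,dF_{n-1}+\frac1n\int\phi\exp\widehat\phi_{n-1}\,dx
\ge\int\phi\,dF_{n-1},
\]
so $\bigl(\int\phi\,dF_n\bigr)_{n>N}$ is a submartingale.

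Finally, a submartingale that is bounded (here uniformly, by $\|\phi\|_\infty$) converges almost surely by the submartingale convergence theorem; hence $\int\phi\,dF_n$ has an almost sure limit and $L_n(\phi)\to L_\infty(\phi):=\lim_n\int\phi\,dF_n-\int\exp(\phi)\,dx$ almost surely. I expect the main obstacle to be not the convergence machinery, which is routine once the submartingale is identified, but rather invoking \eqref{key1} correctly: one must select $\Delta=\phi$, verify the admissibility condition ($\widehat\phi_{n-1}+\lambda\phi$ concave), and ensure that the support of every iterate remains $[X_1,X_N]$ so that the boundedness of $\phi$ — and thus the uniform bound on the submartingale — holds for all $n$.
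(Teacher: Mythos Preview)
Your proposal is correct and follows essentially the same route as the paper: invoke the variational inequality~\eqref{key1} (equivalently Theorem~2.2 of D\"umbgen--Rufibach) with $\Delta=\phi$ to show that $\int\phi\,dF_n$ is a submartingale, then appeal to boundedness of $\phi$ on the fixed interval $[X_1,X_N]$ and Doob's convergence theorem. Your write-up is in fact more explicit than the paper's on two points the paper leaves tacit---the inductive argument that every sampled $X_n$ remains in $[X_1,X_N]$, and the verification that $\widehat\phi_{n-1}+\lambda\phi$ is concave so that~\eqref{key1} applies---but the underlying argument is identical.
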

\begin{proof}
    For any $\phi \in \mathcal{C}$, by Theorem 2.2 in~\cite{Dumbgen_2009}, it is that
    $$\int \phi\, dF_n\leq \int \phi \,d\widehat{F}_n\quad\mbox{a.s.}$$
    Since $X_{n+1}$ comes from $\widehat{F}_n$ for $n>N$ we have that
    $$E\,\left\{\int \phi\,dF_{n+1}\mid X_{1:n}\right\}
    =\frac{n}{n+1}\,\int \phi\,dF_n+\frac{1}{n+1}\int\phi \,d\widehat{F}_n.$$
Hence, $L_n(\phi)$ is a sub-martingale for all $\phi$. Since 
%   \begin{align*}
%        \mathbb{E}[L_{n+1}(\phi)|X_1,\cdots,X_n]&=\int \phi dF_n-\int \exp(\phi)dx\\
%        &= \frac{n}{n+1} L_n(\phi) + \mathbb{E}[\frac{1}{n+1}\phi(X_{n+1})|X_1,\cdots,X_n]-\frac{1}{n+1}\int \exp(\phi)dx\\
%        &\geq L_n(\phi).
%    \end{align*}
    $\phi$ is bounded on $[X_1,X_N]$, due to the continuity, i.e. $|L_n(\phi)|\leq C_{\phi}$ for some constant $C_{\phi}$, 
    %Therefore, $L_n(\phi)$ is a submartingale. 
    by Doob's submartingale convergence theorem, $L_n(\phi)\to L_{\infty}(\phi)$ almost surely for all $\phi\in {\cal C}$.
\end{proof}

Concerning the consistency of the sequence of optimizers, we need to investigate the uniform convergence of the sequence of objective functions.  Corresponding to an equicontinuity condition for deterministic functions, so we need to consider stochastic equicontinuity for the uniform convergence of random functions, which is studied by~\cite{Newey_1991}. We refer to~\cite{Davidson_2021} for the strong stochastic equicontinuity to further obtain the corresponding almost sure results.

\begin{definition}[Stochastic equicontinuity]
    Let $(\Theta,\rho)$ be a metric space and $(\Omega,\mathcal{F},P)$ a probability space and let $\{G_n(\theta,\omega),n\in\mathbb{N}\}$ be a sequence of stochastic functions: $G_n: \Theta\times\Omega\mapsto\mathbb{R}$, $\mathcal{F}/\mathcal{B}$-measurable for each $\theta\in\Theta$. The sequence is said to be stochastically equicontinuous (in probability) if for all $\epsilon>0$, there exists a $\delta>0$ such that 
    \begin{equation}
        \limsup_{n\to\infty}P\left(\sup_{\theta\in\Theta}\sup_{\|\theta-\theta'\|\leq\delta}|G_n(\theta)-G_n(\theta')|\geq\epsilon\right)< \epsilon.
    \end{equation}
    It is said to be strongly stochastically equicontinuous if for all $\epsilon>0$, there exists a $\delta>0$ such that
    \begin{equation}
        P\left(\limsup_{n\to\infty}\sup_{\theta\in\Theta}\sup_{\|\theta-\theta'\|\leq\delta}|G_n(\theta)-G_n(\theta')|\geq\epsilon\right)=0.
    \end{equation}
\end{definition}

Theorem 22.8 in Davidson (2021)~\cite{Davidson_2021} provides the relationship between strongly stochastic equicontinuity and almost sure uniform convergence:

\begin{theorem}
\label{thm:Davidson}
    Let $\{G_n(\theta),n\in\mathbb{N}\}$ be a sequence of stochastic real-valued functions on a totally bounded metric space $(\Theta,\rho)$. Then
    \begin{equation}
        \sup_{\theta\in\Theta}|G_n(\theta)| \overset{\text{a.s.}}{\to}0
    \end{equation}
    if and only if
    \begin{enumerate}[label=(\alph*)]
        \item $G_n(\theta) \overset{\text{a.s.}}{\to}0$ for each $\theta\in\Theta_0$ where $\Theta_0$ is the dense subset of $\Theta$; and
        \item $\{G_n\}$ is strongly stochastically equicontinuous.
    \end{enumerate}
\end{theorem}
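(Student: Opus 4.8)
The plan is to recognize Theorem~\ref{thm:Davidson} as the almost-sure, stochastic analogue of the classical Arzel\`a--Ascoli principle: on a totally bounded domain, pointwise convergence together with uniform equicontinuity upgrades to uniform convergence. I would prove the two implications separately, spending almost all the effort on sufficiency, since necessity is essentially automatic. For necessity, suppose $\sup_{\theta}|G_n(\theta)|\overset{\text{a.s.}}{\to} 0$. Then (a) is immediate because $|G_n(\theta)|\le \sup_{\theta}|G_n(\theta)|$ for every fixed $\theta$, so pointwise a.s.\ convergence holds on all of $\Theta$, and a fortiori on any dense $\Theta_0$. For (b), the triangle inequality gives, for every $\delta>0$,
$$\sup_{\rho(\theta,\theta')\le\delta}|G_n(\theta)-G_n(\theta')|\le 2\sup_{\theta}|G_n(\theta)|,$$
whose $\limsup_n$ is $0$ almost surely; hence strong stochastic equicontinuity holds with any choice of $\delta$.

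For sufficiency --- the direction actually used in the sequel --- I would exploit total boundedness to reduce the supremum over $\Theta$ to a finite maximum. Fix $\epsilon>0$ and let $\delta>0$ be the radius supplied by strong stochastic equicontinuity for this $\epsilon$. Starting from a finite $\delta/2$-net (total boundedness) and replacing each net point by a point of $\Theta_0$ within distance $\delta/2$ (possible since $\Theta_0$ is dense) yields finitely many centers $\theta_1,\dots,\theta_K\in\Theta_0$ such that every $\theta\in\Theta$ satisfies $\rho(\theta,\theta_j)\le\delta$ for some $j$. The triangle inequality then gives
$$\sup_{\theta\in\Theta}|G_n(\theta)|\le \max_{1\le j\le K}|G_n(\theta_j)|+\sup_{\rho(\theta,\theta')\le\delta}|G_n(\theta)-G_n(\theta')|.$$
Taking $\limsup_n$ and using (a) --- the maximum of finitely many a.s.\ null sequences is a.s.\ null --- the first term vanishes a.s., while (b) bounds the $\limsup$ of the second term by $\epsilon$ a.s. Hence $\limsup_n\sup_{\theta}|G_n(\theta)|\le\epsilon$ almost surely.

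Finally I would pass from ``for each $\epsilon$'' to ``almost surely zero'' by a countable exhaustion: apply the previous step to $\epsilon_k=1/k$, obtaining for each $k$ a full-probability event on which $\limsup_n\sup_\theta|G_n(\theta)|\le 1/k$, and intersect these countably many events to get a single full-probability event on which the $\limsup$ is $0$, i.e.\ $\sup_\theta|G_n(\theta)|\overset{\text{a.s.}}{\to} 0$. The main obstacle I anticipate is not any single estimate --- the triangle-inequality argument is routine --- but the bookkeeping of the exceptional null sets together with the measurability of the suprema $\sup_\theta|G_n(\theta)|$ and of the modulus $\sup_{\rho(\theta,\theta')\le\delta}|G_n(\theta)-G_n(\theta')|$ over the uncountable index set $\Theta$. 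Making each displayed a.s.\ statement rigorous requires either separability of the sample paths or an outer-probability/measurable-cover treatment, and one must track that the net, the radius $\delta$, and the associated null set all depend on $\epsilon_k$ while ensuring their countable union remains null.
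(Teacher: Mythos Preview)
The paper does not supply its own proof of this statement: Theorem~\ref{thm:Davidson} is quoted verbatim as Theorem~22.8 of Davidson~\cite{Davidson_2021} and then invoked as a black box in the proof of Theorem~\ref{thm:Almost sure convergence of the log-density}. There is therefore nothing to compare your argument against from within the paper.

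That said, your proposal is correct and is precisely the standard Arzel\`a--Ascoli-type argument one would expect in Davidson's text. Both directions are handled properly: necessity via the trivial bound $\sup_{\rho(\theta,\theta')\le\delta}|G_n(\theta)-G_n(\theta')|\le 2\sup_\theta|G_n(\theta)|$, and sufficiency via a finite $\delta$-net drawn from the dense set $\Theta_0$, the decomposition $\sup_\theta|G_n(\theta)|\le\max_j|G_n(\theta_j)|+\text{(modulus)}$, and a countable intersection over $\epsilon_k=1/k$. Your remark about measurability of the uncountable suprema is the one genuine technicality; it is typically dispatched either by assuming separability of the process or by working with outer probability, and the paper (like most applications) leaves this implicit.
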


\noindent
We now use Theorem \ref{thm:Davidson} to prove the following: 

\begin{theorem}[Almost sure convergence of the log-density]
\label{thm:Almost sure convergence of the log-density}
    Suppose $\Phi \subset \mathcal{C}$ is a compact subset 
    with respect to the sup metric, 
    and $L_n(\phi)$ converges to  $L_{\infty}(\phi)$ almost surely for all $\phi$, which is uniquely maximized at $\widehat\phi_\infty$. Set $\widehat{\phi}_n={\arg\max}_{\phi\in\Phi}L_n(\phi)$ for $n>N$. Then {$L_{\infty}$ is concave and} $\widehat{\phi}_n$ converges to $\widehat{\phi}_{\infty}$ almost surely.
\end{theorem}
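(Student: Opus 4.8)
The plan is to read this as a standard extremum-estimator (M-estimation) consistency statement and to assemble it from two ingredients: a uniform-in-$\phi$ almost sure convergence $\sup_{\phi\in\Phi}|L_n(\phi)-L_\infty(\phi)|\to 0$, obtained through Davidson's criterion (Theorem~\ref{thm:Davidson}), and the classical argmax argument that upgrades uniform convergence together with a unique maximizer and compactness into convergence of the maximizers. Alongside this I would dispose of the concavity claim, which is cheap: each $L_n(\phi)=\int\phi\,dF_n-\int\exp(\phi)\,dx$ is concave in $\phi$ on the convex set $\mathcal{C}$ (the first term is linear, and $\phi\mapsto\int\exp(\phi)\,dx$ is convex since $\exp$ is convex, so its negative is concave). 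Fixing a countable sup-norm-dense subset $D\subset\Phi$ together with all rational convex combinations of its points (still countable), Lemma~\ref{lemma:Almost sure convergence of the loss functions} gives, off a single null set, $L_n\to L_\infty$ simultaneously at every point of this set; passing to the limit in the concavity inequality for $L_n$ shows $L_\infty$ is concave there, and concavity extends to $\Phi$ by continuity of $L_\infty$ (which will follow once uniform convergence is established).

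The crux is verifying the hypotheses of Theorem~\ref{thm:Davidson} for $G_n:=L_n-L_\infty$ on the metric space $(\Phi,\|\cdot\|_\infty)$, which is totally bounded because $\Phi$ is compact. Condition (a) is immediate: Lemma~\ref{lemma:Almost sure convergence of the loss functions} gives $G_n(\phi)\to 0$ a.s.\ for every $\phi$, in particular on any dense subset. Condition (b), strong stochastic equicontinuity, is where I would do the real work, although it here reduces to a deterministic modulus-of-continuity estimate. Compactness of $\Phi$ yields a finite bound $B:=\sup_{\phi\in\Phi}\|\phi\|_\infty$, and writing $\ell:=X_N-X_1$ for the length of the support interval, for any $\phi,\phi'\in\Phi$
\[
|L_n(\phi)-L_n(\phi')|\le\Big|\int(\phi-\phi')\,dF_n\Big|+\int_{X_1}^{X_N}|e^{\phi}-e^{\phi'}|\,dx\le\big(1+e^{B}\ell\big)\,\|\phi-\phi'\|_\infty,
\]
where the first term is controlled because $F_n$ is a probability measure and the second because $\exp$ is $e^{B}$-Lipschitz on $[-B,B]$. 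This bound is uniform in $n$ and holds for every sample path, and letting $n\to\infty$ it transfers verbatim to $L_\infty$; hence $|G_n(\phi)-G_n(\phi')|\le 2(1+e^{B}\ell)\|\phi-\phi'\|_\infty$ for all $n$ and all $\omega$. Choosing $\delta<\epsilon/\big(2(1+e^{B}\ell)\big)$ makes the inner supremum in the definition deterministically smaller than $\epsilon$ for every $n$, so the relevant event has probability zero and strong stochastic equicontinuity holds.

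With both conditions in hand, Theorem~\ref{thm:Davidson} gives $\sup_{\phi\in\Phi}|L_n(\phi)-L_\infty(\phi)|\xrightarrow{\text{a.s.}}0$; in particular $L_\infty$ is continuous as a uniform limit of the (Lipschitz) $L_n$, which completes the concavity argument. I would then run the argmax argument pathwise on the almost sure event $\Omega_0$ on which this uniform convergence and the concavity/unique-maximizer structure hold. Fix $\omega\in\Omega_0$; since $\Phi$ is compact, any subsequence of $(\widehat\phi_n)$ has a further subsequence $\widehat\phi_{n_k}\to\phi^\ast\in\Phi$. Combining $L_{n_k}(\widehat\phi_{n_k})\ge L_{n_k}(\widehat\phi_\infty)$ with uniform convergence and continuity of $L_\infty$ gives $L_\infty(\phi^\ast)\ge L_\infty(\widehat\phi_\infty)$, and uniqueness of the maximizer forces $\phi^\ast=\widehat\phi_\infty$.

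As every subsequential limit equals $\widehat\phi_\infty$ and $\Phi$ is compact, $\widehat\phi_n\to\widehat\phi_\infty$ for this $\omega$, hence almost surely. The main obstacle is nominally the equicontinuity step of Theorem~\ref{thm:Davidson}; the pleasant surprise is that it collapses to the deterministic Lipschitz bound above, so the only genuine care needed elsewhere is the bookkeeping of the countably many null sets when combining Lemma~\ref{lemma:Almost sure convergence of the loss functions} over a dense set, and the routine verification that the subsequence argument is valid pathwise on $\Omega_0$.
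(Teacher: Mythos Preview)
Your proof is correct and follows essentially the same route as the paper: both establish the deterministic equi-Lipschitz bound on $L_n$ (you compute the constant explicitly as $1+e^{B}\ell$, the paper writes it as an unspecified $\beta_\Phi$), use it to verify strong stochastic equicontinuity of $L_n-L_\infty$, invoke Davidson's theorem for uniform a.s.\ convergence, and conclude via compactness and uniqueness of the maximizer. Your argmax subsequence argument and null-set bookkeeping for the concavity claim are more carefully spelled out than in the paper, but the architecture is identical.
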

\begin{proof}
    Define the metric for any $\phi, \phi'\in\Phi$:
    $$
    \|\phi-\phi'\| = \sup_{x\in[X_1,X_N]}|\phi(x)-\phi'(x)|.
    $$
    {Due to the compactness of $\Phi$ and the support of $\exp(\phi)$ and the convexity of the exponential function, ${L_n(\phi)}$ for $ n>N$ is an equi-Lipschitz-continuous family: for any $\phi,\phi'\in\Phi$, $n>N$, {there exists a $\Phi$-dependent constant $\beta_{\Phi}$, such that}
    %{\color{red} $$\cancel{
    %|L_n(\phi)-L_n(\phi')| \leq \int |\phi(x)-\phi'(x)|\,dF_n(x)\leq \|\phi-\phi'\|.
    %}$$ }
    {$$
    |L_n(\phi)-L_n(\phi')| \leq \beta_{\Phi}\|\phi-\phi'\|.
    $$}
    By the almost sure convergence {and the concavity} of $L_n(\phi)$,
    \begin{equation}
        |L_{\infty}(\phi)-L_{\infty}(\phi')| \leq {\beta_{\Phi}}\|\phi-\phi'\|,
    \end{equation}
    {and $L_{\infty}$ is also concave} almost surely for all $\phi,\phi'\in\Phi$. Therefore, for any $n$ and any $\phi,\phi'\in\Phi$,
    \begin{align*}
        |(L_{n}(\phi)-L_{\infty}(\phi))-(L_{n}(\phi')-L_{\infty}(\phi'))|
        \leq |L_{n}(\phi)-L_{n}(\phi')|+|L_{\infty}(\phi)-L_{\infty}(\phi')|
        \leq 2 {\beta_{\Phi}} \|\phi-\phi'\|
    \end{align*}
    almost surely. So $\{L_{n}(\phi)-L_{\infty}(\phi)\}$ is strongly stochastically equicontinuous.}

    Note that compactness is stronger than total boundedness. By Theorem~\ref{thm:Davidson}, $L_{n}(\phi)-L_{\infty}(\phi)$ uniformly converges to 0 almost surely. Due to the compactness of $\Phi$ and $[X_1,X_N]$ and the strict concavity of $L_n$, it is that $\widehat{\phi}_n$ converges to $\widehat{\phi}_{\infty}$ almost surely.
\end{proof}

\noindent
Note that $\{L_n\}$ is a family of strict concave functions. Recalling the $L_n$ in Equation~\ref{equ:npmle_loss}, for any $\|\phi-\phi'\|>0$, there exists an $x_0$ such that $\phi(x_0)\neq\phi'(x_0)$ and without loss of generality, we assume $\phi(x_0)-\phi'(x_0)>0$. Due to the continuity of $\phi$ and $\phi'$, there exists a $\delta>0$, such that for any $x\in[x_o-\delta,x_0+\delta]$, $\phi(x)-\phi'(x)>0$. So $\exp(\alpha\phi(x)+(1-\alpha)\phi'(x))<\alpha\exp(\phi(x))+(1-\alpha)\exp(\phi'(x))$ for any $x\in[x_o-\delta,x_0+\delta]$. Due to the continuity and convexity of the exponential function, 
$$\int_{X_1}^{X_N}\exp\left(\alpha\phi(x)+(1-\alpha)\phi'(x)\right)\,dx<\alpha\int_{X_1}^{X_N}\exp(\phi(x))dx+(1-\alpha)\int_{X_1}^{X_N}\exp(\phi'(x))dx$$
for any $\alpha\in[0,1]$. Since $\int\phi\, d F_n$ is linear in $\phi$, so $L_n$ is strictly concave.

In Theorem~\ref{thm:Almost sure convergence of the log-density}, we assume $L_{\infty}$ uniquely maximized at $\widehat{\phi}_{\infty}$. Even if the optimal solution is not unique for $L_{\infty}$, due to the strict concavity of $\{L_n\}$, $\widehat{\phi}_n$ is unique for each $n$ and will converge to an element within the optimal solution set of $L_{\infty}$.

These results are related to~\cite{Newey_1991} using stochastic equicontinuity and the Arzel\`a–Ascoli Theorem. %But we have additional concavity and equi-Lipschitz-continuity in our settings.
The results in \cite{Newey_1991} apply to our setting due to the stochastic equicontinuity of the $(L_n(\phi))_{n>N}$.

\begin{theorem}[Asymptotic exchangeability]
\label{thm:Asymptotic exchangeability}
    Under the same assumptions of Theorem~\ref{thm:Almost sure convergence of the log-density}, $X_{N+1:\infty}$ sampled from the bootstrap Algorithm~\ref{algo:1} is asymptotic exchangeable.
\end{theorem}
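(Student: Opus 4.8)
The plan is to identify the directing random measure of the limiting exchangeable sequence as $\widehat{F}_\infty$, the distribution function associated with $\exp\widehat\phi_\infty$, and then to show that the tail of $(X_n)_{n>N}$ merges in distribution with a sequence that is i.i.d.\ $\widehat F_\infty$ given $\widehat F_\infty$. First I would upgrade the almost sure convergence $\widehat\phi_n\to\widehat\phi_\infty$ supplied by Theorem~\ref{thm:Almost sure convergence of the log-density} to almost sure uniform convergence of the associated distribution functions $\widehat F_n\to\widehat F_\infty$ on $[X_1,X_N]$: the densities $\exp\widehat\phi_n$ converge uniformly on the compact interval, and since the limit is continuous, a Pólya-type argument gives $\sup_x|\widehat F_n(x)-\widehat F_\infty(x)|\to 0$ almost surely. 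A preliminary but crucial observation is that the whole construction stays inside the fixed compact interval $[X_1,X_N]$: each $X_{n+1}$ is drawn from $\widehat F_n$, which is supported on $[X_1,X_N]$, so the sample range never grows and every $\widehat F_n$ is a probability measure on $[X_1,X_N]$. This compactness is what makes the later interchange-of-limits arguments go through.

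Second, I would record the predictive structure of the algorithm: by construction $\Prob(X_{n+1}\in\cdot\mid\mathcal{F}_n)=\widehat F_n$, where $\mathcal{F}_n=\sigma(X_{1:n})$. Asymptotic exchangeability will be established in the form that, for every fixed block length $k$ and all bounded continuous test functions $g_1,\dots,g_k$ on $[X_1,X_N]$,
\begin{equation}
\Exs\Big[\prod_{j=1}^{k} g_j(X_{n+j})\Big] \;\longrightarrow\; \Exs\Big[\prod_{j=1}^{k}\int g_j\,d\widehat F_\infty\Big]\qquad\text{as } n\to\infty .
\end{equation}
This is exactly the statement that the shifted sequences $(X_{n+1},X_{n+2},\dots)$ converge in distribution to a sequence that, conditionally on $\widehat F_\infty$, is i.i.d.\ $\widehat F_\infty$; by de Finetti such a sequence is exchangeable, with directing random measure $\widehat F_\infty$, so the display is precisely the definition of asymptotic exchangeability.

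Third, the core of the argument is to prove the display by iterated conditioning. Conditioning on $\mathcal{F}_{n+k-1}$ replaces the last factor $g_k(X_{n+k})$ by $\int g_k\,d\widehat F_{n+k-1}$; peeling off the conditional expectations one index at a time reduces the block expectation to $\Exs[\prod_{j=1}^{k}\int g_j\,d\widehat F_{n+j-1}]$. At each stage I would replace $\int g_j\,d\widehat F_{n+j-1}$ by $\int g_j\,d\widehat F_\infty$, the error being bounded by $\|g_j\|_\infty\,\sup_x|\widehat F_{n+j-1}(x)-\widehat F_\infty(x)|$, which tends to $0$ almost surely by the first step; bounded convergence, valid since every factor is dominated by $\max_j\|g_j\|_\infty$ on the common compact support, then yields the claimed limit.

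The main obstacle is that $(\widehat F_n)_{n>N}$ is only a submartingale, not a martingale, so the sequence $(X_n)$ is not conditionally identically distributed and I cannot invoke the standard ``conditionally identically distributed implies asymptotically exchangeable'' theorem directly. Everything therefore hinges on the almost sure convergence $\widehat F_n\to\widehat F_\infty$ inherited from Theorem~\ref{thm:Almost sure convergence of the log-density}, and the delicate point is justifying the interchange of the limit $n\to\infty$ with the nested conditional expectations in the peeling argument. Because the convergence $\widehat F_{n+j}\to\widehat F_\infty$ is almost sure but not uniform in $j$, I must perform the replacement for each of the finitely many indices $j=0,\dots,k-1$ inside a single block and only then apply bounded convergence; it is precisely the fixed compact support $[X_1,X_N]$ and the boundedness of the test functions that render this dominated-convergence step legitimate and close the proof.
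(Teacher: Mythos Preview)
Your overall target is right and, in spirit, more elementary than the paper's argument: the paper simply shows that the predictive densities $\widehat f_n=\exp\widehat\phi_n$ converge almost surely to $\widehat f_\infty$ (hence the random measures $\widehat F_n$ converge weakly a.s., with tightness automatic on $[X_1,X_N]$) and then quotes Lemma~8.2(b) of Aldous (1985), which packages exactly the ``predictives converge a.s.\ $\Rightarrow$ asymptotically exchangeable'' implication you are trying to prove by hand.

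Your peeling step, however, contains a genuine gap. After conditioning on $\mathcal F_{n+k-1}$ you correctly obtain
\[
\Exs\!\Big[\prod_{j=1}^{k} g_j(X_{n+j})\Big]
=\Exs\!\Big[g_1(X_{n+1})\cdots g_{k-1}(X_{n+k-1})\,\textstyle\int g_k\,d\widehat F_{n+k-1}\Big],
\]
but you cannot iterate this to reach $\Exs\big[\prod_{j}\int g_j\,d\widehat F_{n+j-1}\big]$: the factor $\int g_k\,d\widehat F_{n+k-1}$ depends on $X_{n+k-1}$, so it is \emph{not} $\mathcal F_{n+k-2}$-measurable and cannot be pulled through the next conditional expectation. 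Replacing it by $\int g_k\,d\widehat F_\infty$ does not help either, because $\widehat F_\infty$ is only $\mathcal F_\infty$-measurable. In short, the identity you assert at the start of your third paragraph is false in general, and the measurability issue you flag as ``the delicate point'' is not actually resolved by bounded convergence alone.

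The repair is simple: at each stage replace $\int g_j\,d\widehat F_{n+j-1}$ not by $\int g_j\,d\widehat F_\infty$ but by $\int g_j\,d\widehat F_{n}$, which \emph{is} $\mathcal F_n\subset\mathcal F_{n+j-2}$-measurable and therefore factors through every subsequent conditioning. The replacement error is bounded by $\|g_j\|_\infty\cdot\sup_x|\widehat F_{n+j-1}(x)-\widehat F_n(x)|$, which tends to $0$ a.s.\ for each of the finitely many $j\le k$ since both sequences converge to $\widehat F_\infty$. The peeling then legitimately yields $\Exs\big[\prod_j\int g_j\,d\widehat F_n\big]+o(1)$, and a final application of bounded convergence with $\widehat F_n\to\widehat F_\infty$ gives your display. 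With this correction your direct argument goes through; alternatively, once you have established a.s.\ weak convergence of the predictives in your first paragraph, you may simply invoke Aldous's lemma as the paper does.
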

\begin{proof}
    Due to the almost sure convergence of $\widehat{\phi}_n$, by continuous mapping theorem, $\widehat{f}_n = \exp(\widehat{\phi}_n)$ also converges almost surely. For any bounded and continuous function $g$, due to the compactness of $\Phi$ and Lebesgue dominated convergence theorem,
    $$
    \left|\int g \widehat{f}_n   - \int g \widehat{f}_{\infty}  \right| \leq \max|g| \int|\widehat{f}_n- \widehat{f}_{\infty}|  \to 0 
    $$
    as $n\to\infty$. So the corresponding random distributions converge weakly almost surely. The tightness is obvious because they are all supported on $[X_1,X_N]$. By Lemma 8.2(b) in~\cite{Aldous_1985}, $X_{N+1:\infty}$ is asymptotically exchangeable.
\end{proof}

\section{Illustrations}
\label{sec:Illustration}
We now present some illustrations for the martingale posterior as described in Algorithm \ref{algo:1}. It is worth noting that one important way to accelerate the algorithm is to find a faster way to calculate the NPMLEs. For example, we use the LC algorithm in \cite{Anderson_2014} or the constrained Newton method for log-concave density estimation (CNMLCD) algorithm; see \cite{Liu_2018}.

Before proceeding, we show how the NPMLE provides a good estimator of a log concave density. 
\begin{figure}[ht!]
    \centering
    \includegraphics[width=0.8\textwidth]{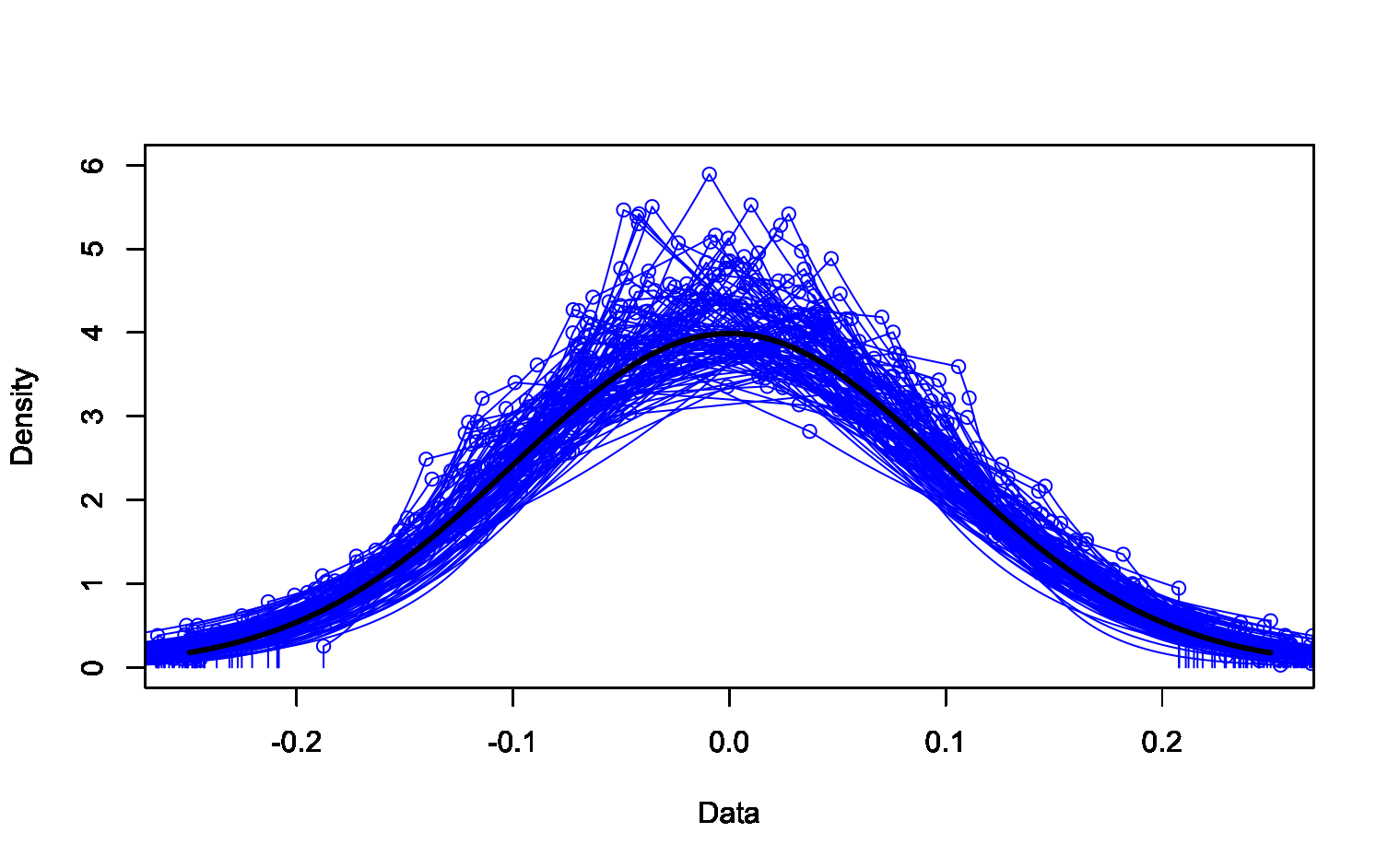}
    \caption{NPMLE estimators}
    \label{pl:NPMLE_avg}
\end{figure}

While some may appear biased and even spikey, we see from Fig. \ref{pl:NPMLE_avg}
that on average the density estimators are good on representations of the true density.  
%From the experiments, we can see the mean of our posteriors could be higher than the true ones, especially in some high-frequency areas. One reason is the biasedness of our method. Another could be the influence of the tails. Using NPMLE will discard the probability of $\{x\notin[X_1,X_n]\}$. For some heavy-tail probability, we can observe this problem in the experiments. Moreover, some biasedness in the illustration may be resulted from NPMLE itself. On average NPMLEs will estimate the sample distribution density well, but due to the randomness, we may find some biasedness in some specific estimates. However, our bootstrap method to the NPMLE still works well. See Figure~\ref, 
For the figure we took $n=200$ iid samples from $\mathcal{N}(0,0.1^2)$ 100 times and then plotted the 100 NPMLEs with the true normal density function. 

In each of the examples, $M$ was chosen conservatively, i.e. rounded up to the nearest 1000 for Section~\ref{subsec:Simulated_Data} and~\ref{subsec:Real-world_Data}, and 100 for Section~\ref{subsec:Comparison}, so that the difference between $\widehat{F}_M$ and $\widehat{F}_{M+1}$ is suitably small, to the order of $10^{-3}$ or $10^{-2}$.  Hence, in Sections 3.1 and 3.2 we take $M=1000$ and in Section 3.3 we use $M=100$.   

\subsection{Simulated data}
\label{subsec:Simulated_Data}
First, we generate 70 and 1000 samples separately from the $\mathcal{N}(0,1)$, $\mathrm{Exp}(5)$ and $\mathrm{Laplace}(0,1)$ distributions, implying 6 experiments in total. For each of these experiments, we generate 50 posterior samples, and with each sample obtained by 1000 iterations, that is $M=1000$. The sampling can be performed using parallel computing.

\begin{figure}[!ht]
\centering
\subfigure[]{
\label{pl:normal_70}
\includegraphics[width=0.3\textwidth]{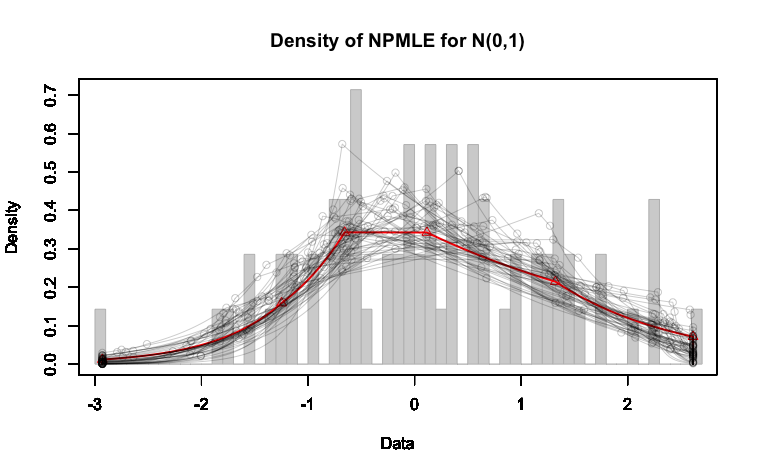}}
\subfigure[]{
\label{pl:exp5_70}
\includegraphics[width=0.3\textwidth]{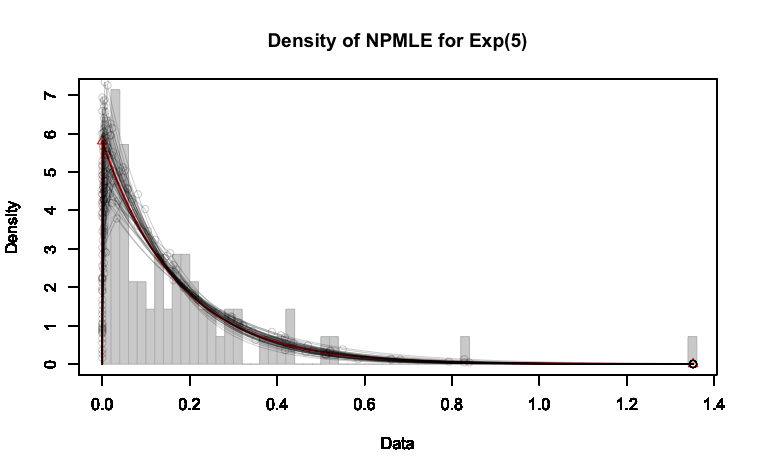}}
\subfigure[]{
\label{pl:laplacian_70}
\includegraphics[width=0.3\textwidth]{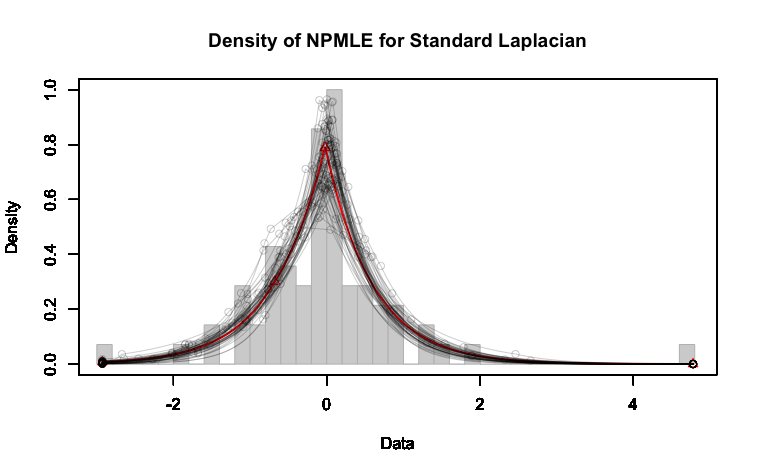}}
\subfigure[]{
\label{pl:normal_log_70}
\includegraphics[width=0.3\textwidth]{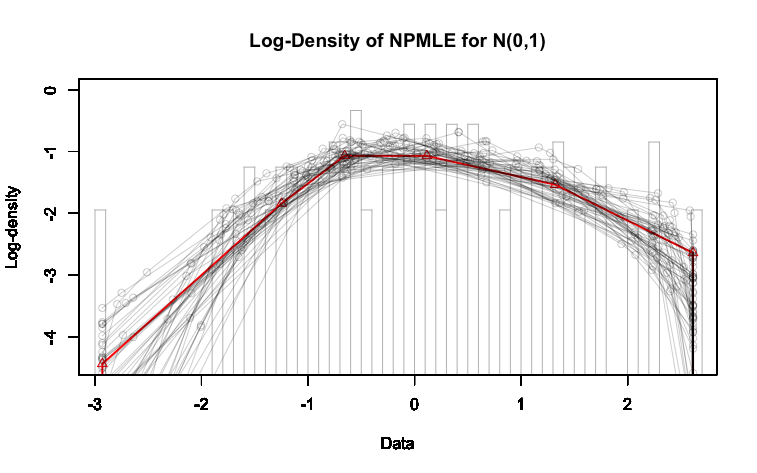}}
\subfigure[]{
\label{pl:exp5_log_70}
\includegraphics[width=0.3\textwidth]{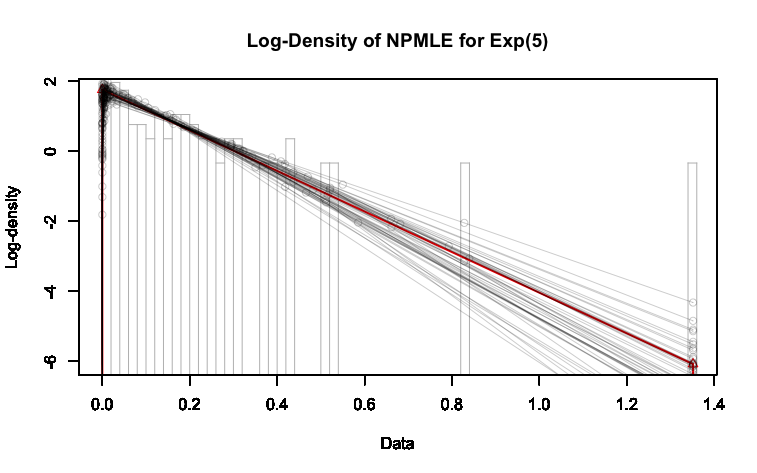}}
\subfigure[]{
\label{pl:laplacian_log_70}
\includegraphics[width=0.3\textwidth]{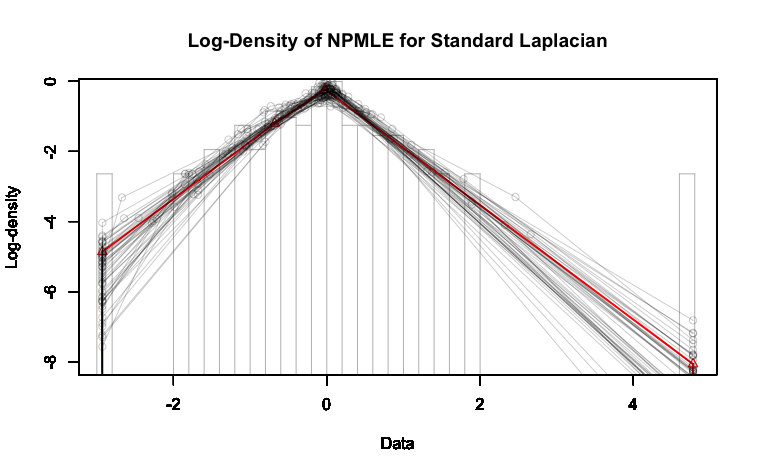}}
\caption{Functional Bootstrapping for some log-concave distributions with 70 original samples}
\label{pl:normal_exp_laplacian_70}
\end{figure}

\begin{figure}[!ht]
\centering
\subfigure[]{
\label{pl:normal}
\includegraphics[width=0.3\textwidth]{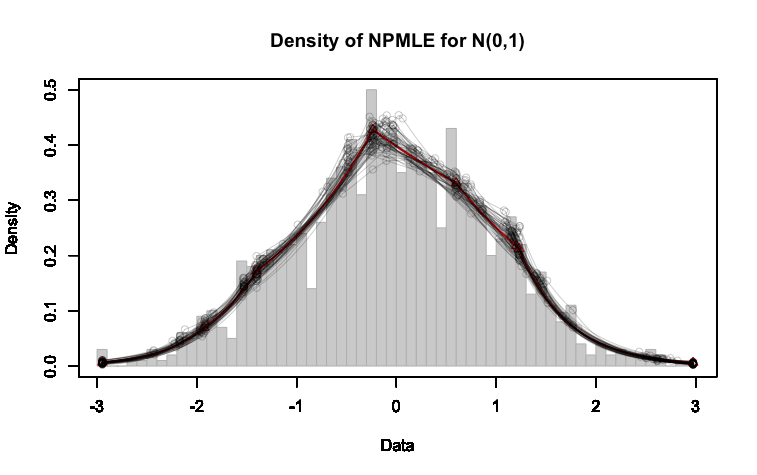}}
\subfigure[]{
\label{pl:exp5}
\includegraphics[width=0.3\textwidth]{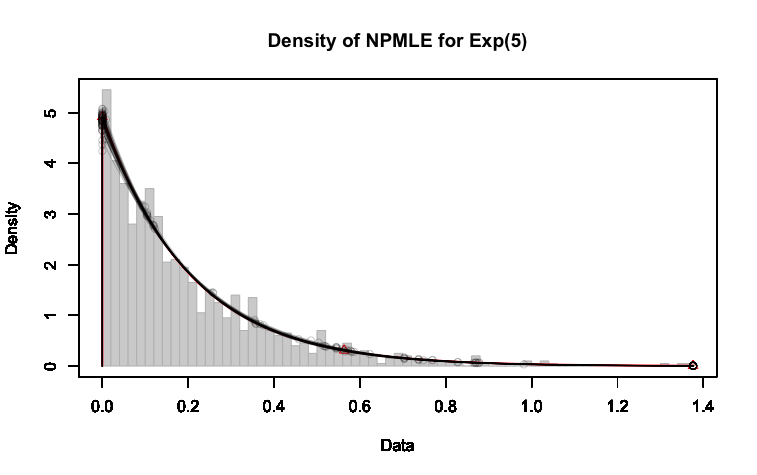}}
\subfigure[]{
\label{pl:laplacian}
\includegraphics[width=0.3\textwidth]{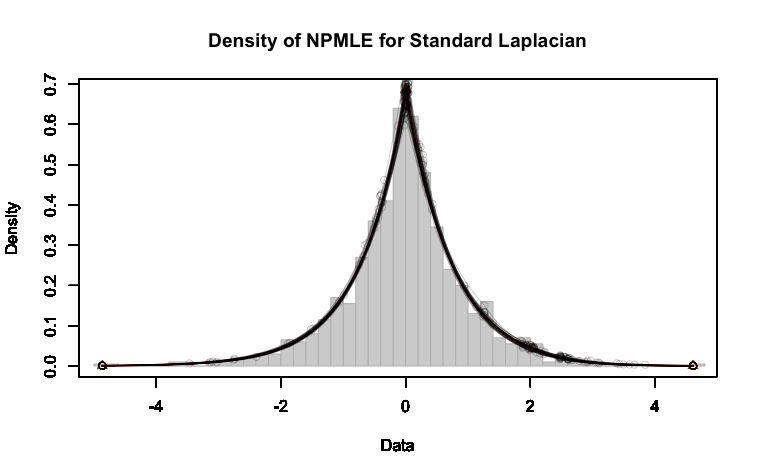}}
\subfigure[]{
\label{pl:normal_log}
\includegraphics[width=0.3\textwidth]{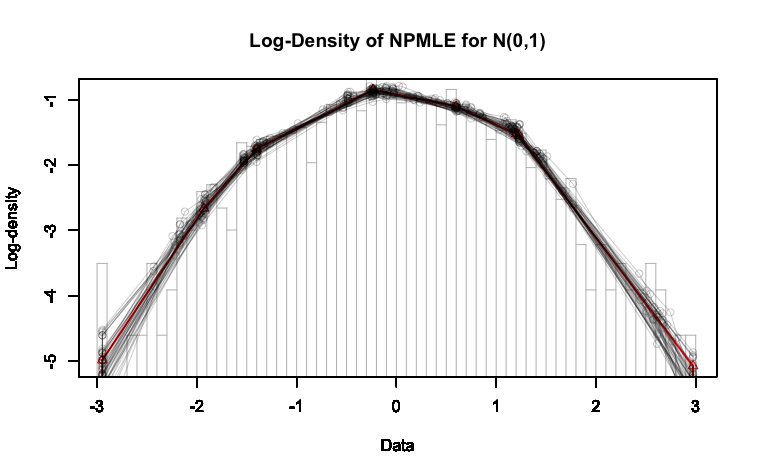}}
\subfigure[]{
\label{pl:exp5_log}
\includegraphics[width=0.3\textwidth]{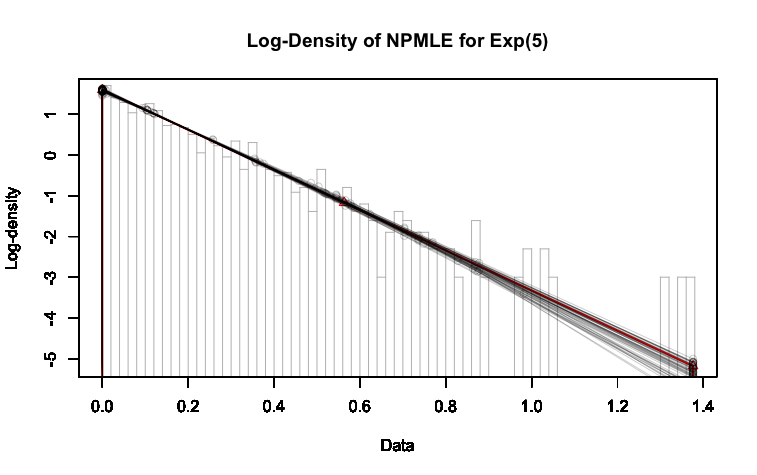}}
\subfigure[]{
\label{pl:laplacian_log}
\includegraphics[width=0.3\textwidth]{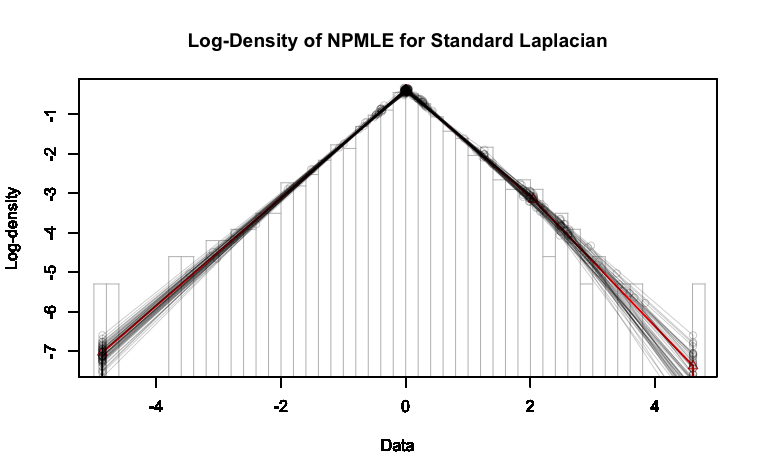}}
\caption{Functional Bootstrapping for some log-concave distributions with 1000 original samples}
\label{pl:normal_exp_laplacian_1000}
\end{figure}

In Figs \ref{pl:normal_exp_laplacian_70} and \ref{pl:normal_exp_laplacian_1000} we present the results for the 70 and 1000 samples, respectively. In each figure, the NPMLE densities are represented by red lines. The black lines are posterior samples from these NPMLEs. The red triangles or black circles are the non-differentiable points for the NPMLE density functions and their corresponding posterior samples.

The top rows represent the density functions for, from left to right, the normal, exponential and Laplace cases, while the bottom rows are for the corresponding log-concave functions.

As we can see, the representation of the martingale posterior samples are centered about the NPMLE and reflect the shape of the underlying concave density. Also clear is that the uncertainty obtained from the 70 samples is larger than that from the 1000 samples.

\subsection{Real data}
\label{subsec:Real-world_Data}
For the real-world data, we use three data sets from~\cite{Liu_2018} in the R package \texttt{cnmlcd}. The first one is the daily log-returns of Standard and Poor's 500 (S\&P 500) from March 1st 2011 to March 1st 2012, with the sample size of 252. The second is the daily log-returns of S\&P 500 from January 2nd 2014 to December 31st 2014, with the sample size of 252. The third is the daily log-volatilities of S\&P 500 from January 3rd 1995 to January 3rd 2014, with the sample size of 4783. 
%All these Standard and Poor's 500 data are from Yahoo Finance.

\begin{figure}[!ht]
\centering
\subfigure[]{
\label{pl:log_return_2011}
\includegraphics[width=0.3\textwidth]{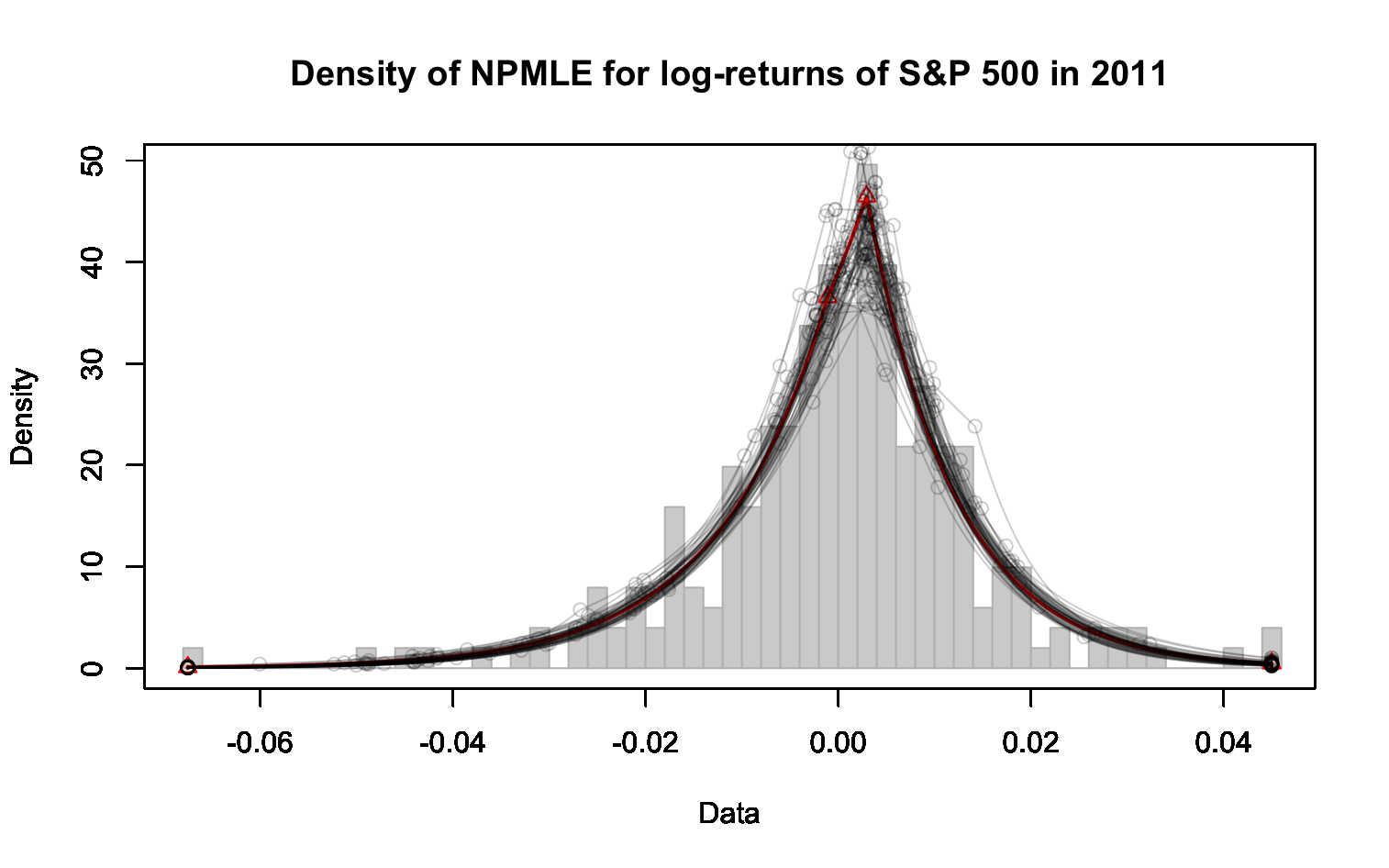}}
\subfigure[]{
\label{pl:log_return_2014}
\includegraphics[width=0.3\textwidth]{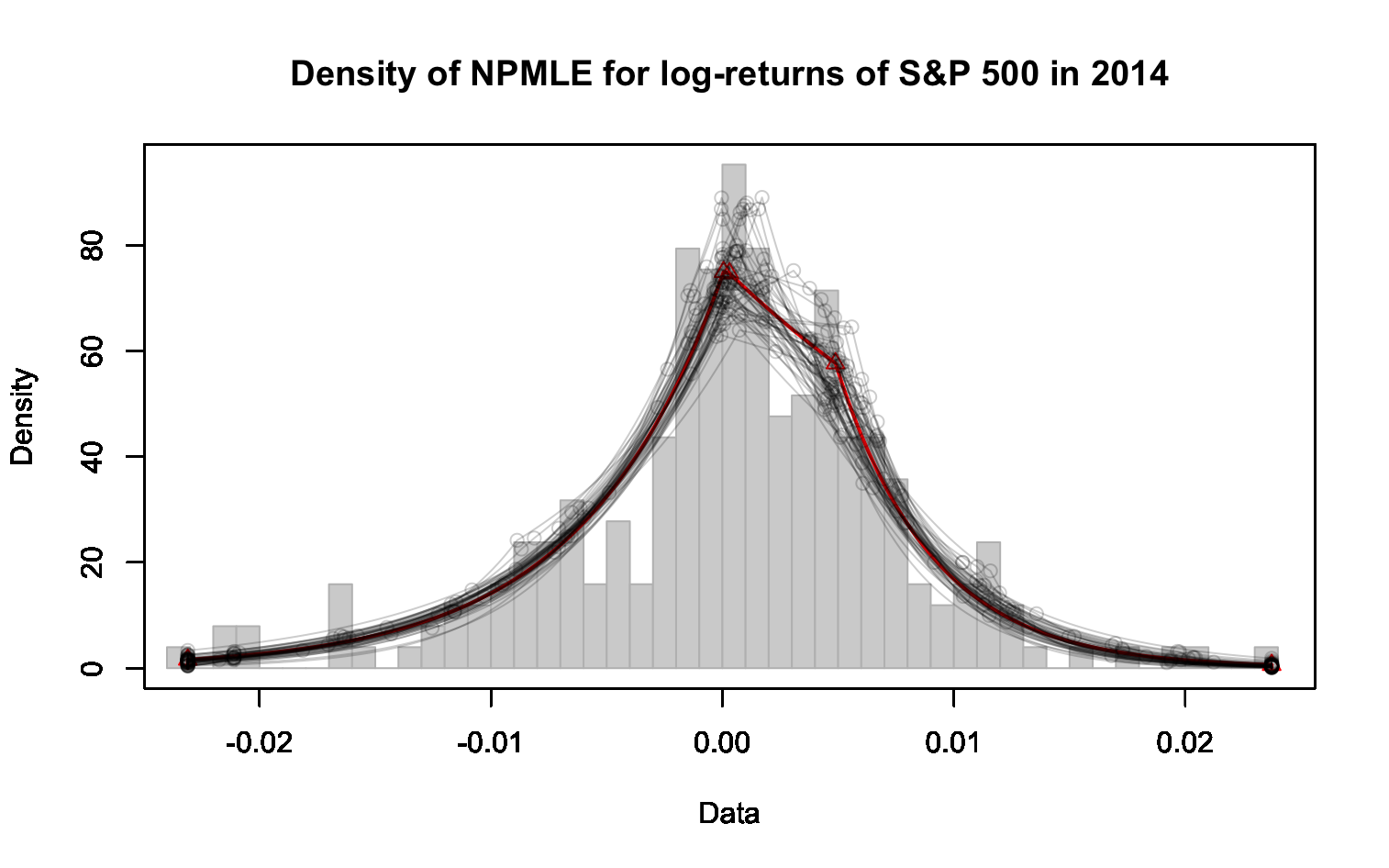}}
\subfigure[]{
\label{pl:log_volatility}
\includegraphics[width=0.3\textwidth]{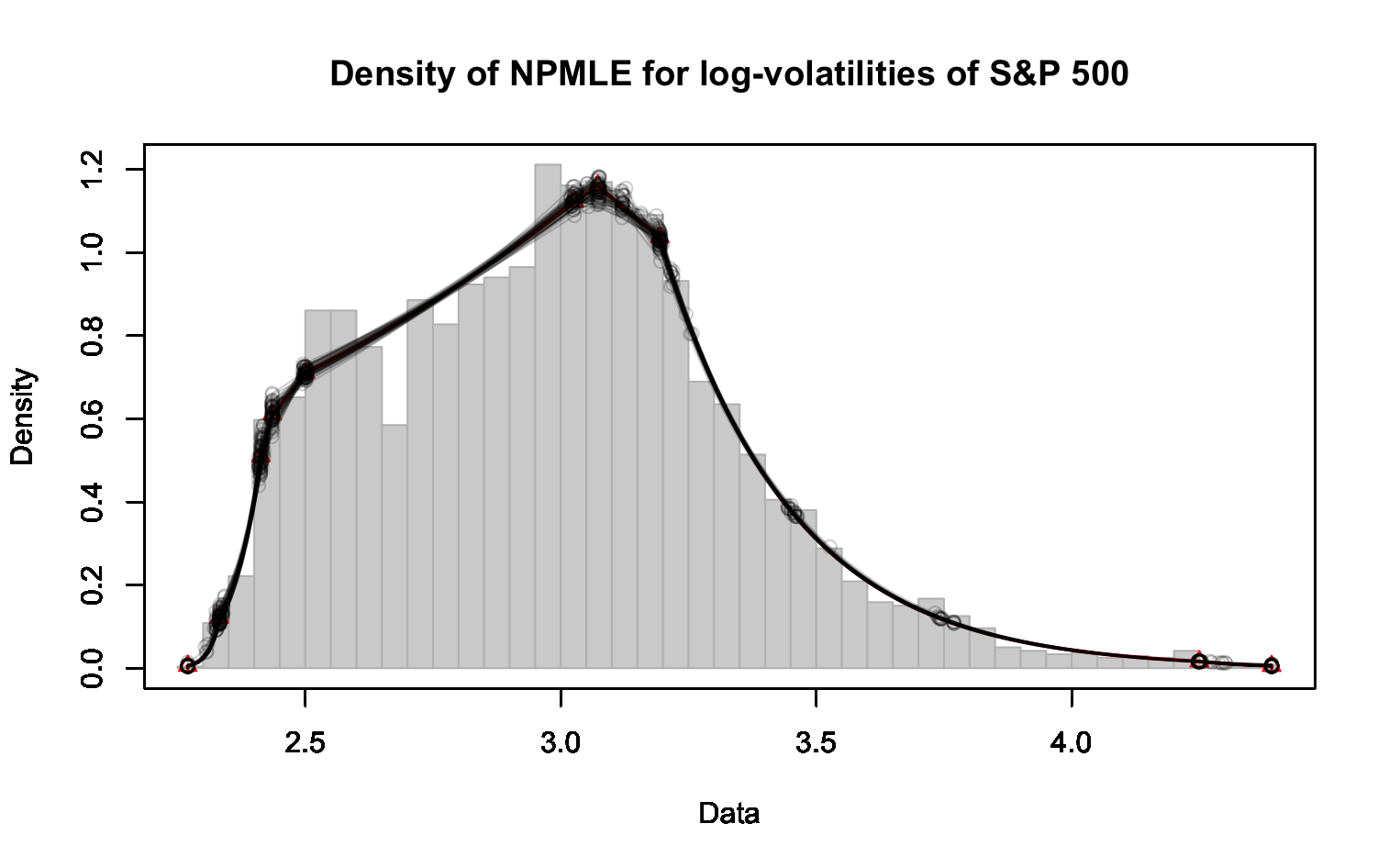}}
\subfigure[]{
\label{pl:log_return_2011_log_dens}
\includegraphics[width=0.3\textwidth]{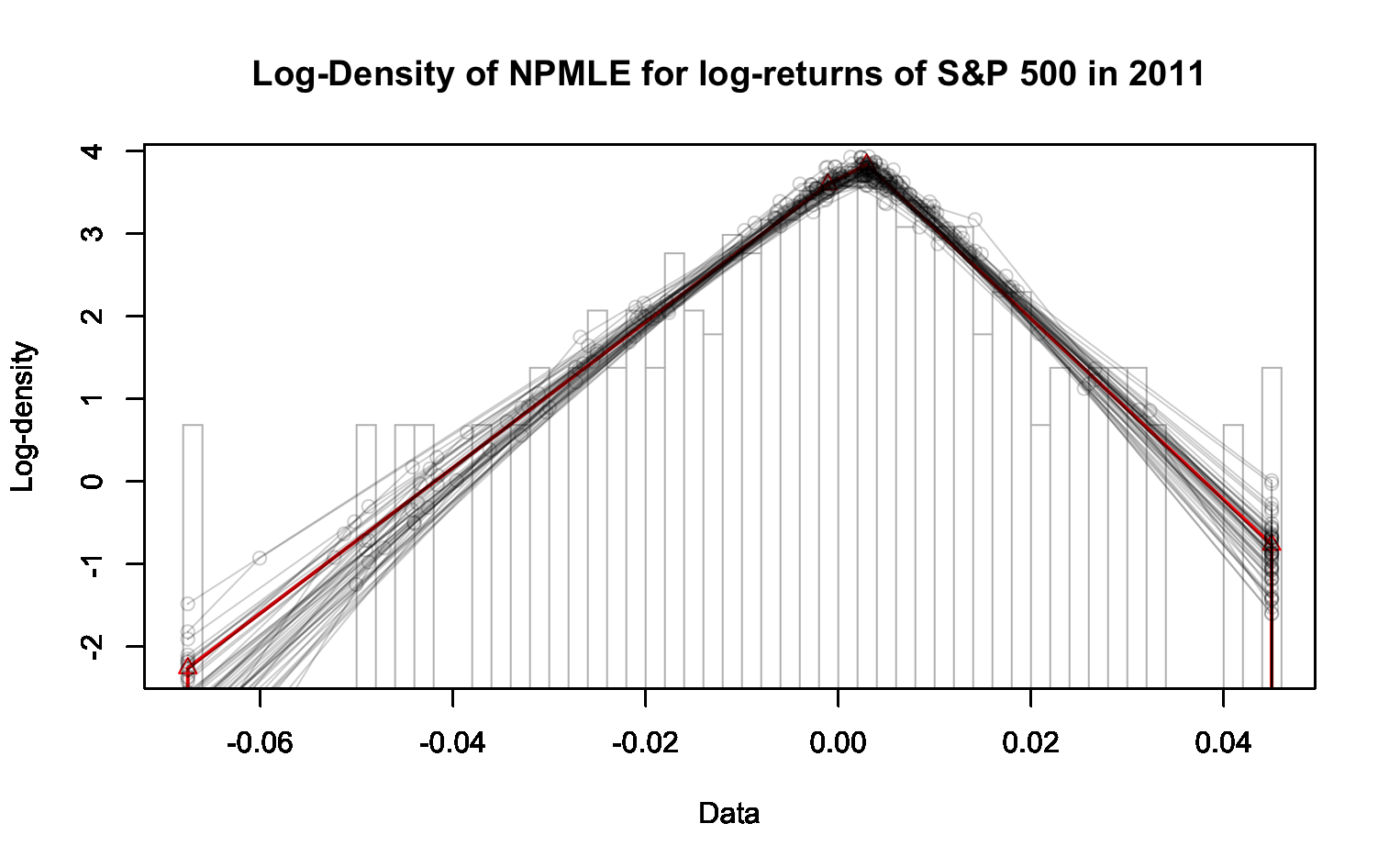}}
\subfigure[]{
\label{pl:log_return_2014_log_dens}
\includegraphics[width=0.3\textwidth]{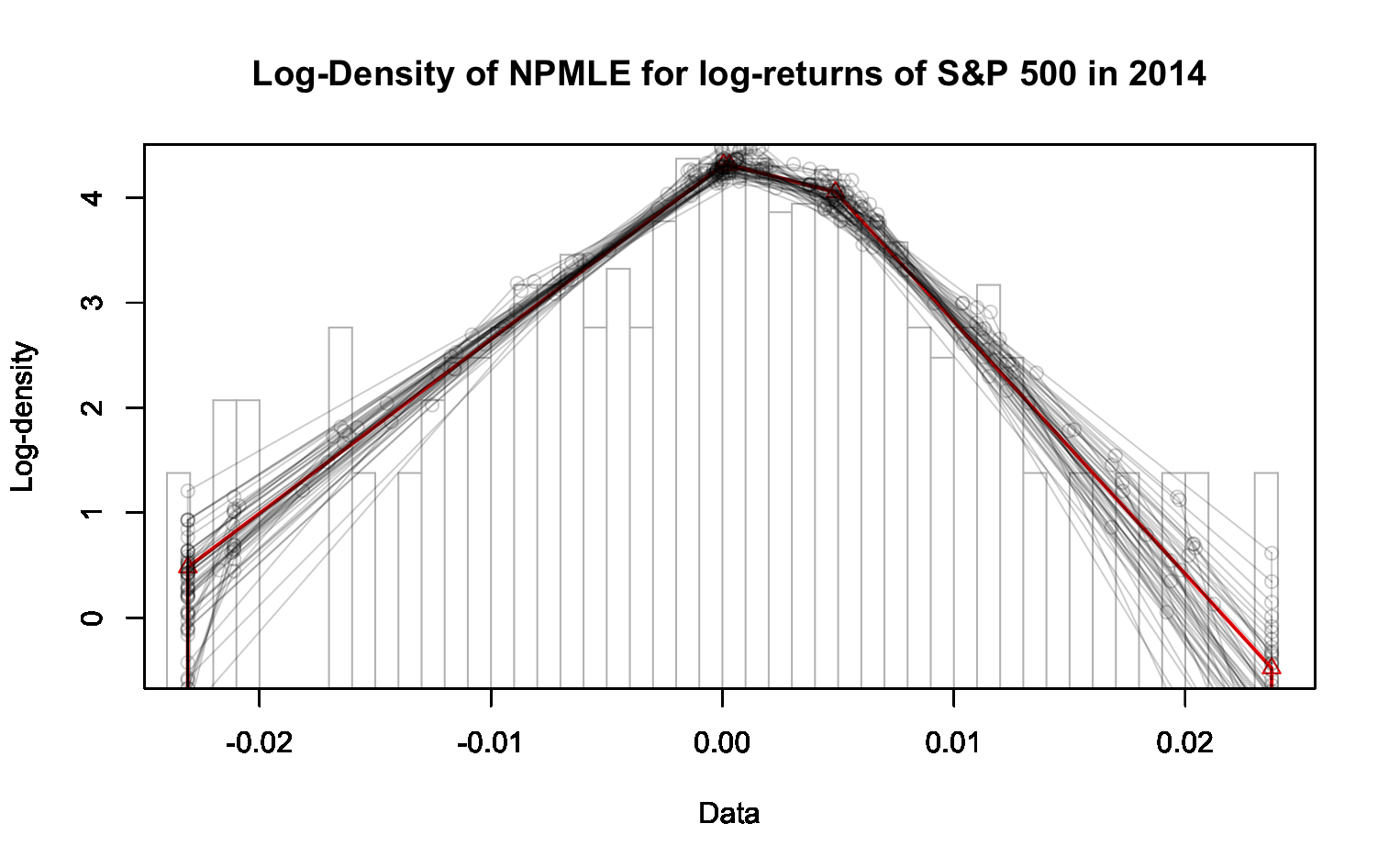}}
\subfigure[]{
\label{pl:log_volatility_log_dens}
\includegraphics[width=0.3\textwidth]{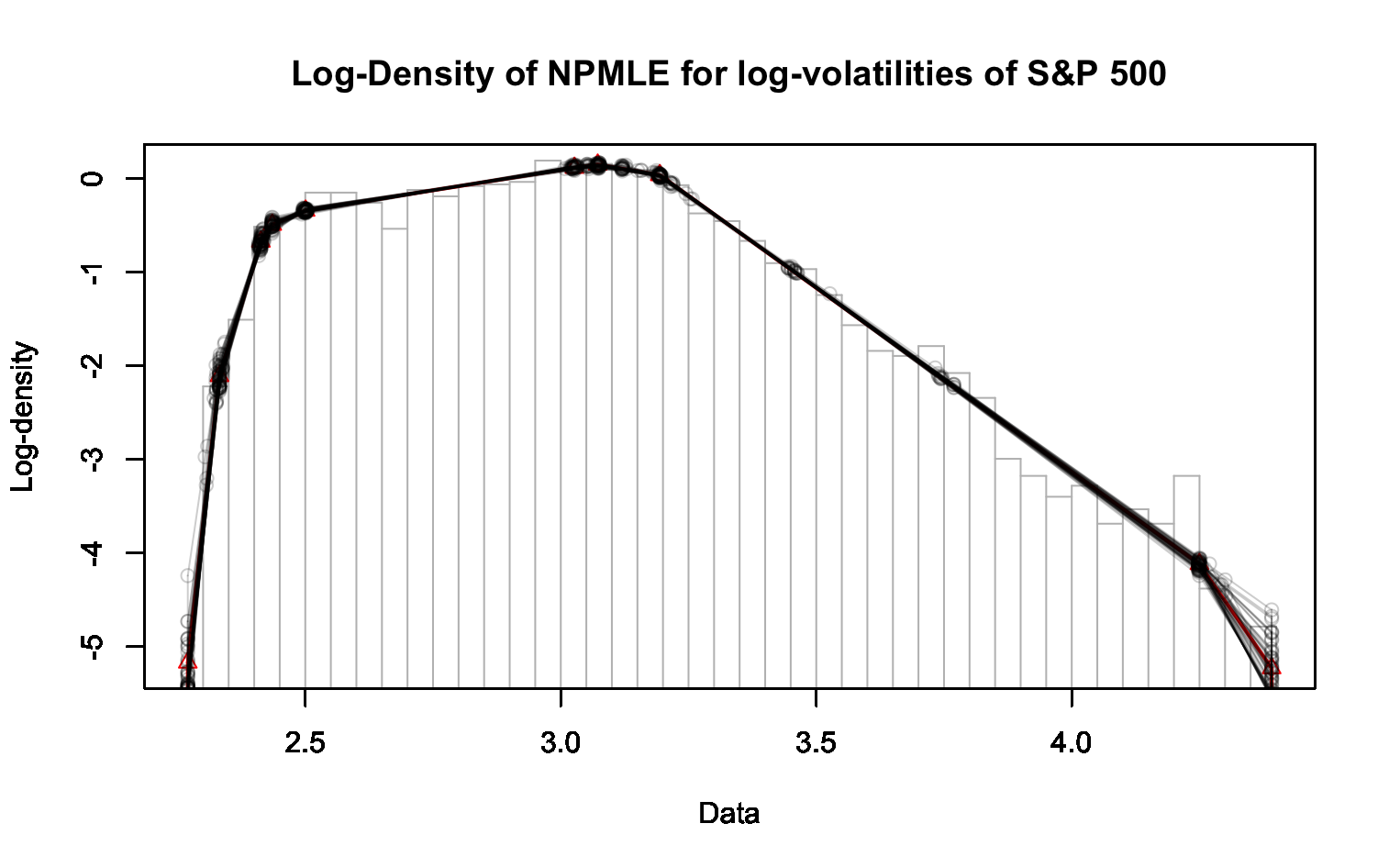}}
\caption{Real-world Data from S\&P 500. The first row is the posterior samples for the density functions of the log-returns in 2011, the log-returns in 2014 and log-volatilities. The second row shows the corresponding log-density functions of the posterior samples}
\label{pl:real-world_data}
\end{figure}

We first evaluate the NPMLEs, and then obtain 50 posterior samples for each of them, with each sample obtained by 1000 iterations of the martingale. We present the results in Figure~\ref{pl:real-world_data}. Again, due to the large sample sizes, the posterior uncertainty is quite tight about the NPMLE.

\begin{figure}[ht!]
\centering
\subfigure[]{
\label{pl:Comp_N0_S500_n50}
\includegraphics[width=0.3\textwidth]{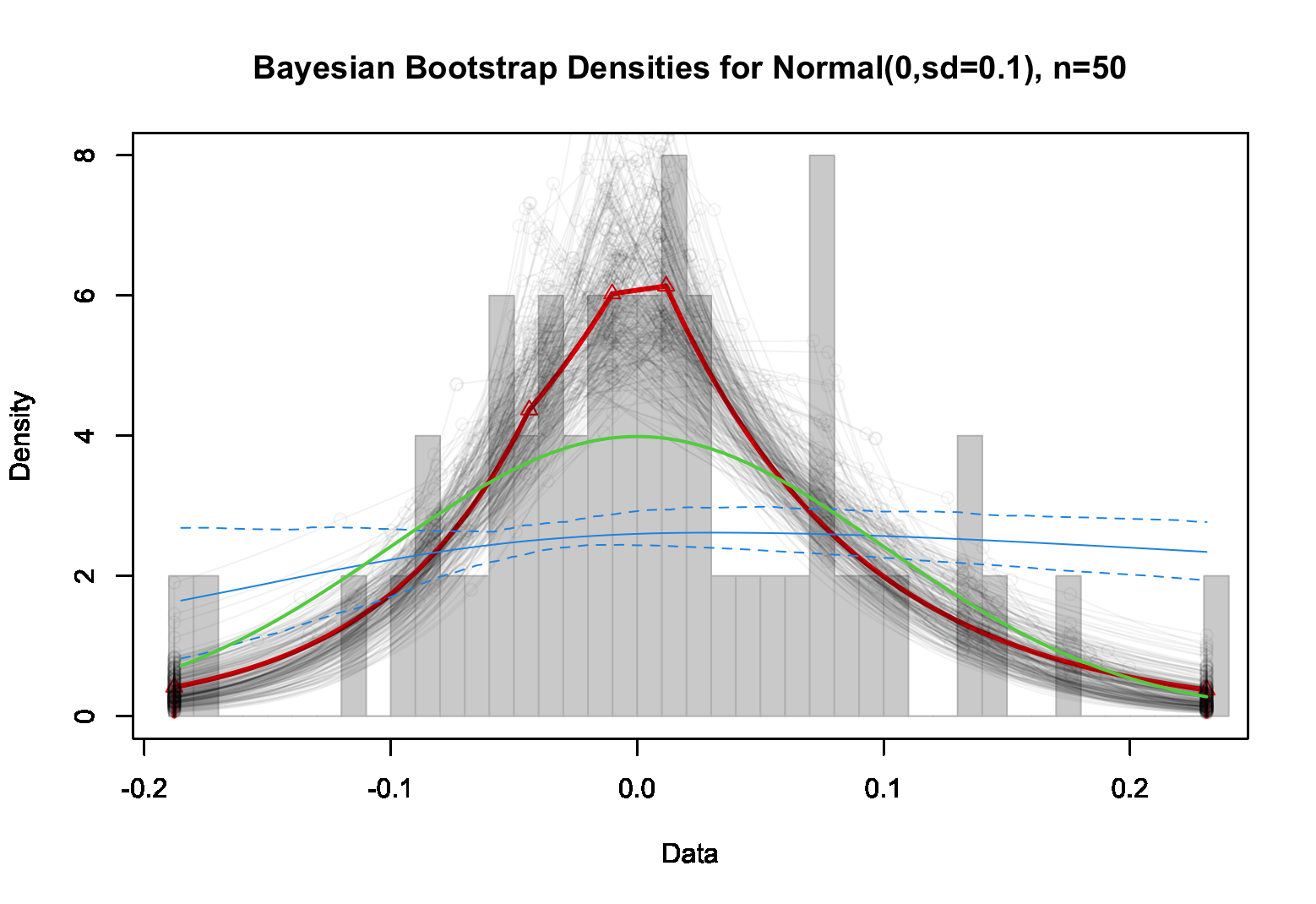}}
\subfigure[]{
\label{pl:Comp_N100_S500_n50}
\includegraphics[width=0.3\textwidth]{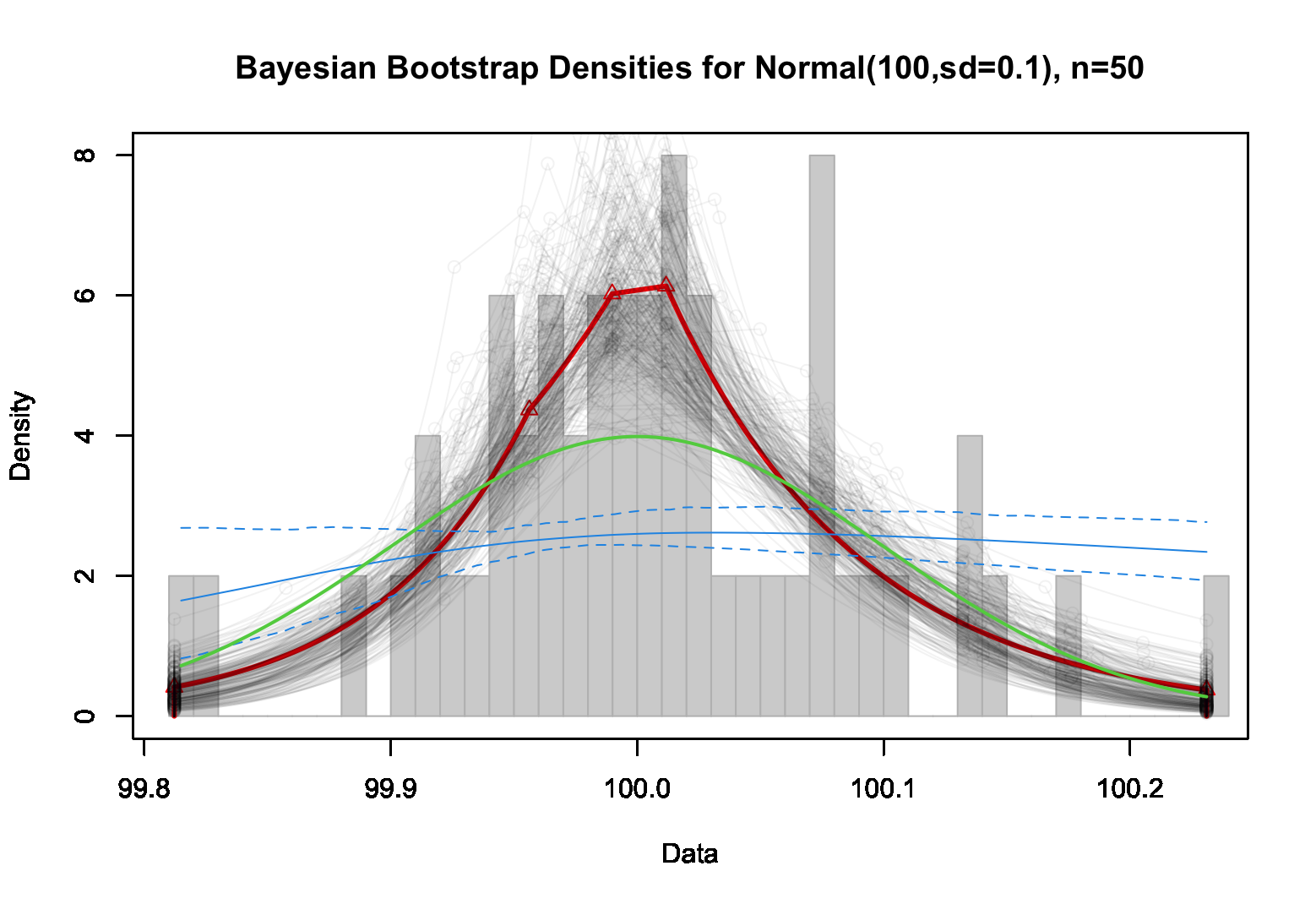}}
\subfigure[]{
\label{pl:Comp_G_S500_n50}
\includegraphics[width=0.3\textwidth]{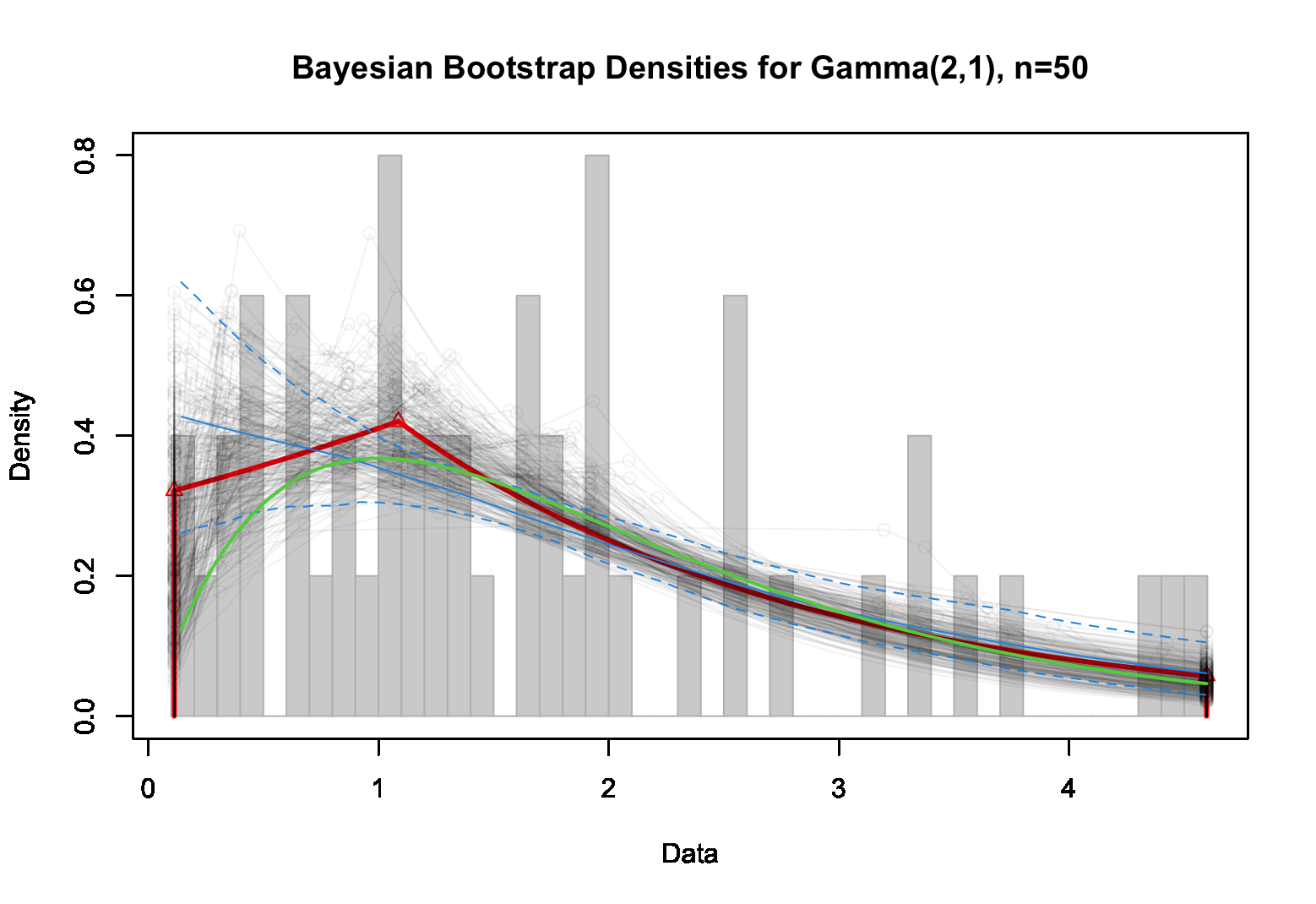}}
\subfigure[]{
\label{pl:Comp_N0_S500_n200}
\includegraphics[width=0.3\textwidth]{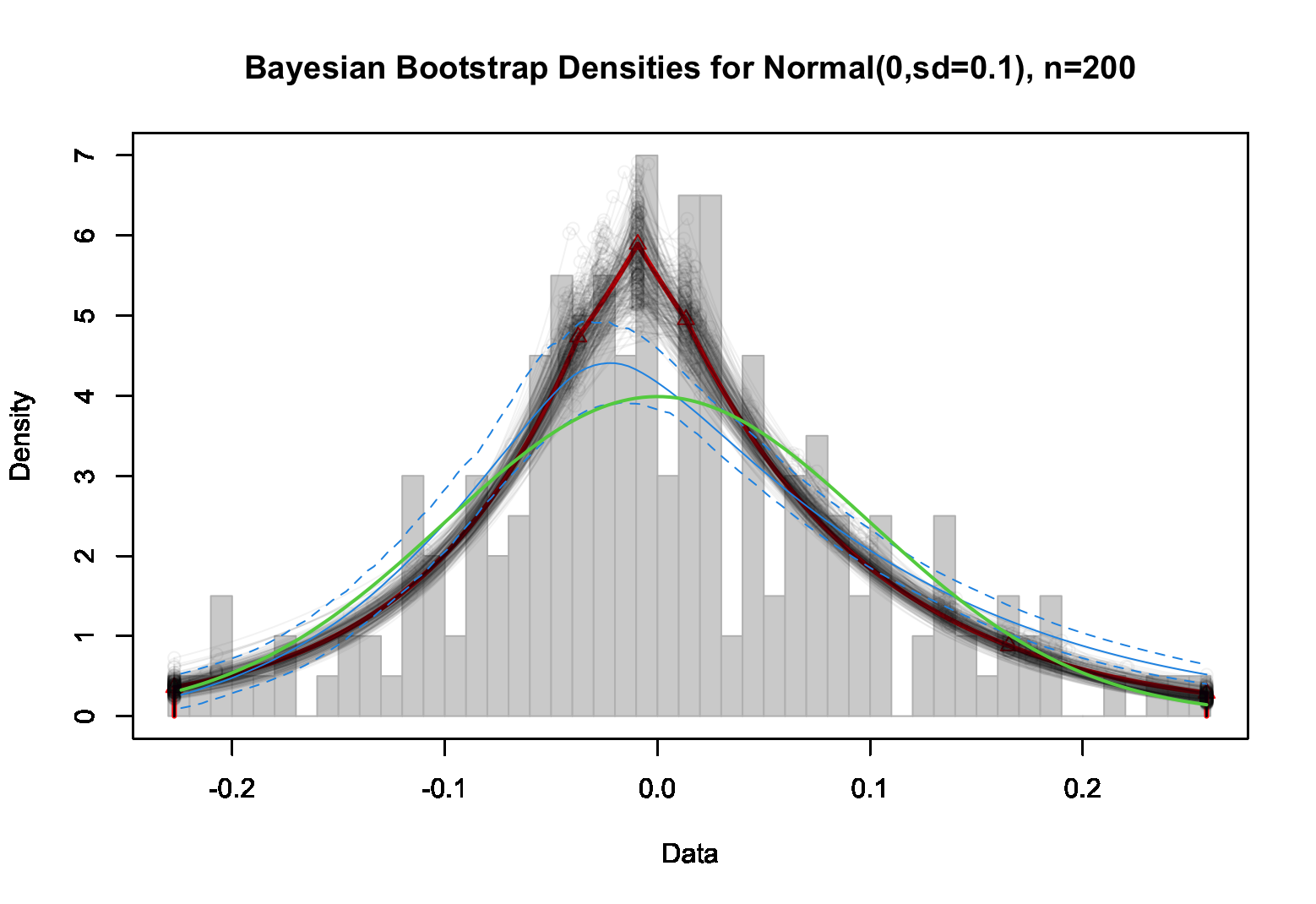}}
\subfigure[]{
\label{pl:Comp_N100_S500_n200}
\includegraphics[width=0.3\textwidth]{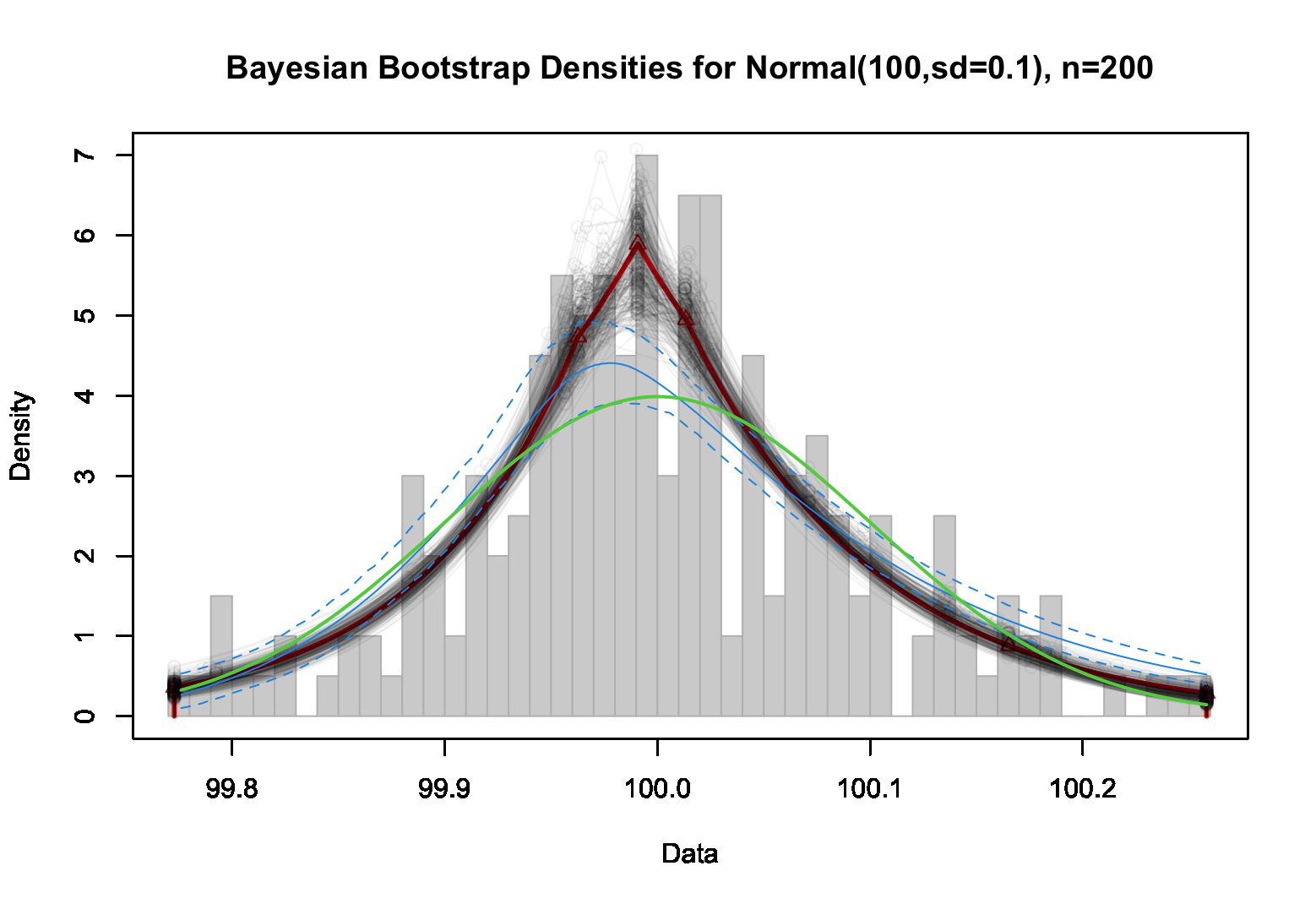}}
\subfigure[]{
\label{pl:Comp_G_S500_n200}
\includegraphics[width=0.3\textwidth]{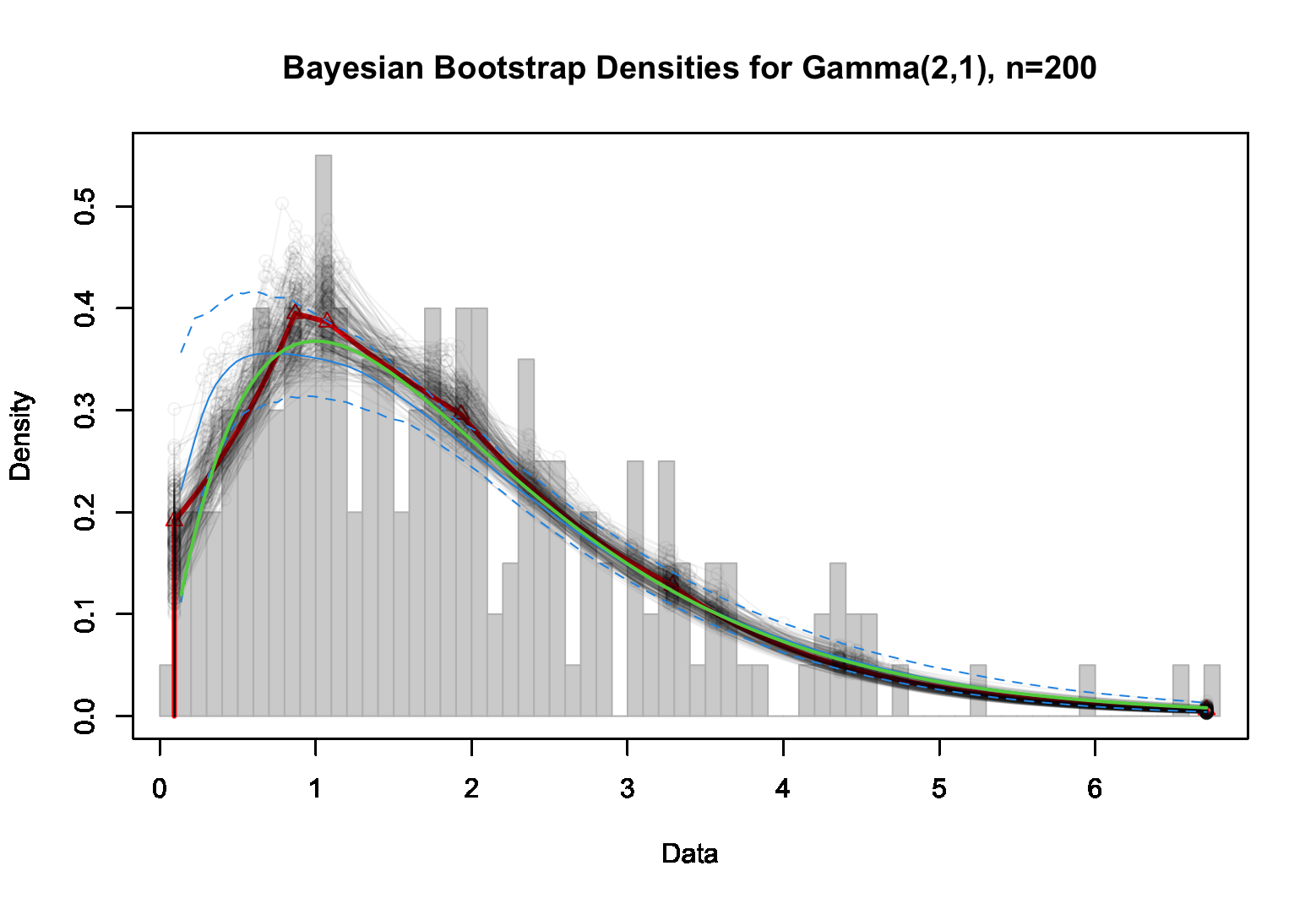}}
\subfigure[]{
\label{pl:Comp_N0_S500_n500}
\includegraphics[width=0.3\textwidth]{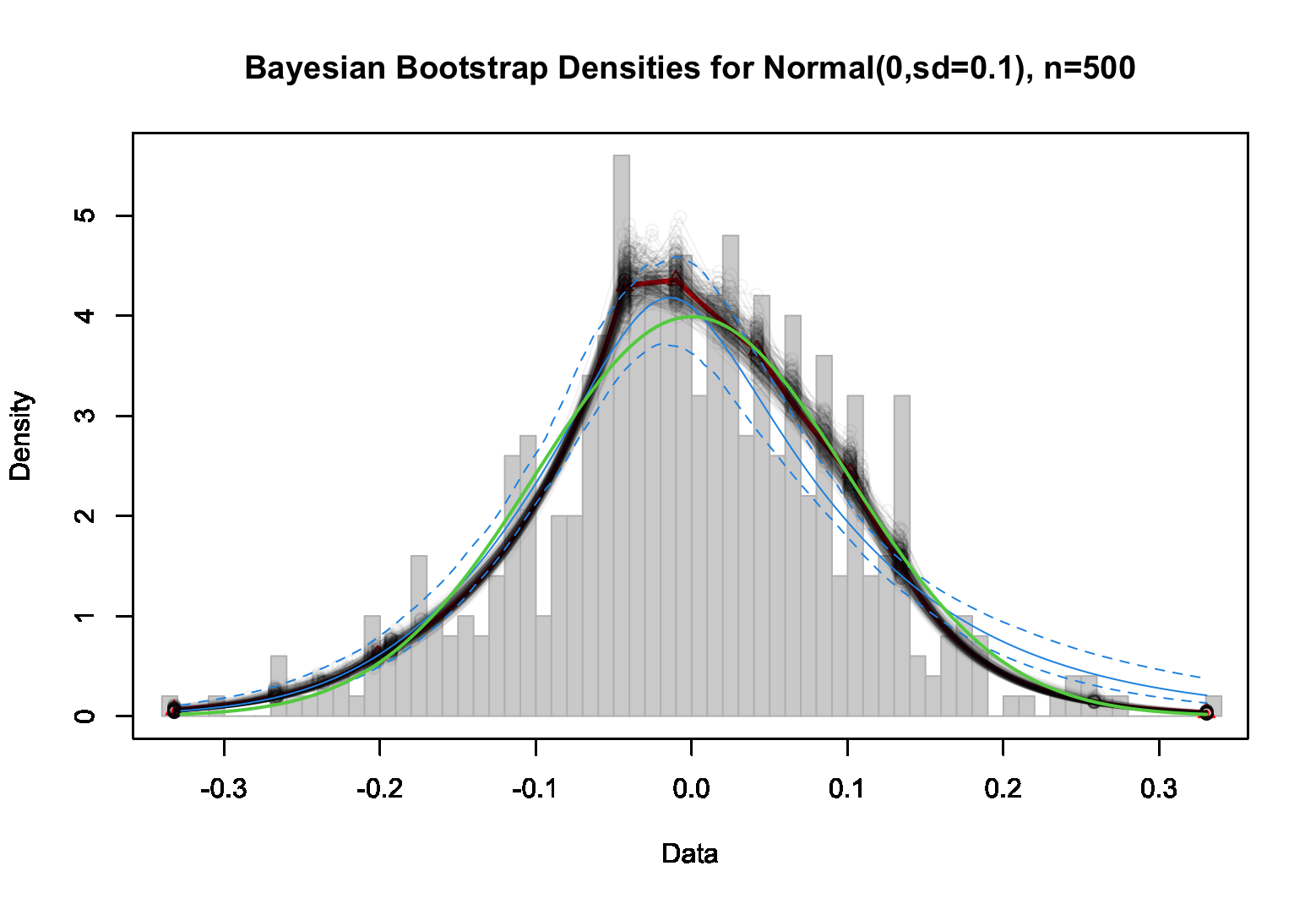}}
\subfigure[]{
\label{pl:Comp_N100_S500_n500}
\includegraphics[width=0.3\textwidth]{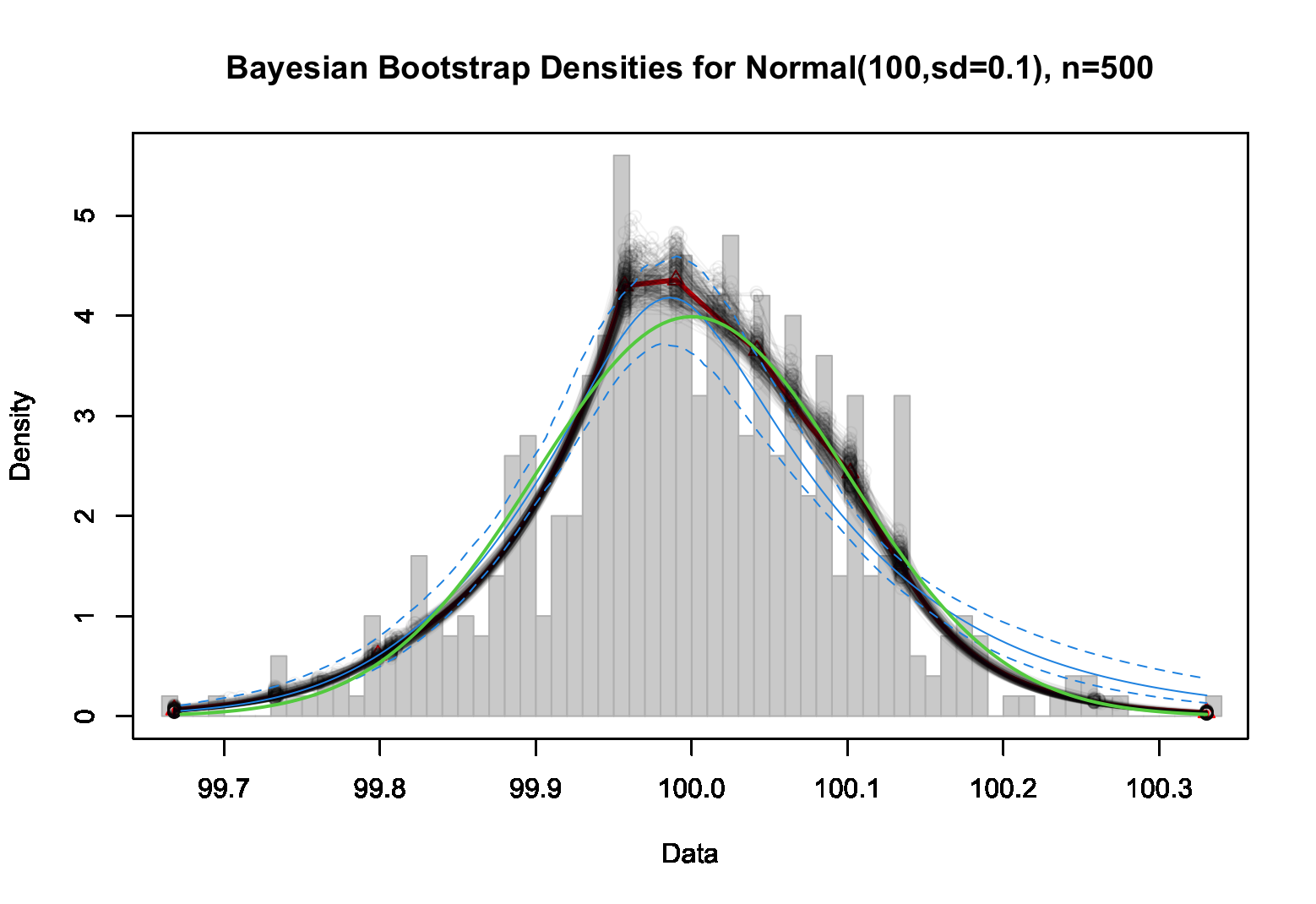}}
\subfigure[]{
\label{pl:Comp_G_S500_n500}
\includegraphics[width=0.3\textwidth]{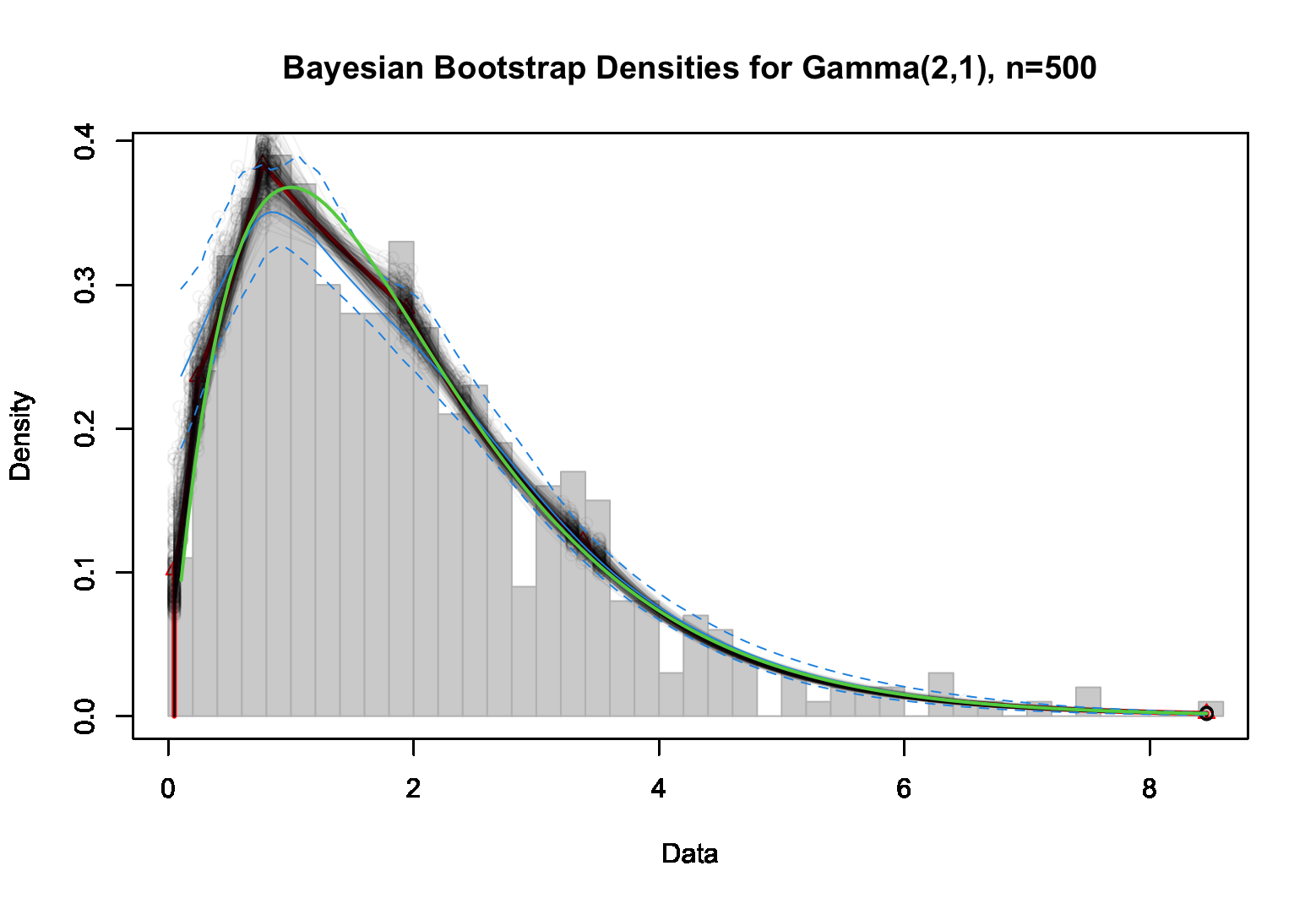}}
\subfigure[]{
\label{pl:Comp_N0_S500_n2500}
\includegraphics[width=0.3\textwidth]{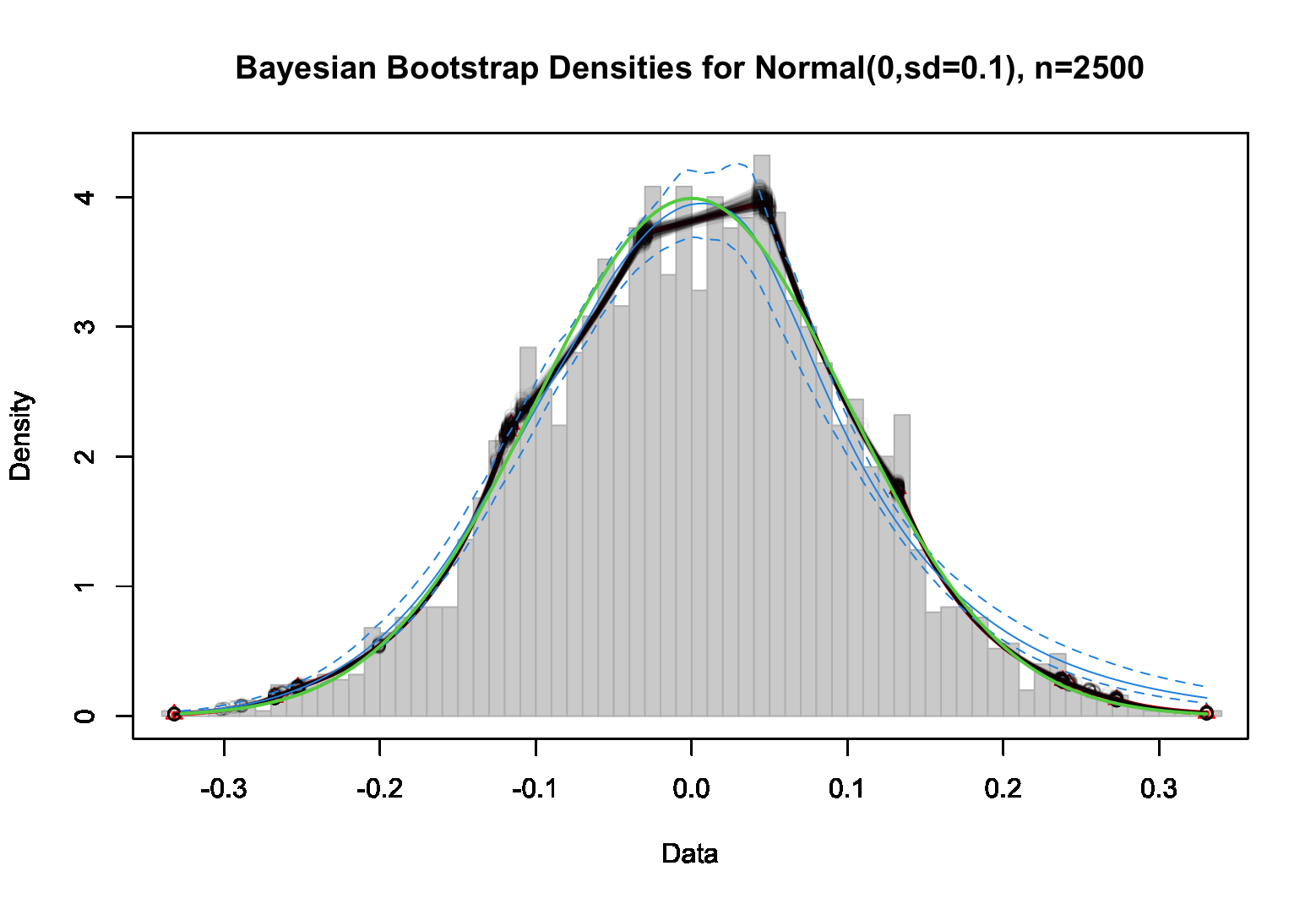}}
\subfigure[]{
\label{pl:Comp_N100_S500_n2500}
\includegraphics[width=0.3\textwidth]{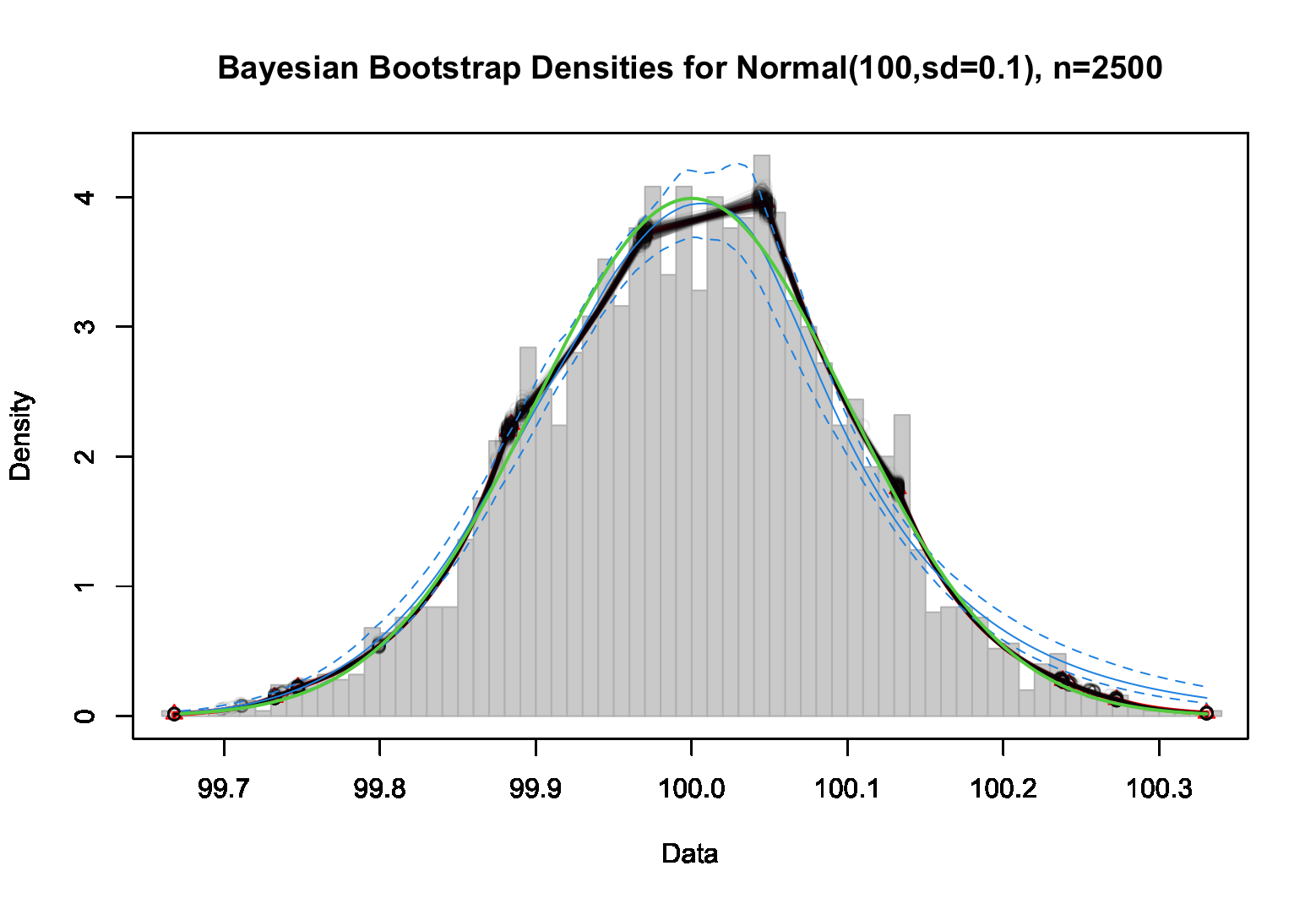}}
\subfigure[]{
\label{pl:Comp_G_S500_n2500}
\includegraphics[width=0.3\textwidth]{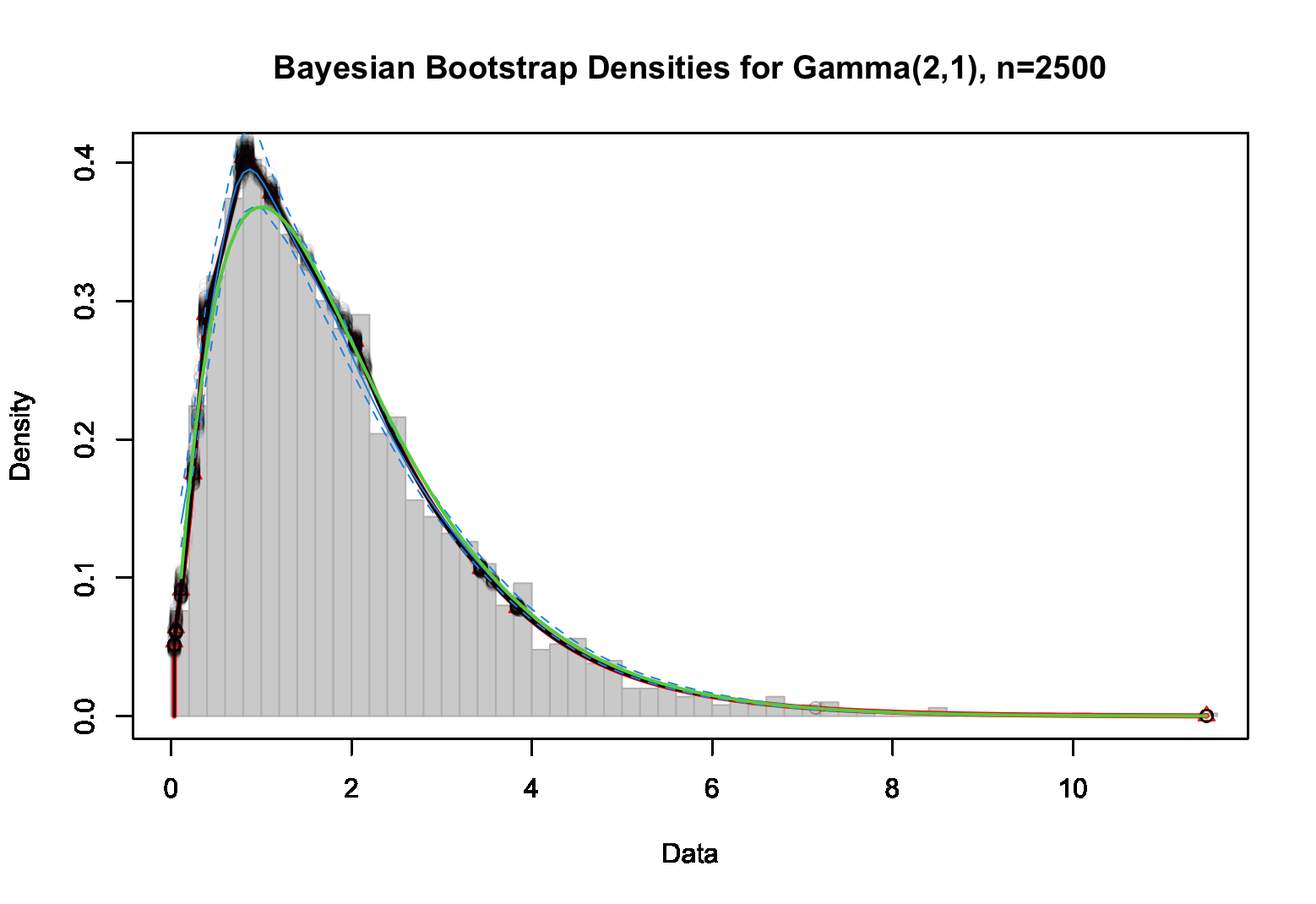}}
\caption{Histograms of sampled datasets, true densities (green), NPMLEs (red), NPMLE bootstrap samples (black), posterior means and pointwise credible bands using exponentiated Dirichlet process mixture priors (solid and dashed blue). Each plot shows 250 NPMLE bootstrap samples and 250 posteriors samples by MCMC. Each row represents the case with different initial sample sizes, which equals to 50, 200, 500, 2500 separately. Each column represents the case with different sampling distributions, which equals to $\mathcal{N}(0,0.1^2)$, $\mathcal{N}(100,0.1^2)$, and Gamma(2,1) respectively.}
\label{pl:Comp_S500}
\end{figure}

\begin{figure}[ht!]
\centering
\subfigure[]{
\label{pl:Comp_N0_S1000_n50}
\includegraphics[width=0.3\textwidth]{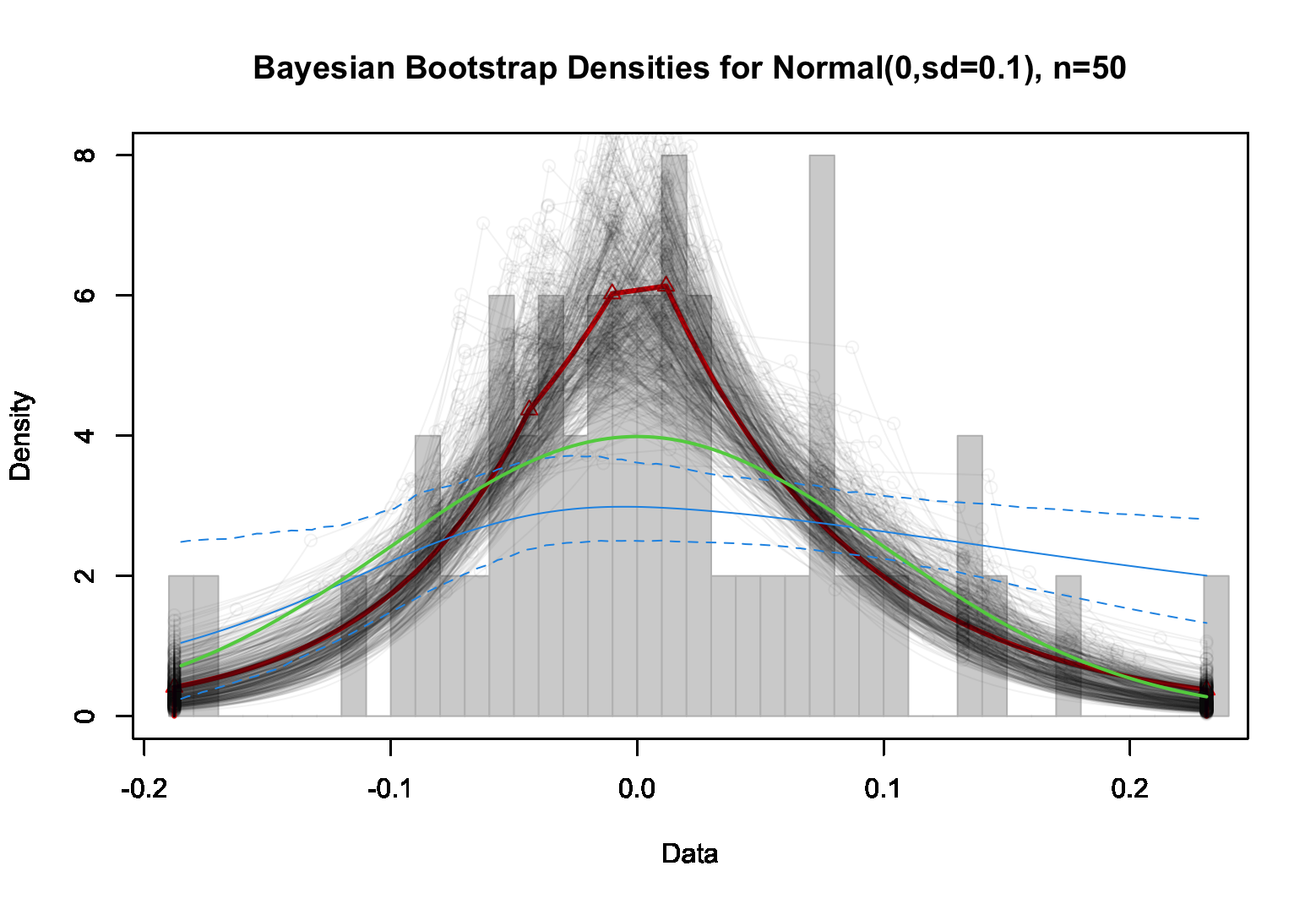}}
\subfigure[]{
\label{pl:Comp_N100_S1000_n50}
\includegraphics[width=0.3\textwidth]{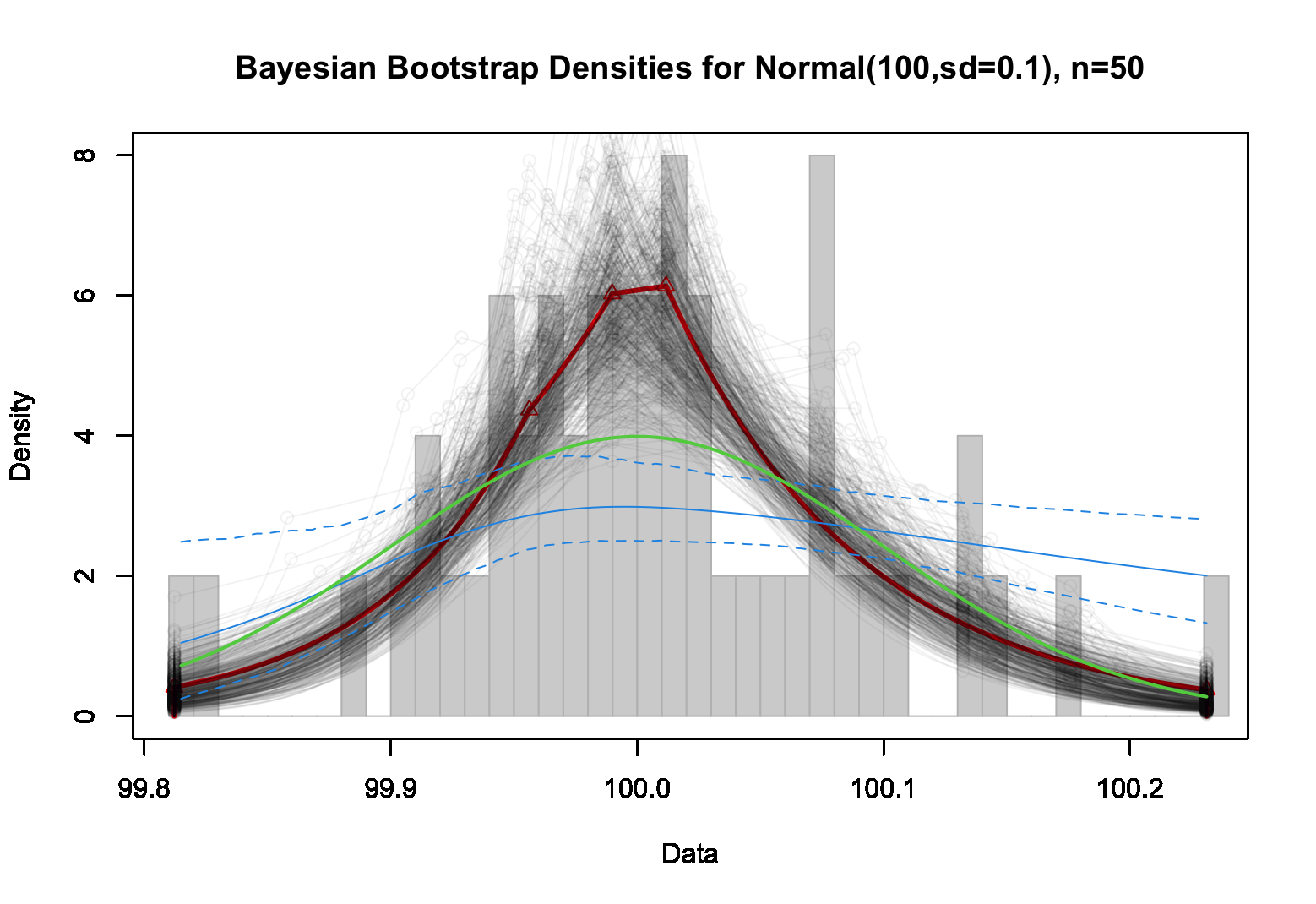}}
\subfigure[]{
\label{pl:Comp_G_S1000_n50}
\includegraphics[width=0.3\textwidth]{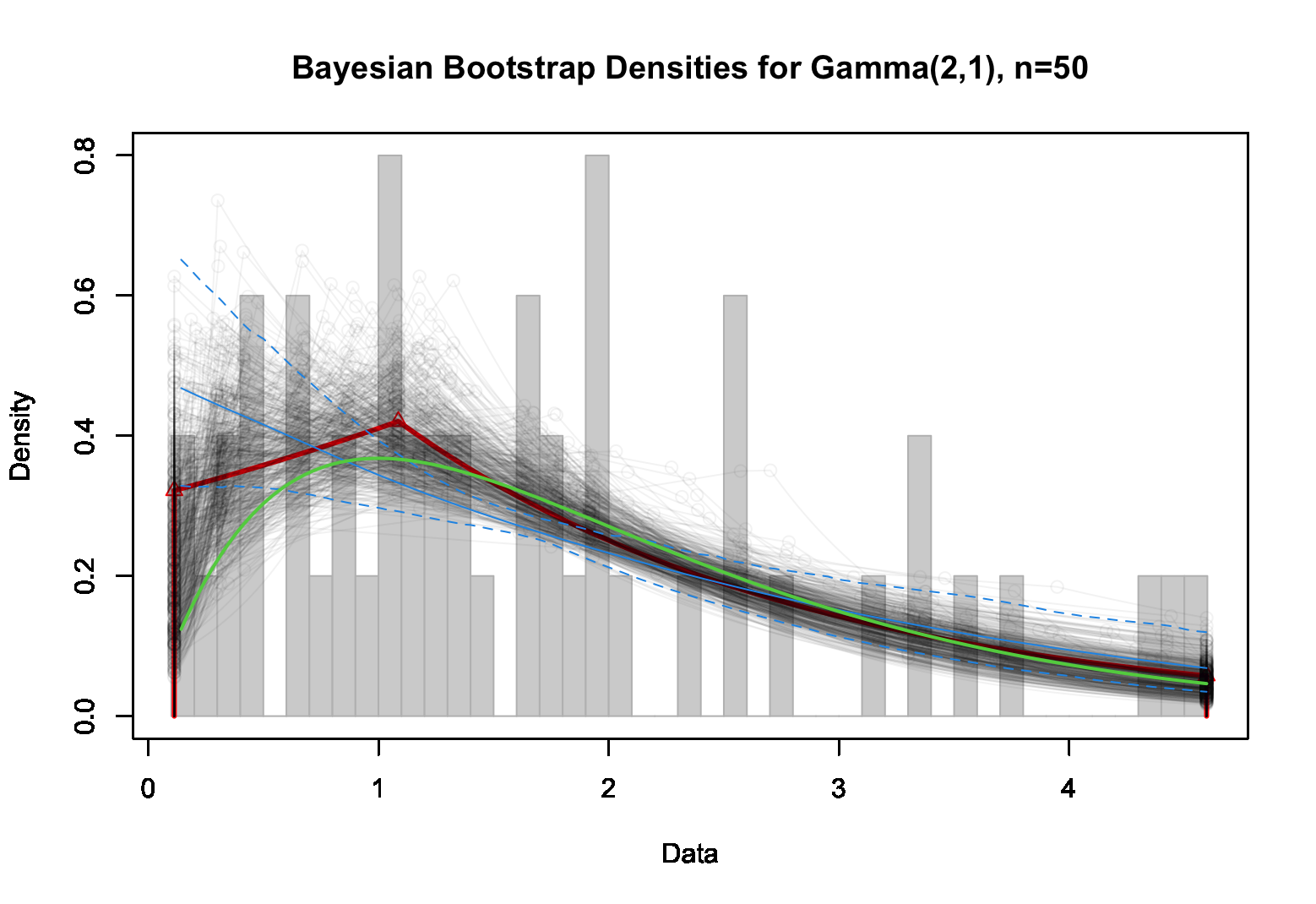}}
\subfigure[]{
\label{pl:Comp_N0_S1000_n200}
\includegraphics[width=0.3\textwidth]{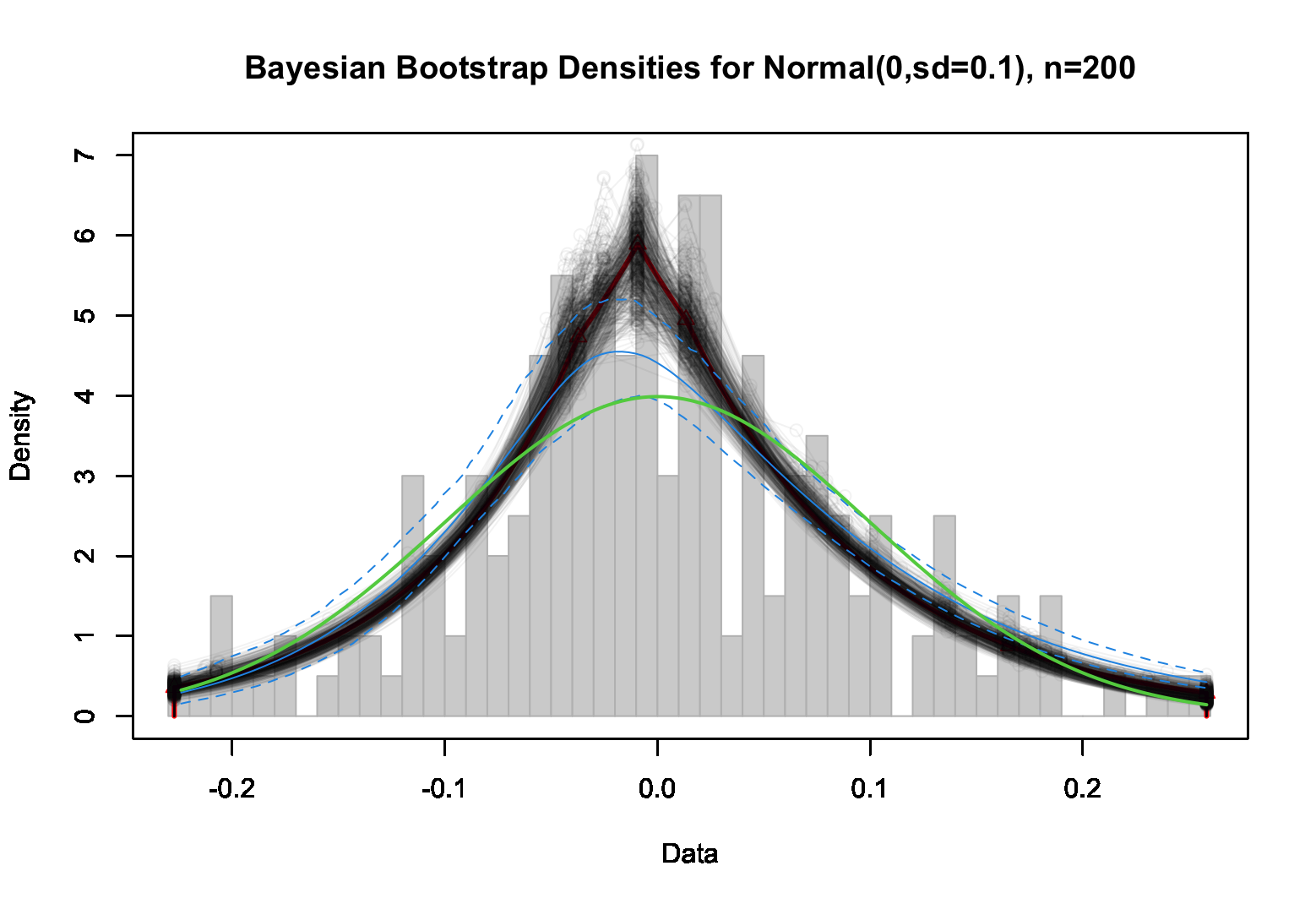}}
\subfigure[]{
\label{pl:Comp_N100_S1000_n200}
\includegraphics[width=0.3\textwidth]{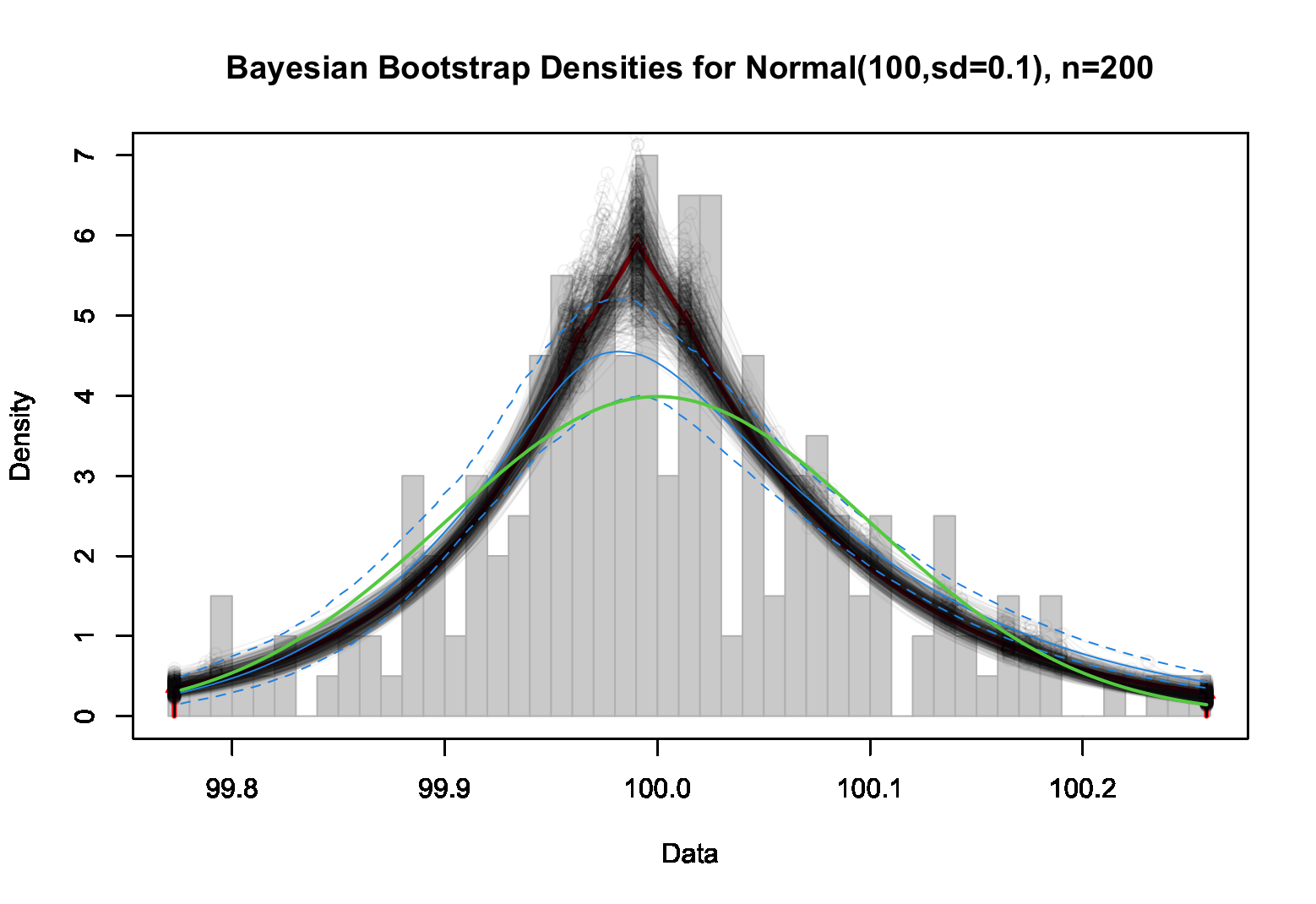}}
\subfigure[]{
\label{pl:Comp_G_S1000_n200}
\includegraphics[width=0.3\textwidth]{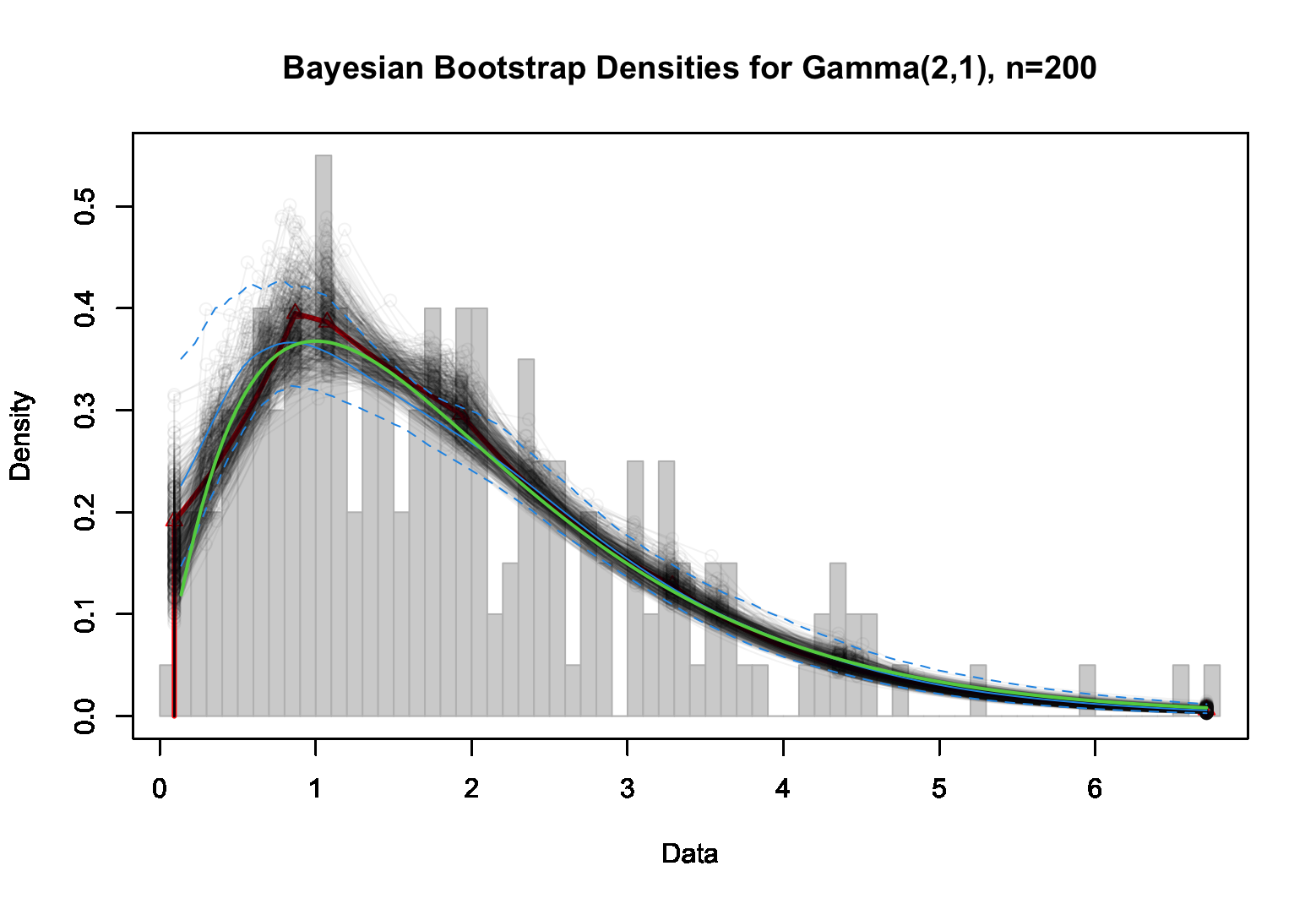}}
\subfigure[]{
\label{pl:Comp_N0_S1000_n500}
\includegraphics[width=0.3\textwidth]{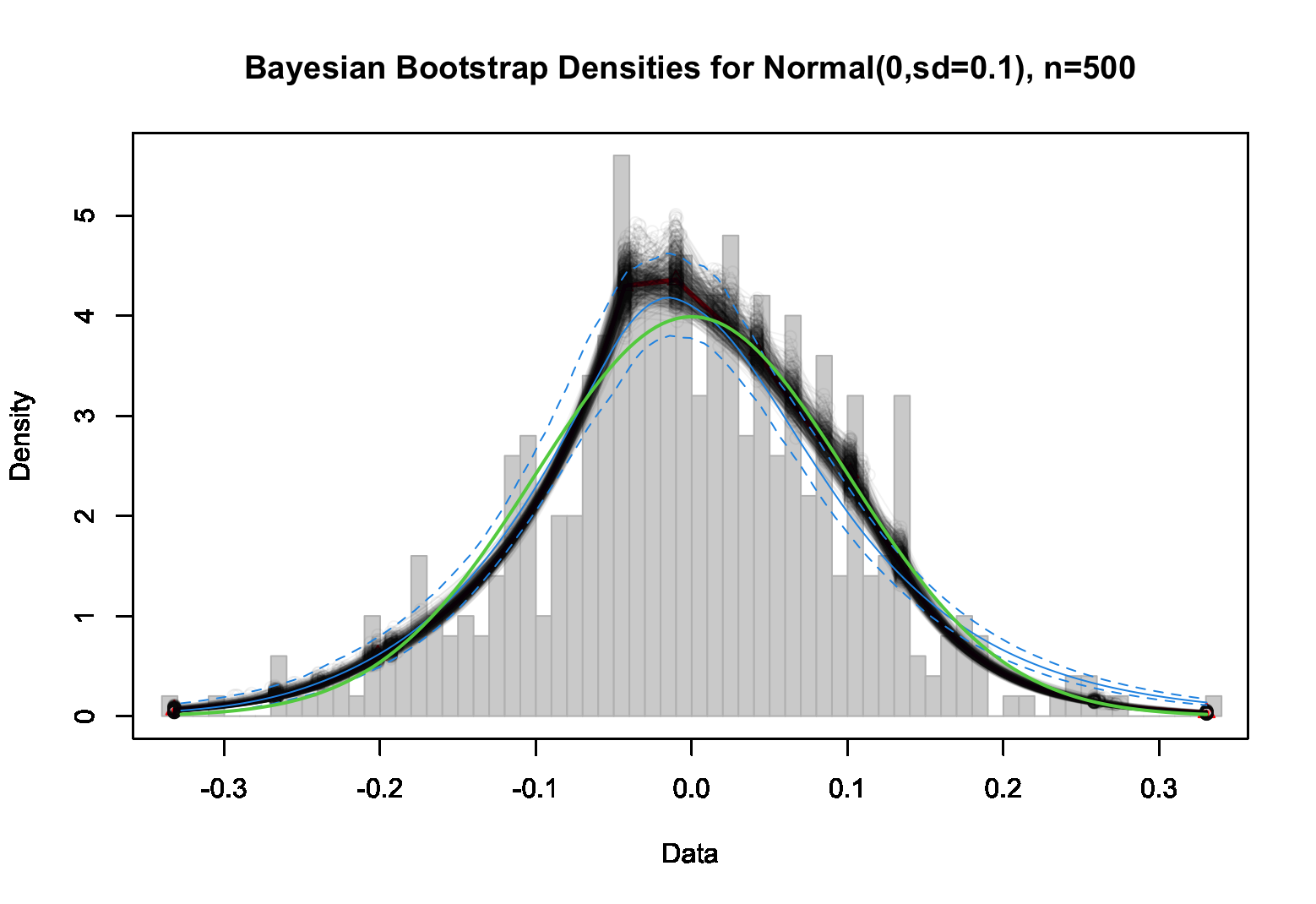}}
\subfigure[]{
\label{pl:Comp_N100_S1000_n500}
\includegraphics[width=0.3\textwidth]{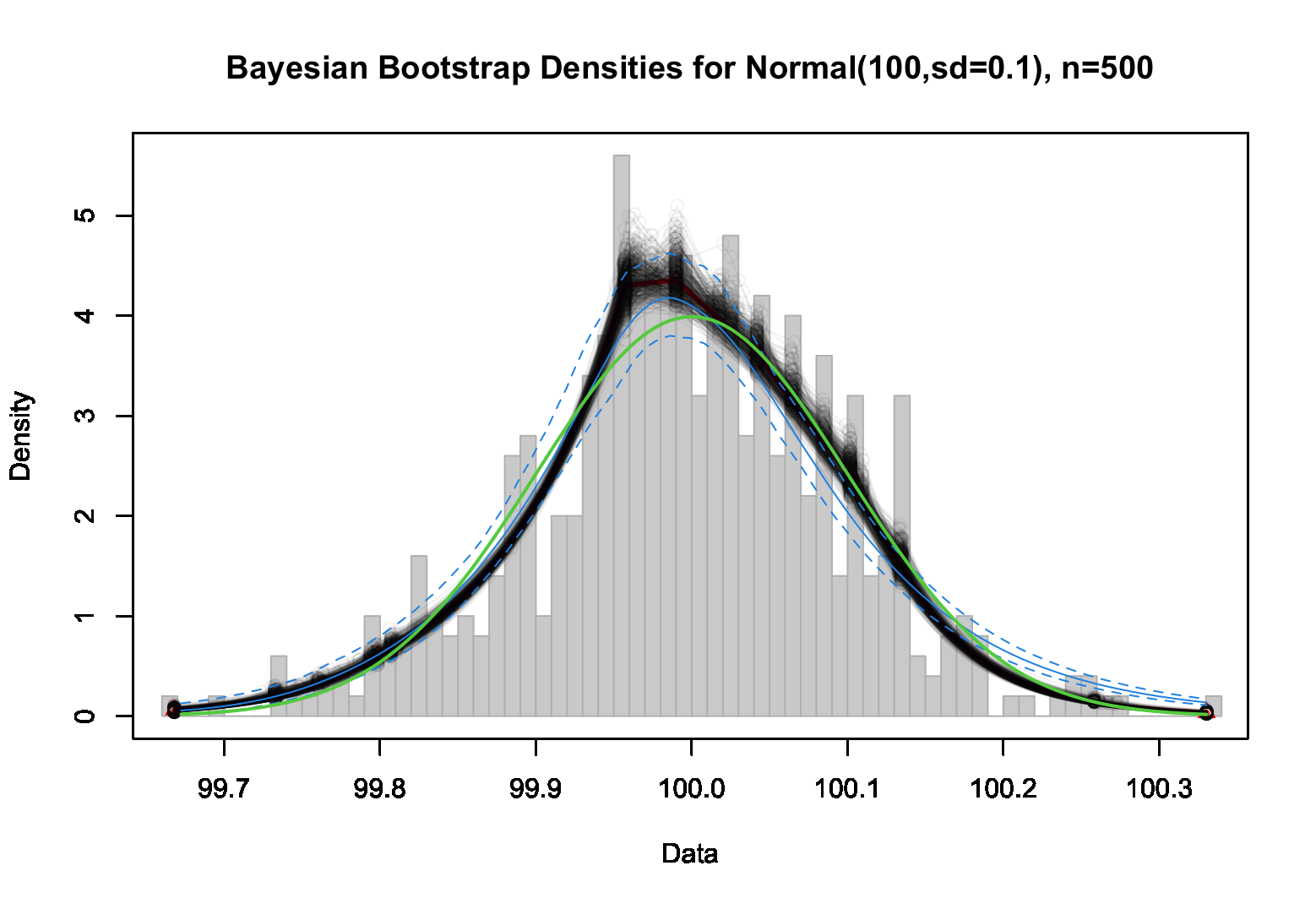}}
\subfigure[]{
\label{pl:Comp_G_S1000_n500}
\includegraphics[width=0.3\textwidth]{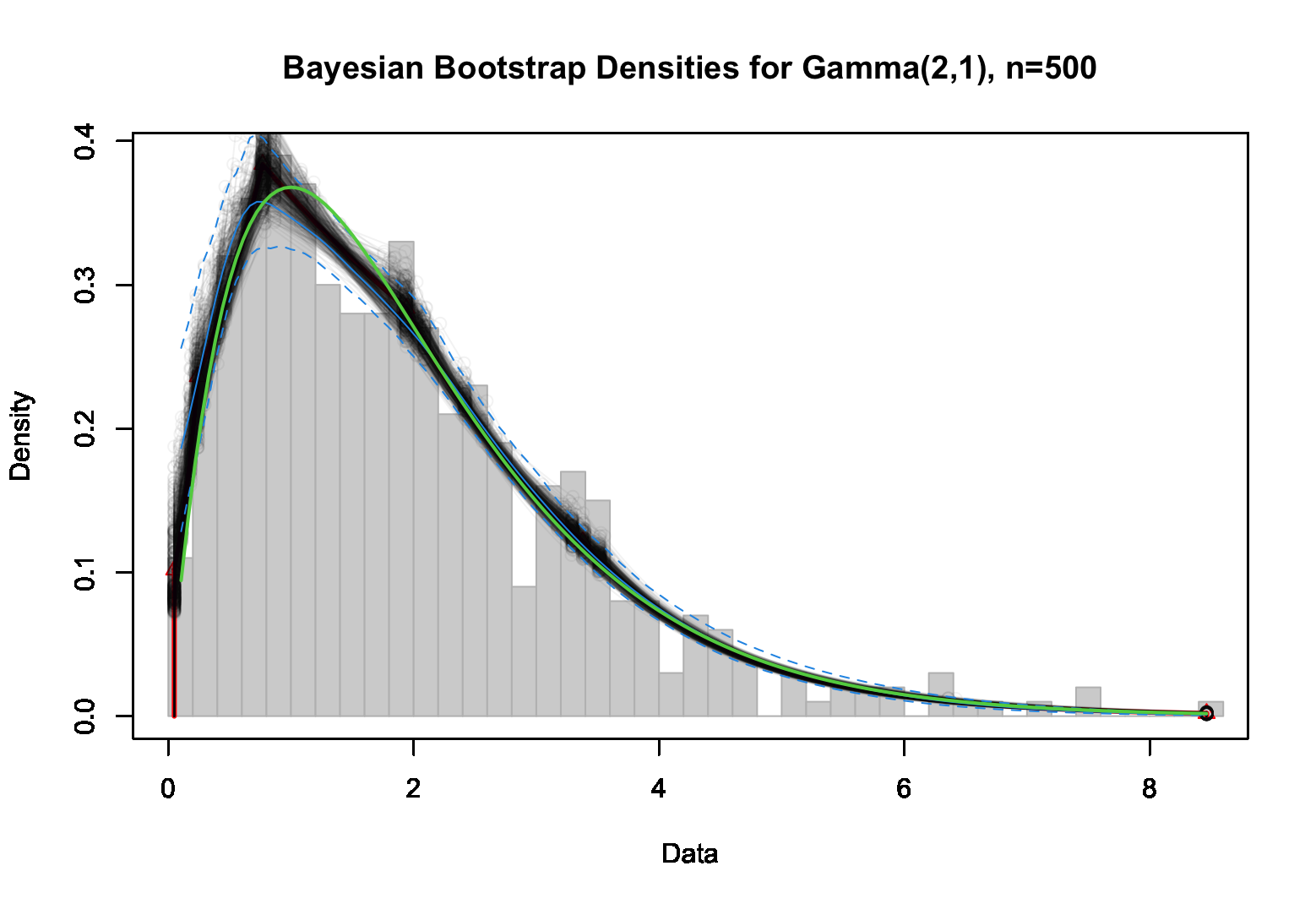}}
\subfigure[]{
\label{pl:Comp_N0_S1000_n2500}
\includegraphics[width=0.3\textwidth]{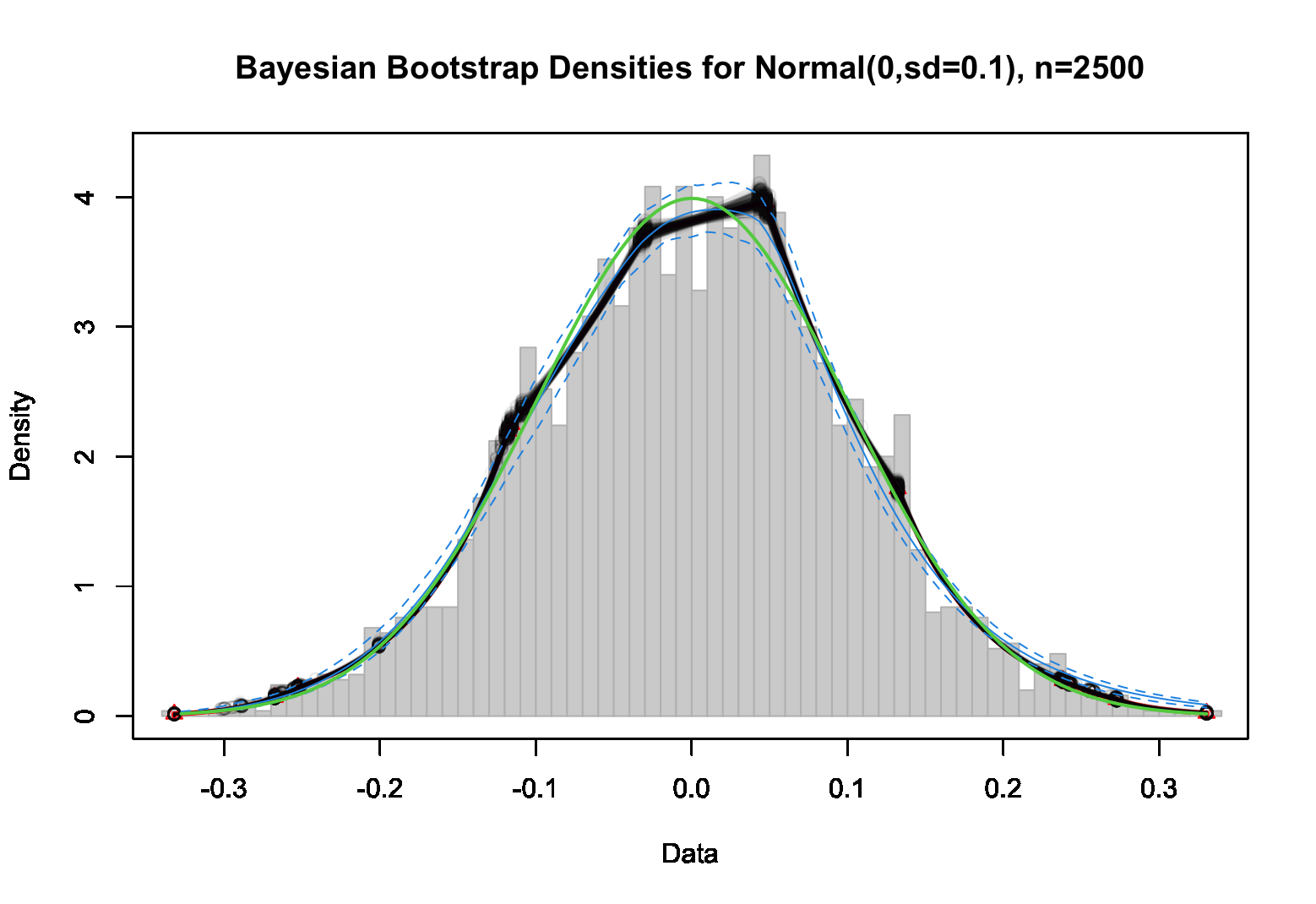}}
\subfigure[]{
\label{pl:Comp_N100_S1000_n2500}
\includegraphics[width=0.3\textwidth]{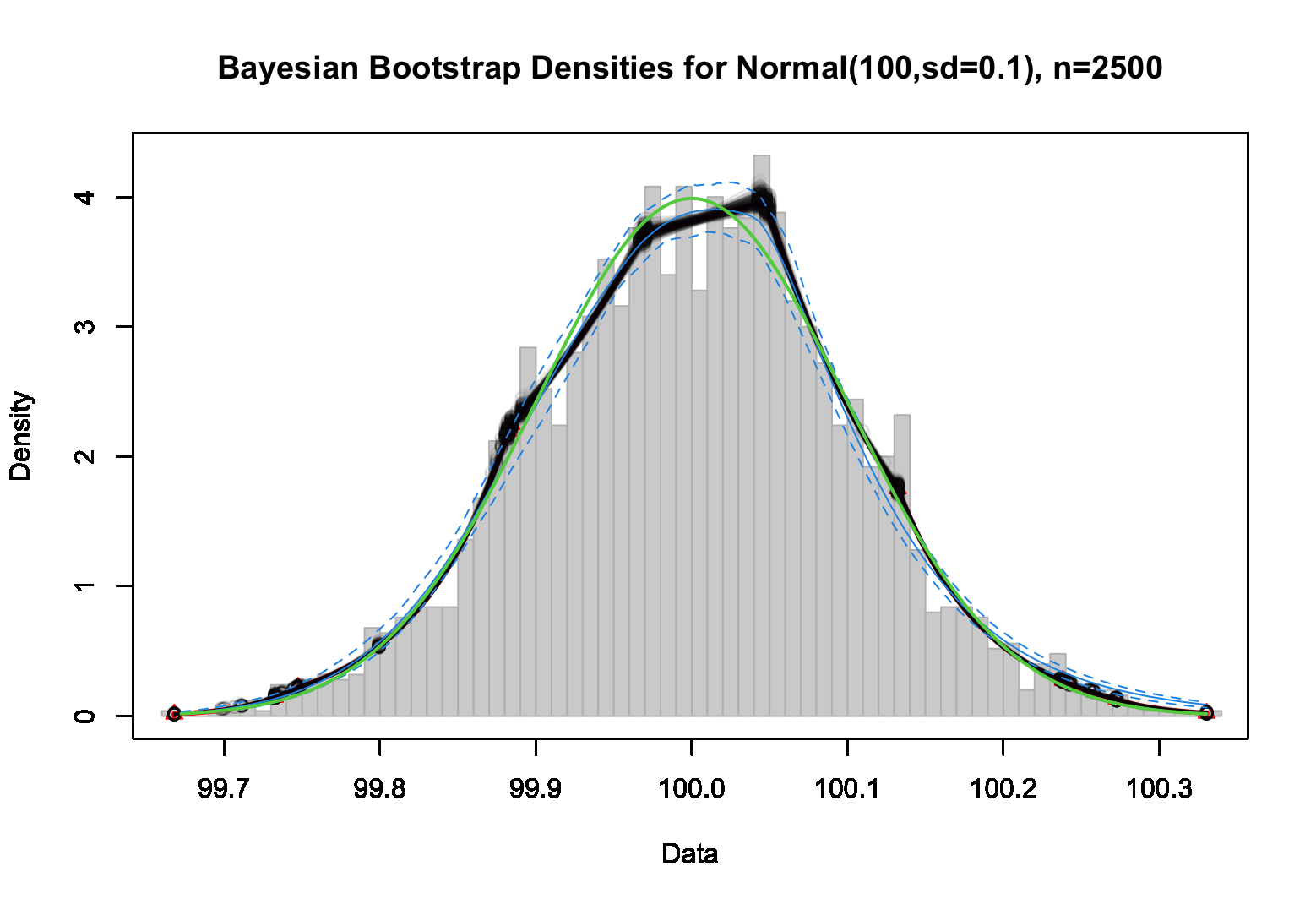}}
\subfigure[]{
\label{pl:Comp_G_S1000_n2500}
\includegraphics[width=0.3\textwidth]{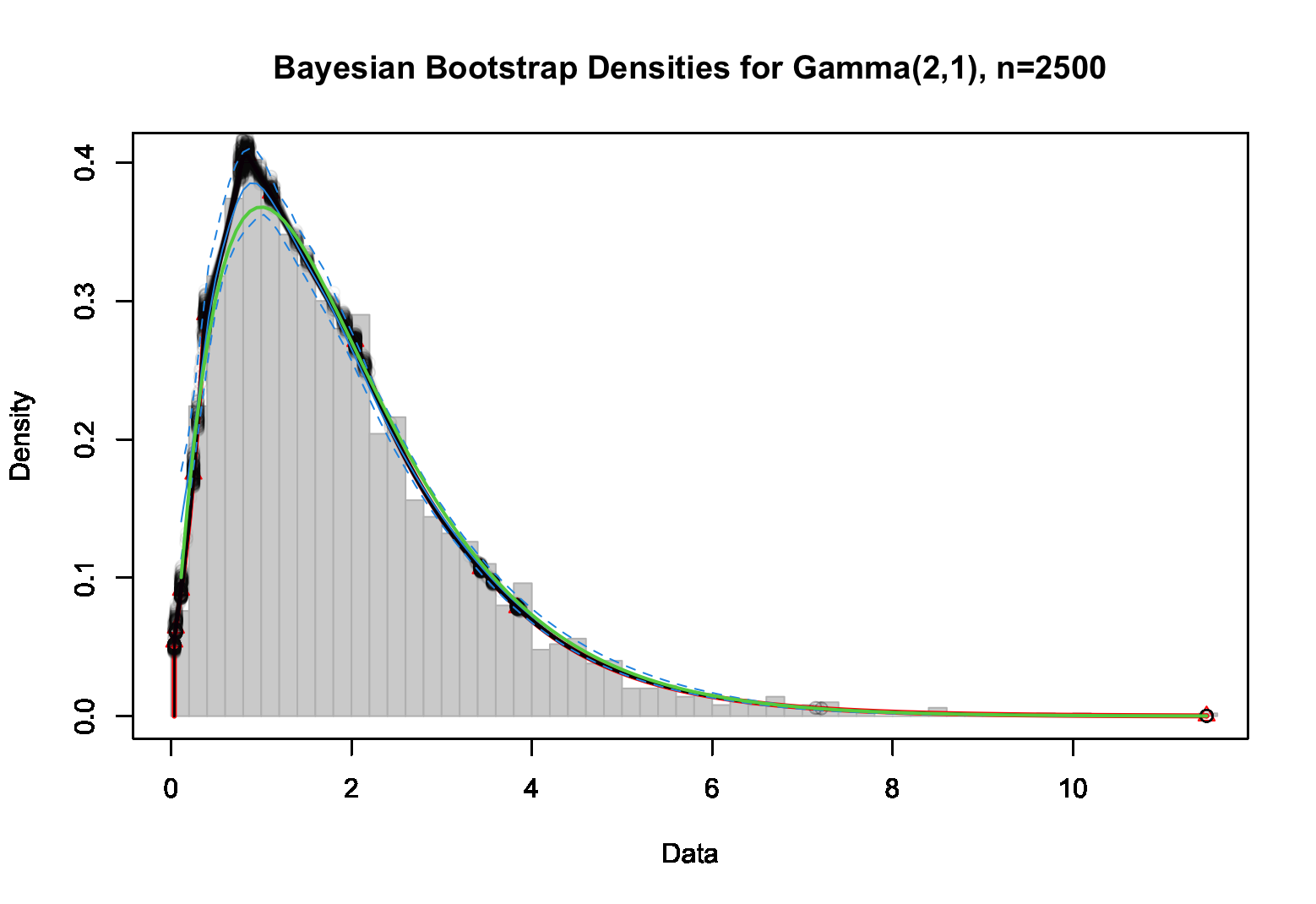}}
\caption{Histograms of sampled datasets, true densities (green), NPMLEs (red), NPMLE bootstrap samples (black), posterior means and pointwise credible bands using exponentiated Dirichlet process mixture priors (solid and dashed blue). Each plot shows 500 NPMLE bootstrap samples and 500 posteriors samples by MCMC. Each row represents the case with different initial sample sizes, which equals to 50, 200, 500, 2500 separately. Each column represents the case with different sampling distributions, which equals to $\mathcal{N}(0,0.1^2)$, $\mathcal{N}(100,0.1^2)$, and Gamma(2,1) respectively.}
\label{pl:Comp_S1000}
\end{figure}

\begin{figure}[ht!]
\centering
\subfigure[]{
\label{pl:Comp_N0_S10000_n50}
\includegraphics[width=0.3\textwidth]{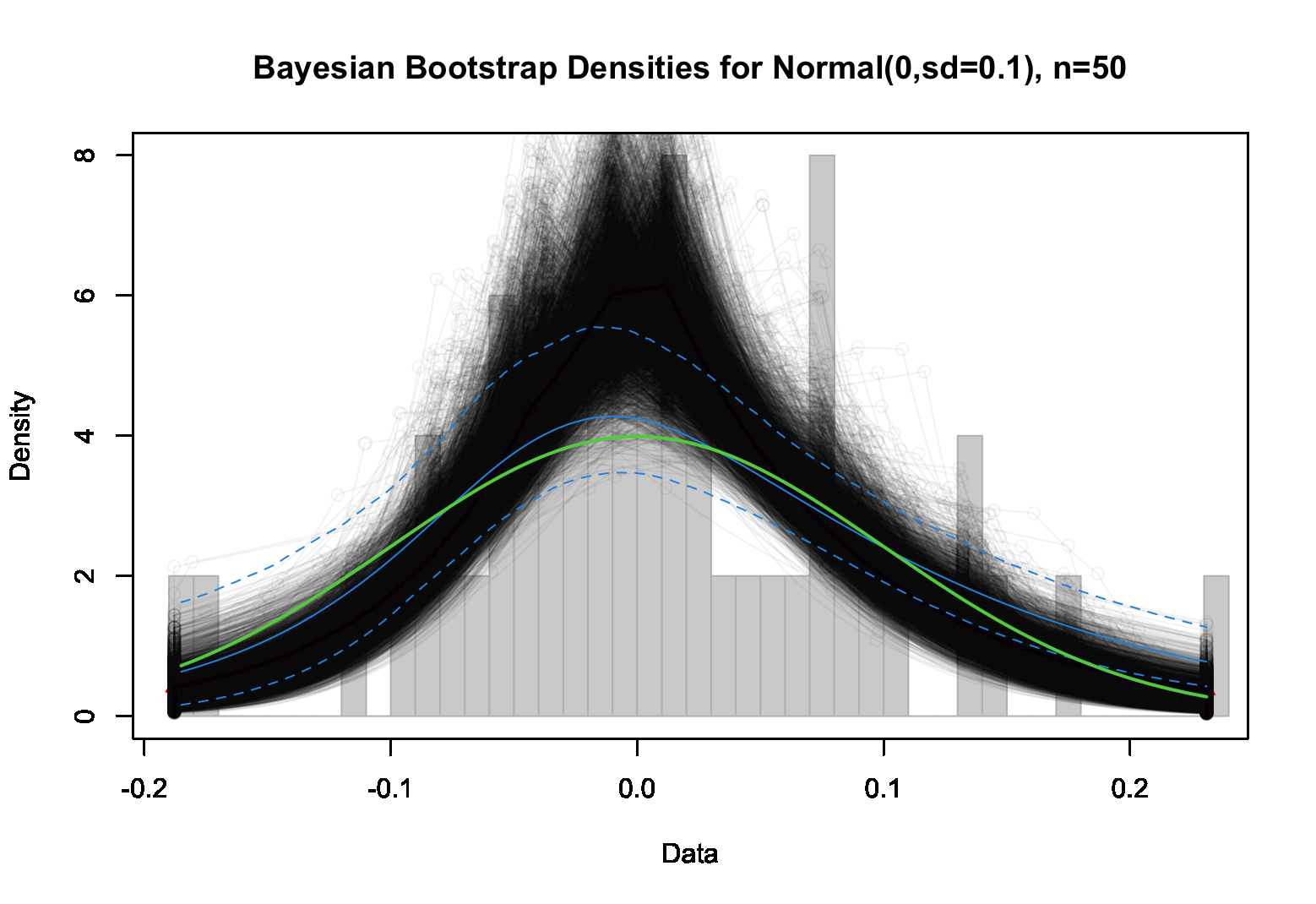}}
\subfigure[]{
\label{pl:Comp_N100_S10000_n50}
\includegraphics[width=0.3\textwidth]{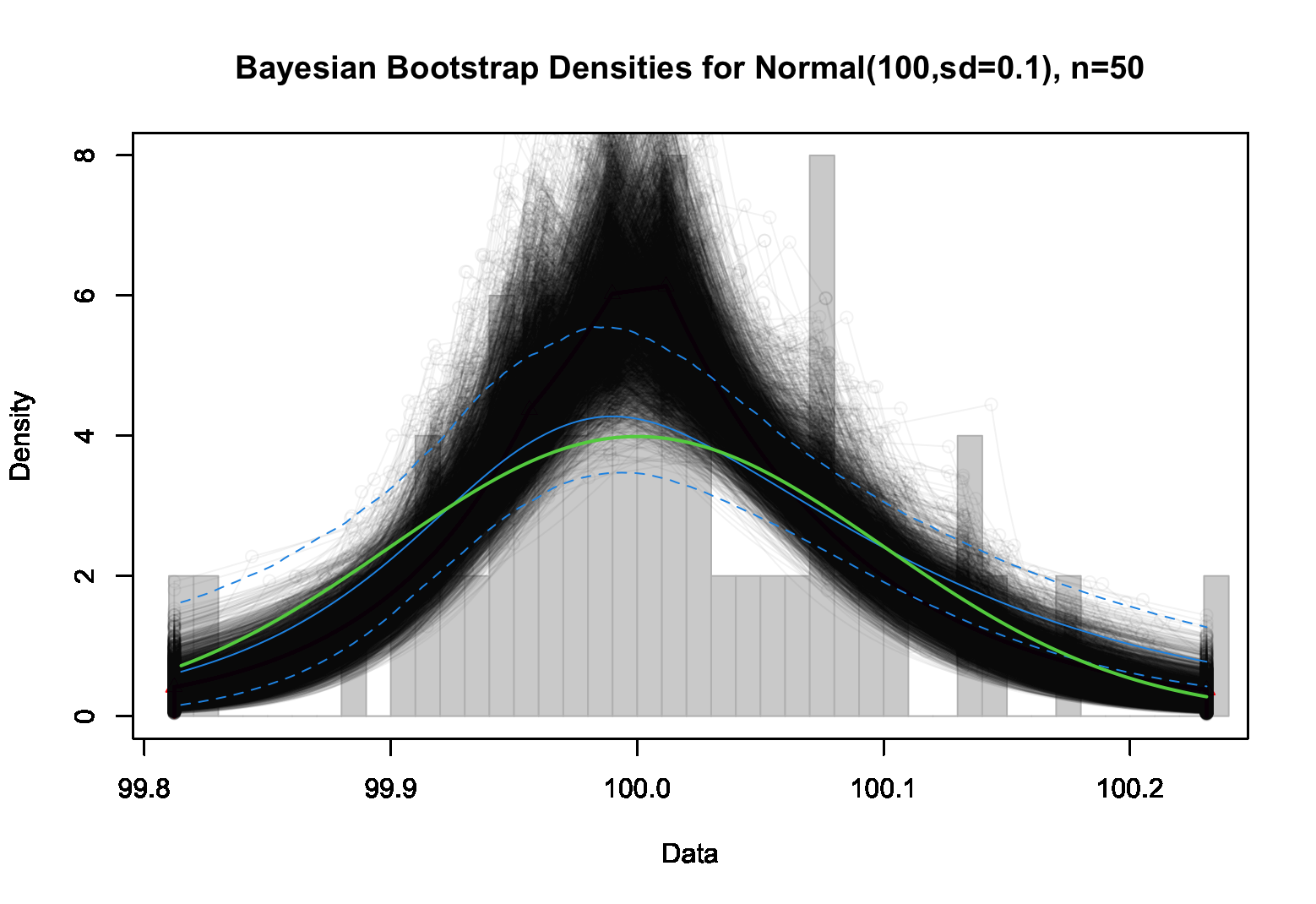}}
\subfigure[]{
\label{pl:Comp_G_S10000_n50}
\includegraphics[width=0.3\textwidth]{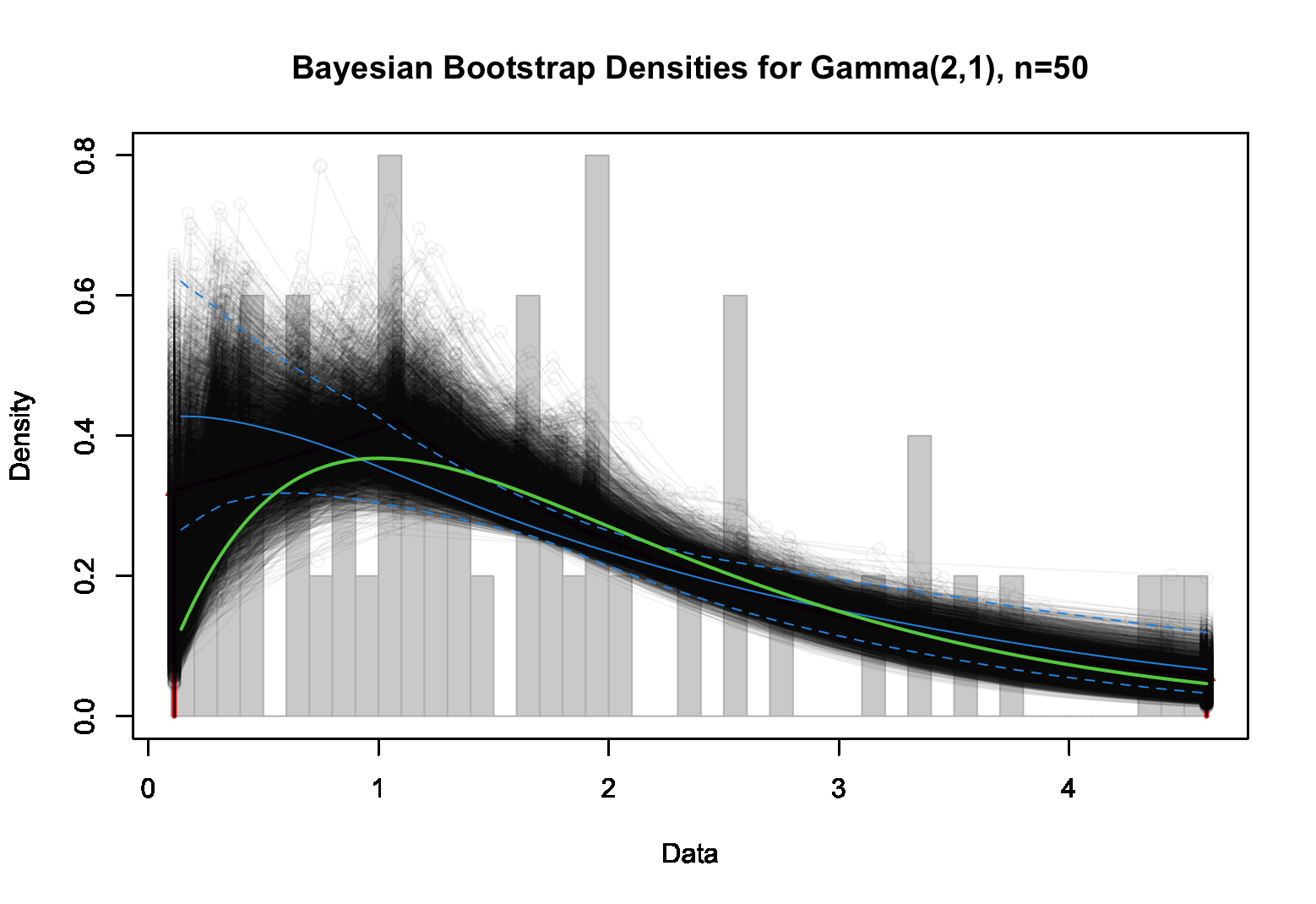}}
\subfigure[]{
\label{pl:Comp_N0_S10000_n200}
\includegraphics[width=0.3\textwidth]{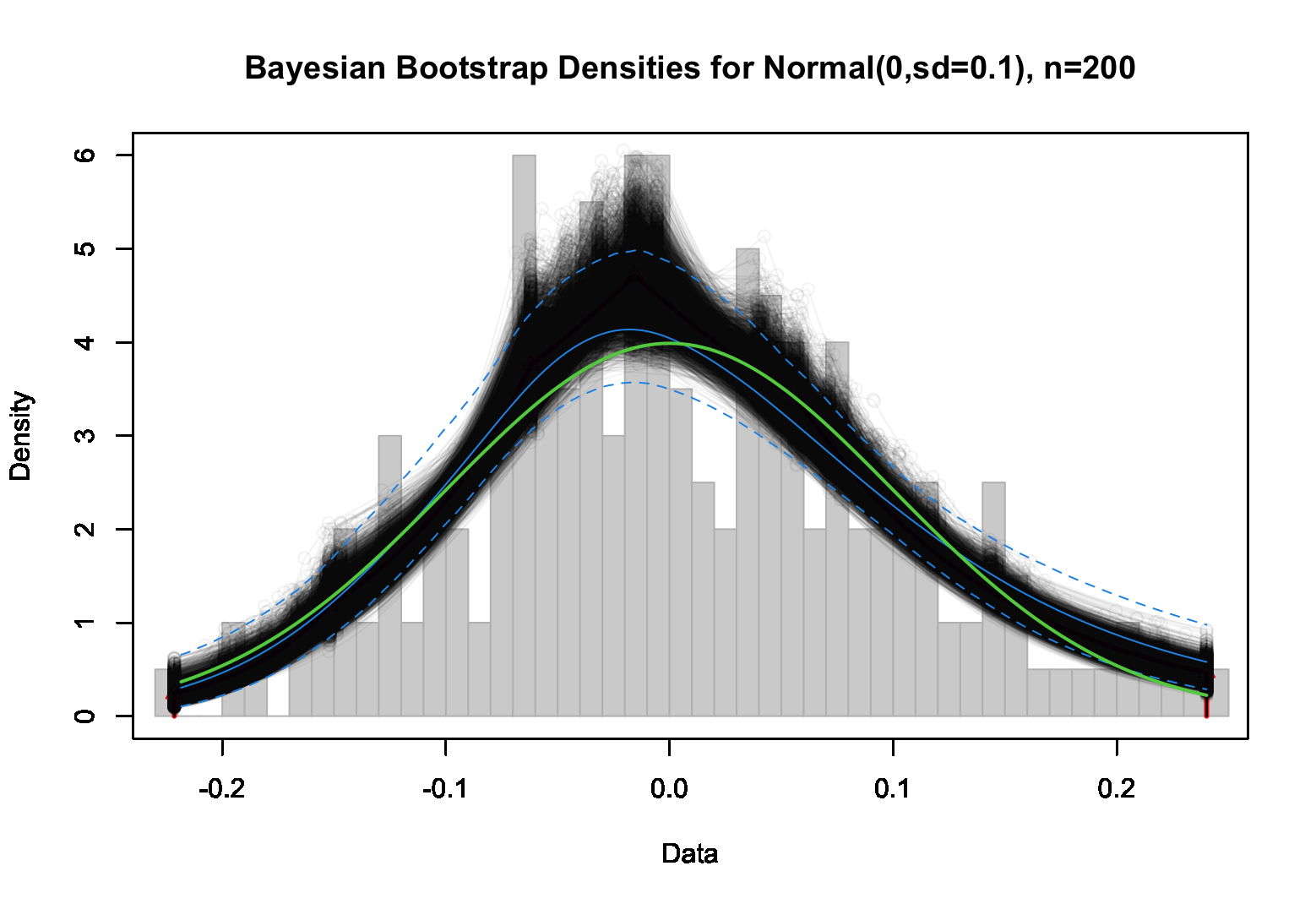}}
\subfigure[]{
\label{pl:Comp_N100_S10000_n200}
\includegraphics[width=0.3\textwidth]{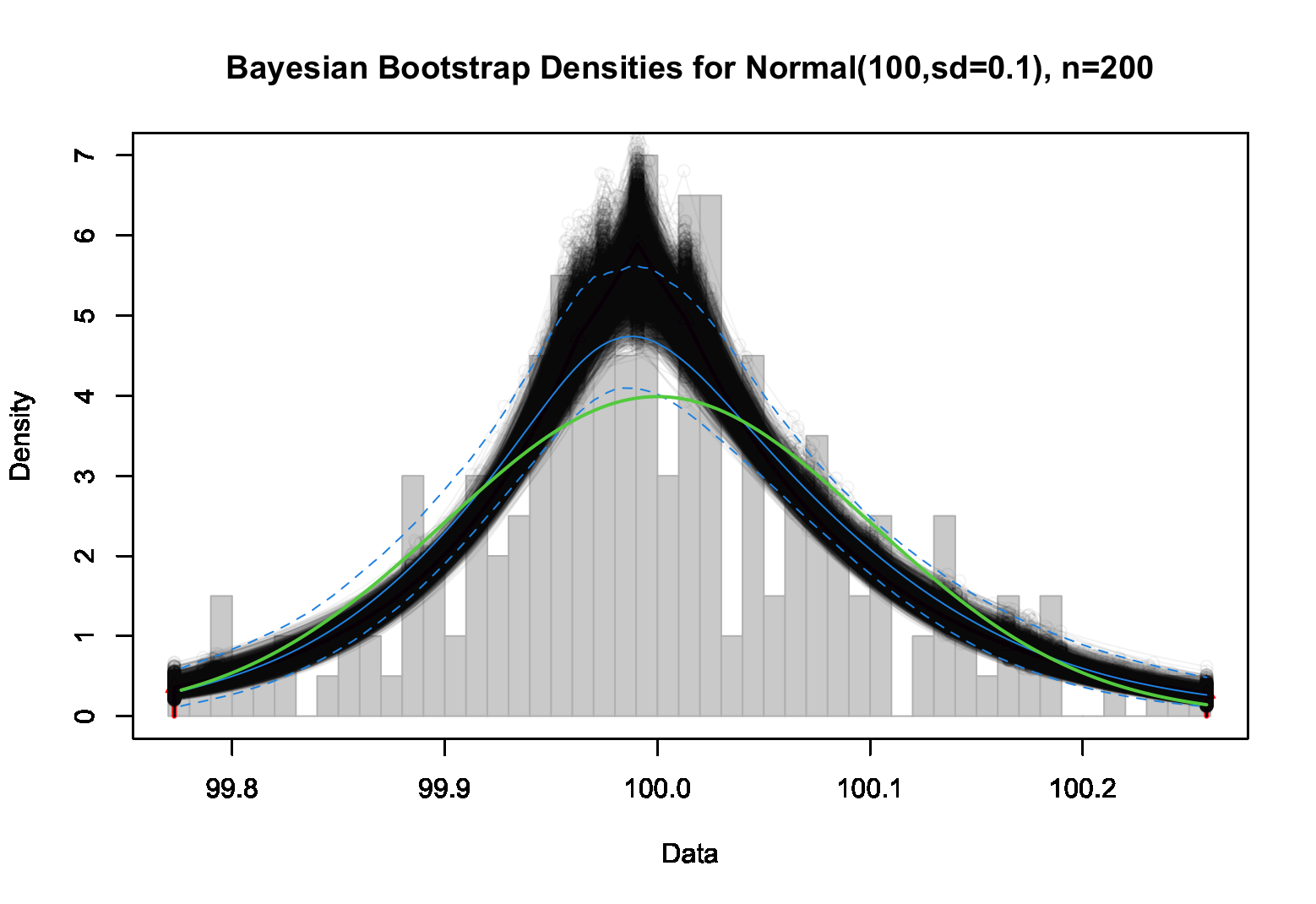}}
\subfigure[]{
\label{pl:Comp_G_S10000_n200}
\includegraphics[width=0.3\textwidth]{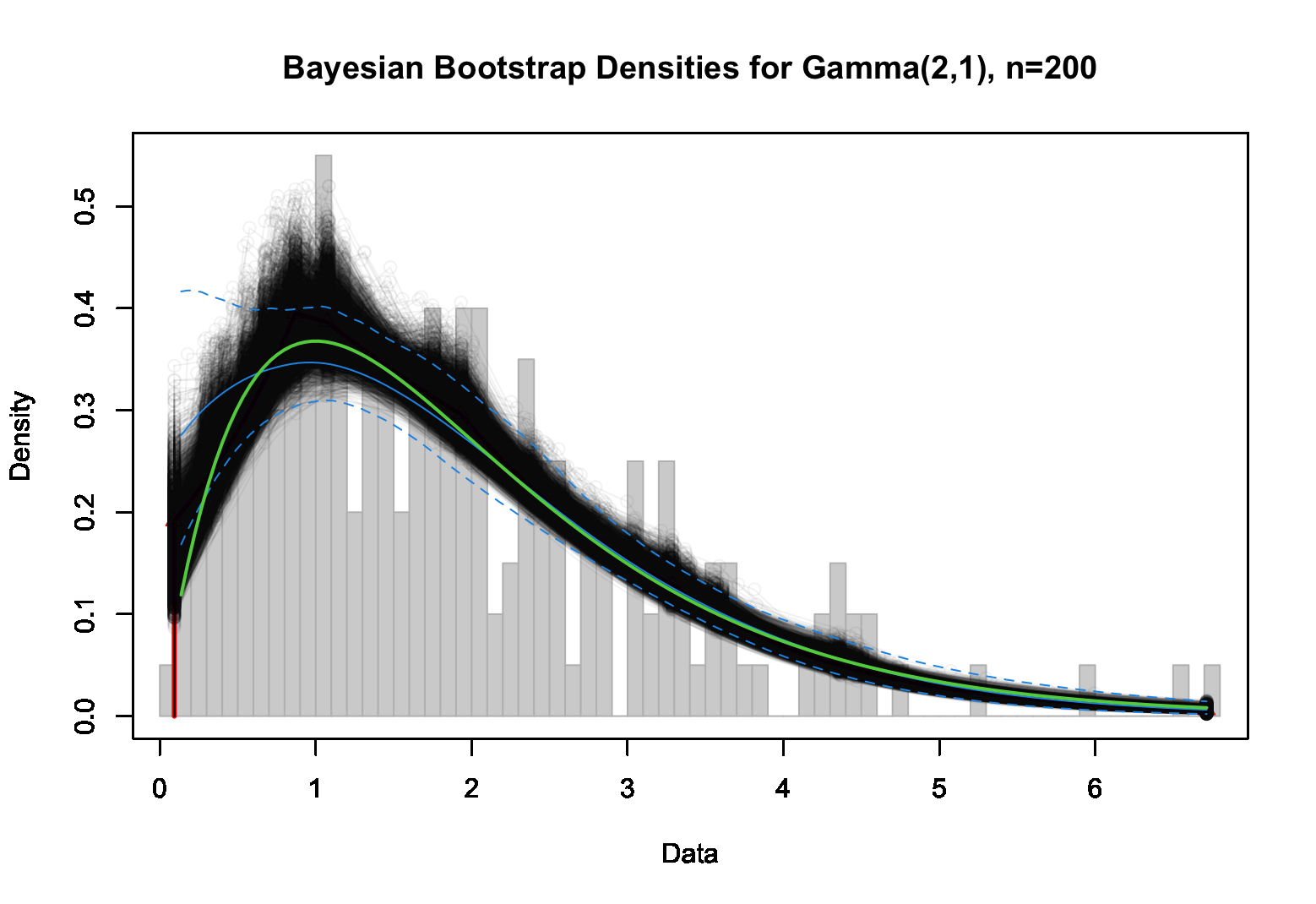}}
\subfigure[]{
\label{pl:Comp_N0_S10000_n500}
\includegraphics[width=0.3\textwidth]{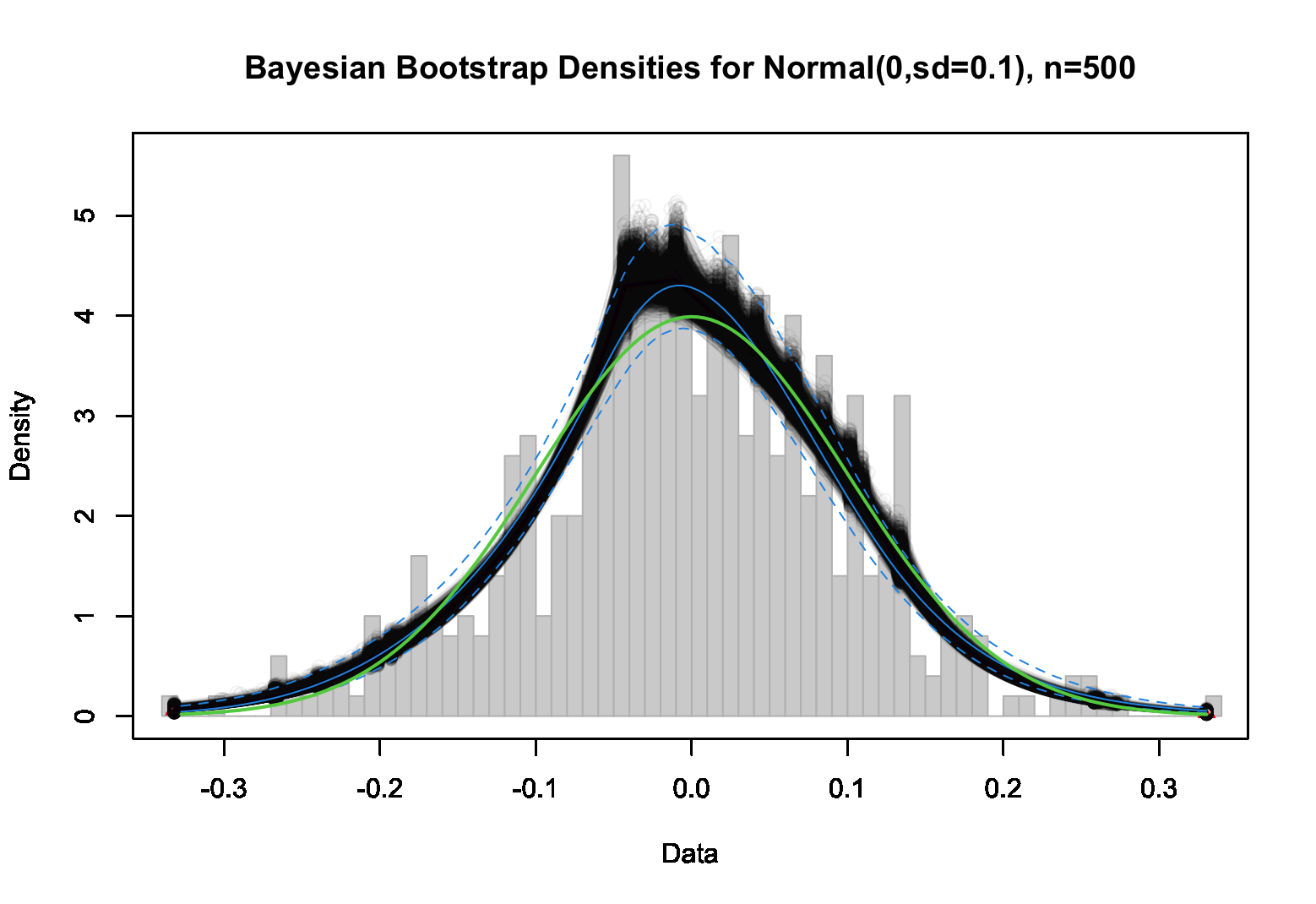}}
\subfigure[]{
\label{pl:Comp_N100_S10000_n500}
\includegraphics[width=0.3\textwidth]{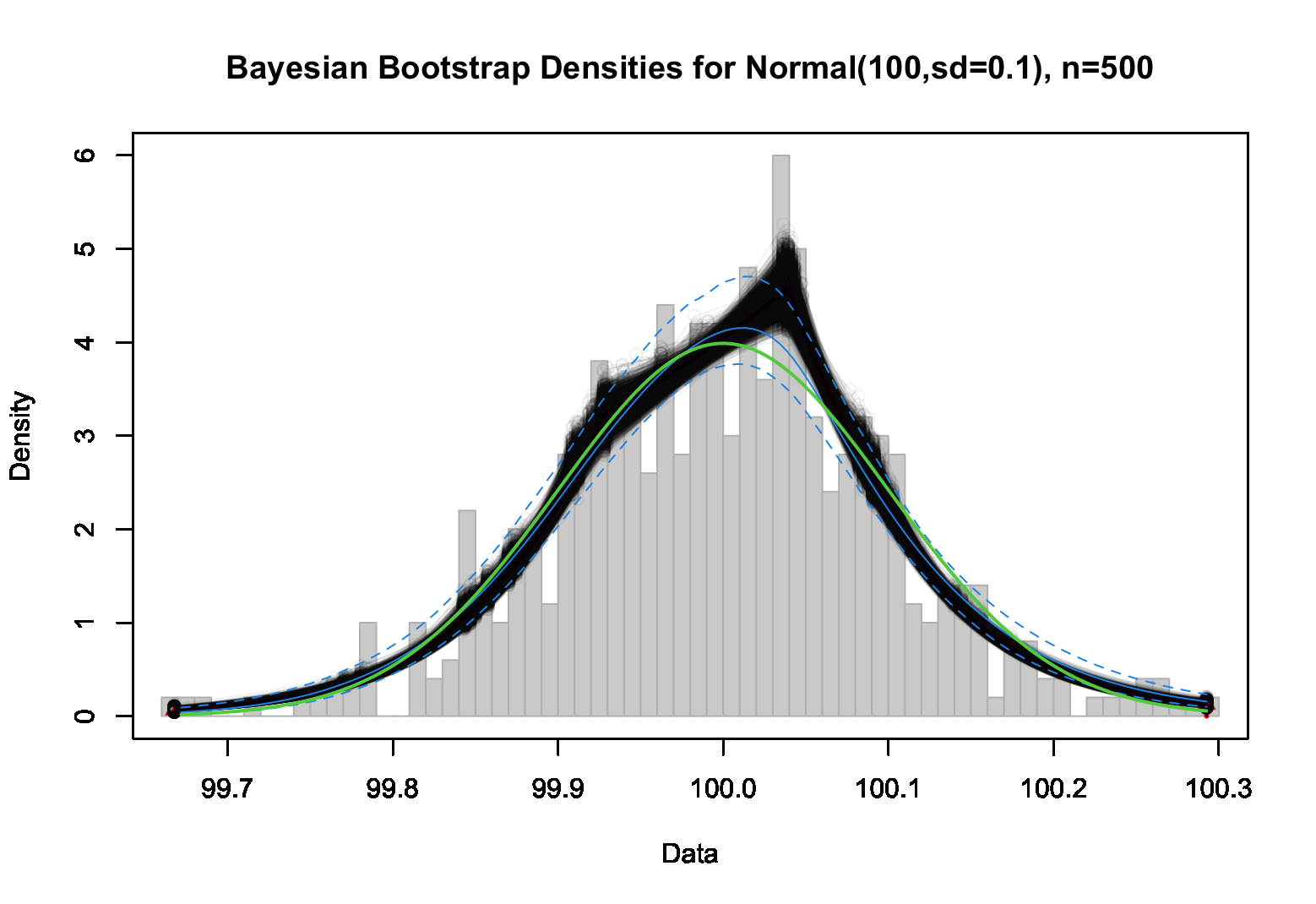}}
\subfigure[]{
\label{pl:Comp_G_S10000_n500}
\includegraphics[width=0.3\textwidth]{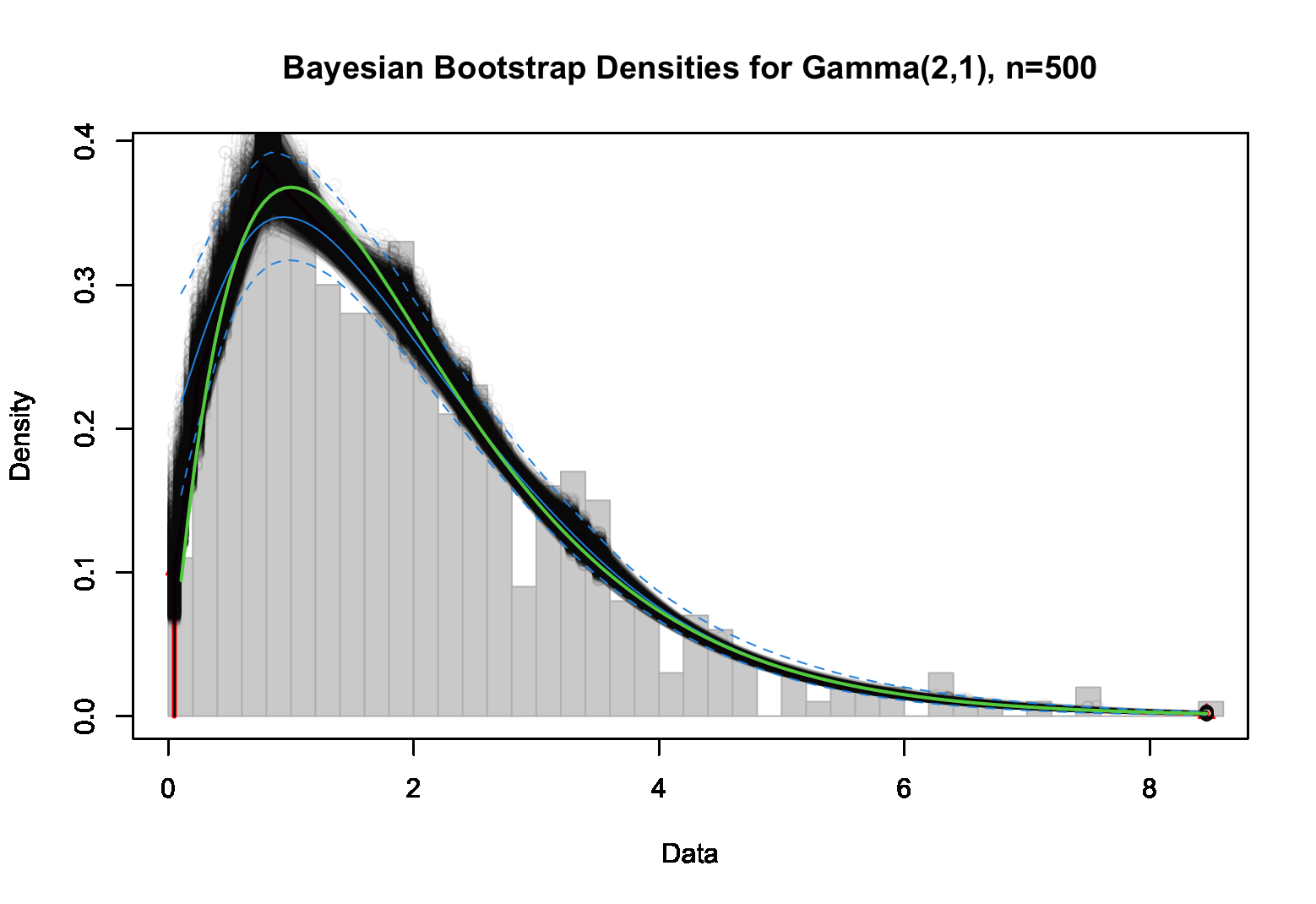}}
\subfigure[]{
\label{pl:Comp_N0_S10000_n2500}
\includegraphics[width=0.3\textwidth]{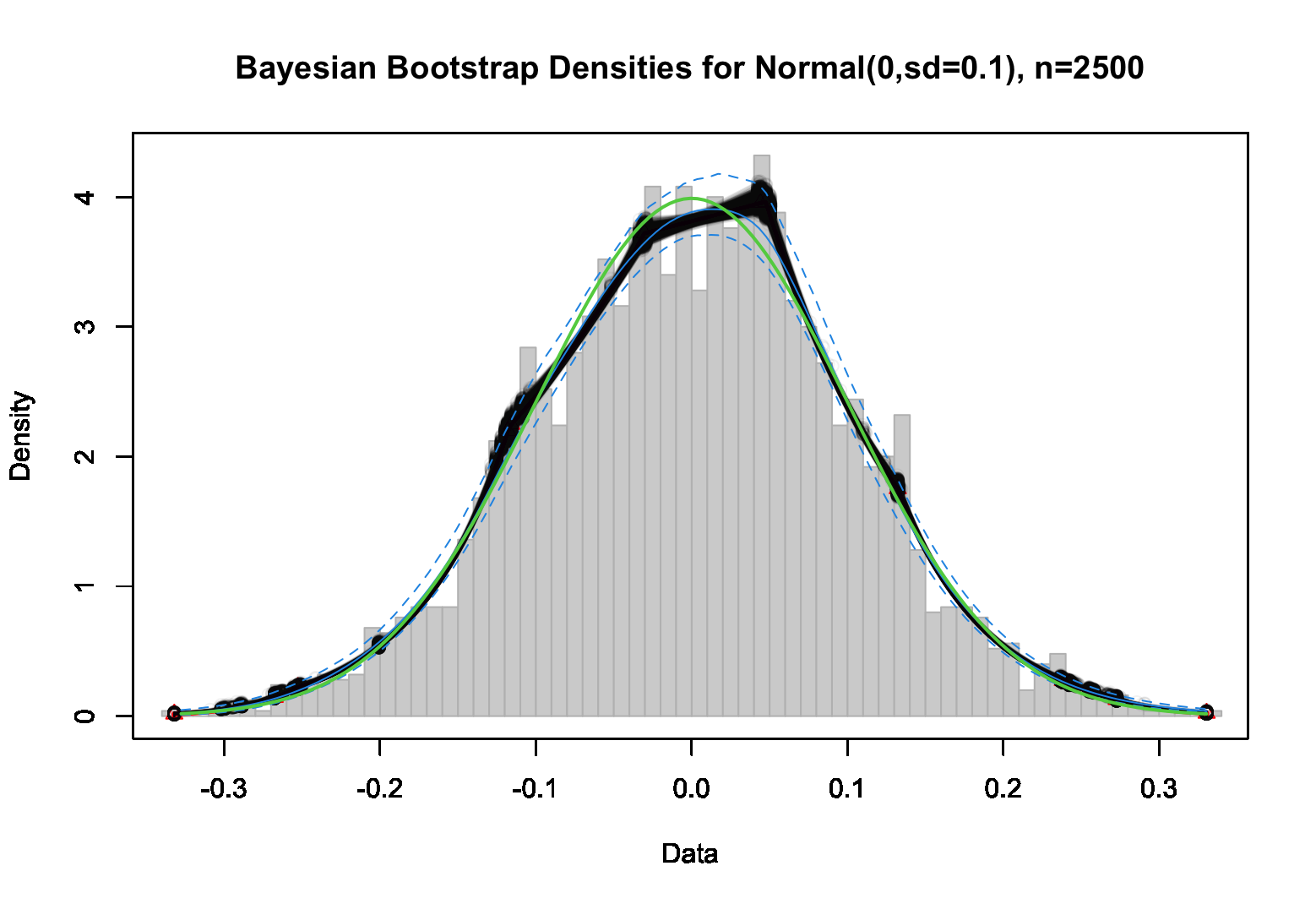}}
\subfigure[]{
\label{pl:Comp_N100_S10000_n2500}
\includegraphics[width=0.3\textwidth]{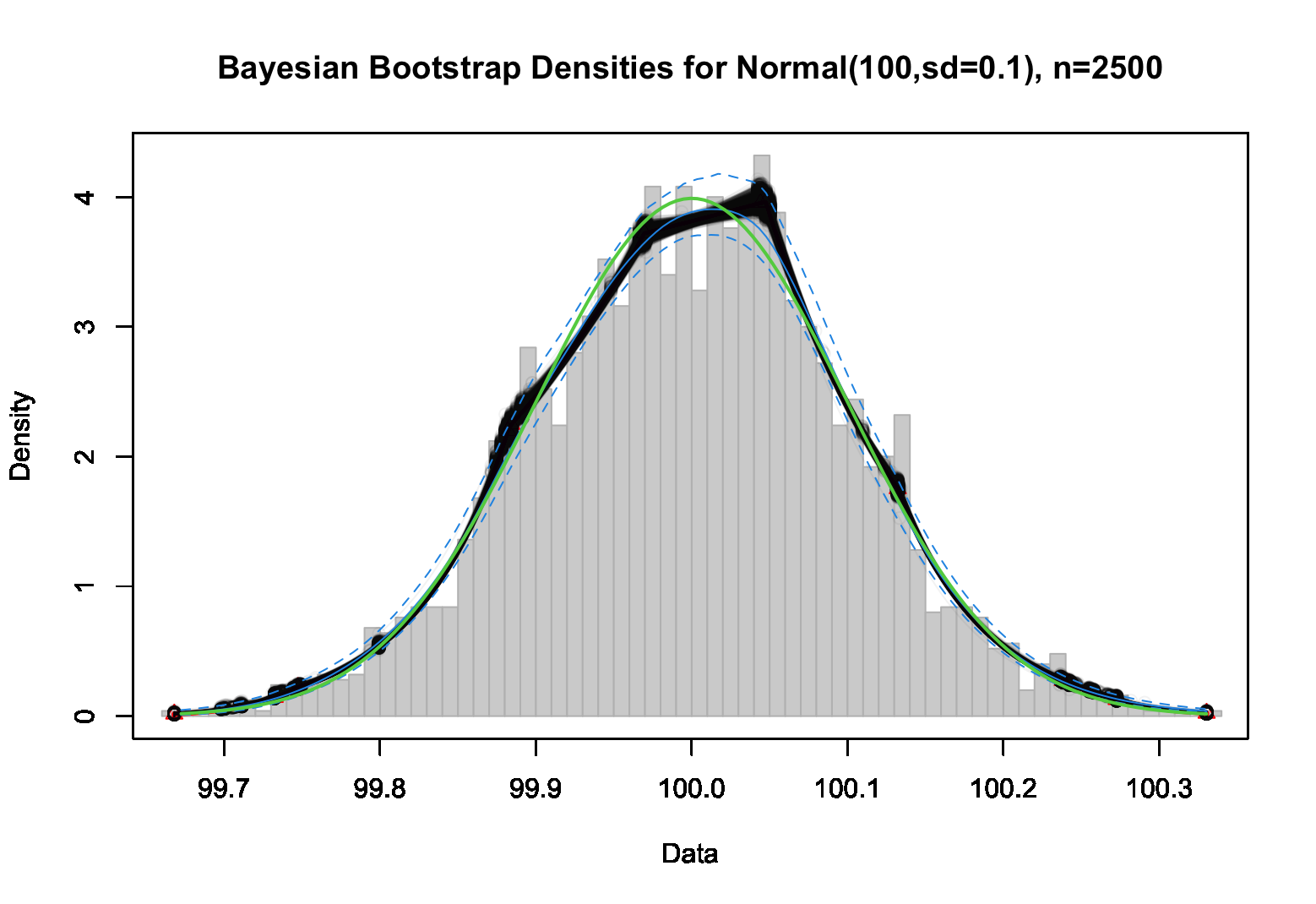}}
\subfigure[]{
\label{pl:Comp_G_S10000_n2500}
\includegraphics[width=0.3\textwidth]{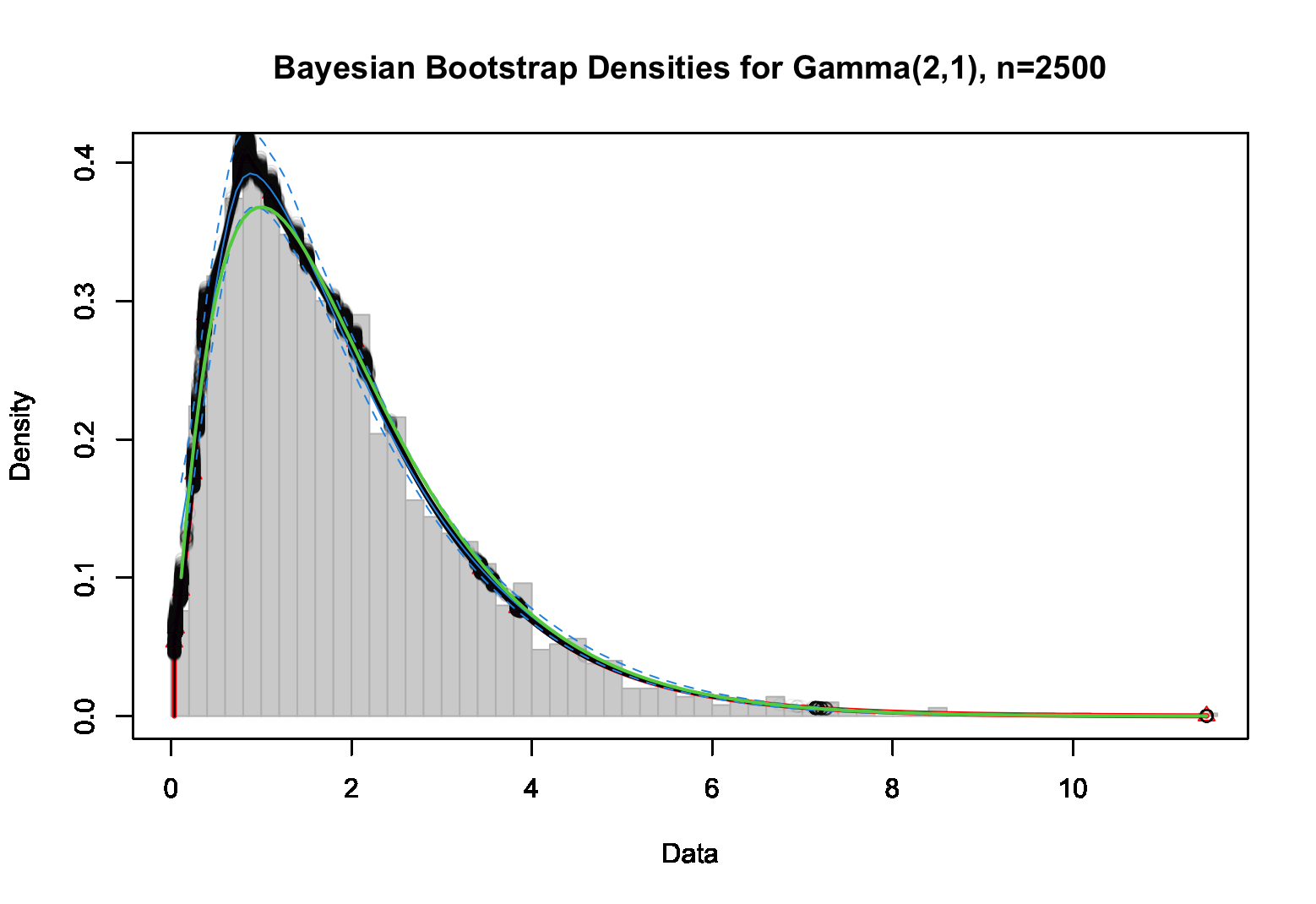}}
\caption{Histograms of sampled datasets, true densities (green), NPMLEs (red), NPMLE bootstrap samples (black), posterior means and pointwise credible bands using exponentiated Dirichlet process mixture priors (solid and dashed blue). Each plot shows 5000 NPMLE bootstrap samples and 5000 posteriors samples by MCMC. Each row represents the case with different initial sample sizes, which equals to 50, 200, 500, 2500 separately. Each column represents the case with different sampling distributions, which equals to $\mathcal{N}(0,0.1^2)$, $\mathcal{N}(100,0.1^2)$, and Gamma(2,1) respectively.}
\label{pl:Comp_S10000}
\end{figure}

\subsection{Comparison with Exponentiated Dirichlet Process Mixture Prior}
\label{subsec:Comparison}
Here we compare our approach with that of \cite{Mariucci_2020}. These authors use a Bayesian nonparametric approach to estimate the log concave density functions. Further, they also provide uncertainty quantification of the estimate, using an exponentiated Dirichlet process mixture prior. As mentioned in Section~\ref{sec:Introduction}, \cite{Mariucci_2020} uses
$$
w(x) = \gamma_1\int_{0}^{\infty}\frac{u\wedge x}{u}dP(u) - \gamma_2x
$$
where $\gamma_1>0, \gamma_2\in\mathbb{R}$ and $P$ is a probability measure on $[0,\infty)$ as a approximated representative of concave functions, which can be arbitrarily closed to any concave functions in Hellinger distance. Then \cite{Mariucci_2020} uses a similar form for the prior of the log-density $w: [a_n,b_n]\to \mathbb{R}$:
$$
W(x) = \gamma_1\int_{0}^{b_n-a_n}\frac{u\wedge (x-a_n)}{u}dP(u) - \gamma_2(x-a_n).
$$
In particular, they use the Dirichlet process prior for $P$. It is now possible to sample from the posterior via Markov Chain Monte Carlo (MCMC).

The  comparisons are made in Figs~\ref{pl:Comp_S500}, \ref{pl:Comp_S1000} and \ref{pl:Comp_S10000}. The first point to make is that \cite{Mariucci_2020} requires a prior distribution while our approach only requires the NPMLE, just as with the original BB. 
%If not appropriately selected, when the original data set is small, the improper prior will influence the posterior deeply. 
Further, \cite{Mariucci_2020} require MCMC to obtain posterior samples, which can only be implemented sequentially, and not in parallel. %In some circumstances, it will be very time-consuming, while our method can work in parallell. However, due to the sub-martingale properties, the mean of the nonparametric posterior using NPMLE bootstrap is not unbiased. We can see more from the experiment results.

When the original sample size becomes large, \cite{Mariucci_2020}'s method becomes much slower.  In our experiments, we use 11 cores to do parallel computing for each of the martingales. Hence, it is no surprise that our method is in general faster running than that of \cite{Mariucci_2020}.

%\newpage

\section{Summary and discussion}
\label{sec:Summary_and_Discussion}
In this paper we have presented a martingale posterior distribution for log-concave density functions, using the same idea of sampling and updating as is done with the Bayesian bootstrap. Starting  with the log-concave NPMLE, we construct a version of the martingale posterior distribution, which can be directly compared with the Bayesian bootstrap, which starts off with the empirical distribution function. 

The sampling algorithm can be implemented in parallel, and so it is fast to run. We have also presented the corresponding theory demonstrating that the posterior is well defined. Future work will consider alternative types of shape constrained density or distribution functions. 

%\newpage

%%%%%%%%%%%%%%%%%%%%%%%%%%%%%%%%%%%%%%%%%%%%%%%%%%%%%%%%%%%%%%%%%%%%%%%%%%
\bibliographystyle{abbrv}
\bibliography{Fuheng}
\end{document}